\documentclass[journal]{IEEEtran}
\IEEEoverridecommandlockouts

\usepackage{amsmath,amsfonts}
\usepackage{algorithmic}
\usepackage{array}
\usepackage[caption=false,font=normalsize,labelfont=sf,textfont=sf]{subfig}
\usepackage{textcomp}
\usepackage{url}
\usepackage{verbatim}
\usepackage{graphicx}
\def\BibTeX{{\rm B\kern-.05em{\sc i\kern-.025em b}\kern-.08em
    T\kern-.1667em\lower.7ex\hbox{E}\kern-.125emX}}
\usepackage{balance}

\usepackage{cite}

\usepackage{orcidlink}

\usepackage{glossaries}
\glsdisablehyper    

\usepackage{amsmath, amssymb, bm} 
\usepackage{amsthm}
\usepackage{dsfont}
\usepackage{dashbox}
\usepackage{cancel}

\usepackage{tikz}
\usetikzlibrary{calc}
\usetikzlibrary{angles, quotes}
\usetikzlibrary{decorations.pathreplacing,decorations.markings,shapes.geometric}

\usepackage{hyperref}
\usepackage{adjustbox}

\usepackage{siunitx}

\usepackage{makecell}
\usepackage{tabularx}

\usepackage{soul}

\definecolor{myBlack}{HTML}{565656}
\definecolor{myGreen}{HTML}{5bb900}
\definecolor{myGreen2}{HTML}{005e00}
\definecolor{myBlue}{HTML}{005ae9}
\definecolor{myBlue2}{HTML}{009be8}
\definecolor{myRed}{HTML}{d7001d}
\definecolor{myOrange}{HTML}{ffa409}
\definecolor{myPurple}{HTML}{6000e2}
\definecolor{myPink}{HTML}{F03080}

\definecolor{myGreenArrow1}{HTML}{70b500}
\definecolor{myGreenArrow2}{HTML}{0d8800}

\definecolor{Pcolor}{HTML}{828282}
\definecolor{PDcolor}{HTML}{2d2d2d}
\definecolor{MMLacolor}{HTML}{005acb}
\definecolor{MMLfastcolor}{HTML}{0086ff}
\definecolor{JMLccolor}{HTML}{084b00}
\definecolor{JMLacolor}{HTML}{70b500}
\definecolor{JMLfastcolor}{HTML}{90ec48}
\definecolor{DDcentrcolor}{HTML}{ffa409}
\definecolor{DDdistrcolor}{HTML}{d7001d}
\definecolor{SDDcentrcolor}{HTML}{d454ff}
\definecolor{SDDdistrcolor}{HTML}{6000e2}

\definecolor{UCLouvainDarkBlue}{HTML}{002C62}
\definecolor{UCLouvainMediumBlue}{HTML}{4BB1E4}
\definecolor{UCLouvainLightBlue}{HTML}{8CB0CD}
\definecolor{EPLMediumBlue}{HTML}{00529C}
\definecolor{EPLLightBlue}{HTML}{4F91CD}

\theoremstyle{plain}

\newtheorem{proposition}{Proposition}[section]

\newtheorem{remark}{Remark}[section]

\newcommand{\C}{\mathds{C}}
\newcommand{\R}{\mathds{R}}
\newcommand{\E}{\mathds{E}}

\newcommand{\I}[1]{\boldsymbol{I}_{#1}}
\newcommand{\Zero}[1]{\boldsymbol{0}_{#1}}

\newcommand{\Likelihood}{\mathcal{L}}
\newcommand{\CN}{\mathcal{CN}}
\newcommand{\Compl}{\mathcal{O}}
\newcommand{\U}{\mathcal{U}}
\newcommand{\Prob}{\mathds{P}}
\newcommand{\cdf}{\mathrm{CDF}}

\newcommand{\frob}[1]{\left\lVert #1 \right\rVert_{\mathrm{F}}}
\newcommand{\norm}[1]{\left\lVert #1 \right\rVert_{\mathrm{2}}}
\newcommand{\abs}[1]{\left\lvert #1 \right\rvert}
\DeclareMathOperator*{\argmin}{argmin}
\DeclareMathOperator*{\argmax}{argmax}
\newcommand{\Exp}[1]{\E \left\{ #1 \right\}}

\DeclareMathOperator*{\vecc}{Vec}
\DeclareMathOperator*{\Herm}{H}
\DeclareMathOperator*{\Trans}{T}

\newcommand{\jc}{\mathrm{j}}    

\newcommand{\UEpos}{\boldsymbol{x}_{\mathrm{s}}}
\newcommand{\RXpos}[1]{\boldsymbol{x}_{#1}}
\newcommand{\SRXaperture}{\theta_{\mathrm{srx}}}
\newcommand{\SRXradius}{R_{\mathrm{srx}}}
\newcommand{\Sradius}{R_{\mathrm{scene}}}

\newcommand{\numRX}{N}      
\newcommand{\numF}{Q}       
\newcommand{\numP}{P}       
\newcommand{\numD}{D}       
\newcommand{\numPD}{L}      
\newcommand{\numCP}{L_{\mathrm{cp}}}      
\newcommand{\RXindex}{n}
\newcommand{\Findex}{q}     
\newcommand{\Pindex}{p}   
\newcommand{\Dindex}{d} 
\newcommand{\PDindex}{l} 
\newcommand{\BW}{B}         
\newcommand{\T}{T}          

\newcommand{\Fspacing}{\Delta_{\mathrm{f}}}       
\newcommand{\Tspacing}{\Delta_{\mathrm{t}}}       

\newcommand{\carrierF}{f_{\mathrm{s}}}              
\newcommand{\carrierWl}{\lambda_{\mathrm{s}}}       
\newcommand{\carrierWn}{k_{\mathrm{s}}}             

\newcommand{\constSet}{\mathcal{C}}                 
\newcommand{\constMap}{\nu}                         
\newcommand{\constSize}{M}                          
\newcommand{\constVar}{\sigma_{\mathrm{s}}^2}       

\newcommand{\pilotMat}{\boldsymbol{S}_{\mathrm{P}}}                 
\newcommand{\dataMat}{\boldsymbol{S}_{\mathrm{D}}}                  
\newcommand{\symbolMat}{\boldsymbol{S}}                             
\newcommand{\dataMatTest}{\widetilde{\boldsymbol{S}}_{\mathrm{D}}}  
\newcommand{\dataVecTest}{\widetilde{\boldsymbol{s}}_{\mathrm{D}}}  
\newcommand{\symbolMatTest}{\widetilde{\boldsymbol{S}}}             
\newcommand{\dataMatEst}{\widehat{\boldsymbol{S}}_{\mathrm{D}}}     
\newcommand{\dataVecEst}{\widehat{\boldsymbol{s}}_{\mathrm{D}}}     
\newcommand{\dataTensEst}{\widehat{\boldsymbol{\mathcal{S}}}_{\mathrm{D}}}

\newcommand{\pilotObs}{\boldsymbol{\mathcal{Y}}_{\mathrm{P}}}       
\newcommand{\dataObs}{\boldsymbol{\mathcal{Y}}_{\mathrm{D}}}        
\newcommand{\Obs}{\boldsymbol{\mathcal{Y}}}                         
\newcommand{\pilotEnergy}{\mathcal{P}_{\mathrm{P}}}                 
\newcommand{\rgEnergy}{\mathcal{P}}                                 
\newcommand{\pilotObsEq}{\boldsymbol{Y}_{\mathrm{P}}}               
\newcommand{\ObsEq}{\boldsymbol{Y}}                                 
\newcommand{\ObsEqDD}{\boldsymbol{Y}_{\mathrm{DD}}}                 

\newcommand{\channelMat}{\boldsymbol{H}}            
\newcommand{\channel}{\boldsymbol{\beta}}           
\newcommand{\channelcoeff}{\boldsymbol{\gamma}}     
\newcommand{\stMat}{\boldsymbol{A}}                 
\newcommand{\channelVar}{\sigma_{\gamma}^{2}}       
    
\newcommand{\channelcoeffEst}{\widehat{\boldsymbol{\gamma}}}
\newcommand{\channelcoeffEstPilots}{\widehat{\boldsymbol{\gamma}}_{\mathrm{p}}}
\newcommand{\channelcoeffEstPilotsUn}[1]{\breve{\gamma}_{#1}}
\newcommand{\channelcoeffEstPilotsU}{\breve{\boldsymbol{\gamma}}}
\newcommand{\channelEnergy}{\Gamma}

\newcommand{\channelcoeffTest}{\widetilde{\boldsymbol{\gamma}}}
\newcommand{\channelConstruct}{\widehat{\boldsymbol{H}}_{\mathrm{p}}}
\newcommand{\channelConstructEnergy}{\Gamma_{\mathrm{p}}}

\newcommand{\channelPhase}{\phi}

\newcommand{\pilotAWGN}{\boldsymbol{\mathcal{N}}_{\mathrm{P}}}       
\newcommand{\dataAWGN}{\boldsymbol{\mathcal{N}}_{\mathrm{D}}}        
\newcommand{\noiseVar}{\sigma_{\mathrm{n}}^{2}}

\newcommand{\snr}{\mathrm{SNR}}

\newcommand{\OF}{\mathcal{J}}
\newcommand{\dataMatEstcc}[1]{{\boldsymbol{W}}_{#1}}                 
\newcommand{\dataVecEstcc}[1]{{\boldsymbol{w}}_{#1}}
\newcommand{\dataMatEstccConcat}{{\boldsymbol{W}}}
\newcommand{\dataMatEstccCov}{{\boldsymbol{R}}}
\newcommand{\dataVecEstccP}{\boldsymbol{z}}
\newcommand{\dataAugVecTest}{\widetilde{\boldsymbol{t}}}
\newcommand{\OFmax}{\lambda_{\mathrm{max}}}
\newcommand{\OFnum}{U}
\newcommand{\OFdenom}{V}
\newcommand{\ARQnum}{\boldsymbol{U}}
\newcommand{\ARQdenom}{\boldsymbol{V}}
\newcommand{\MatChangeToEVD}{\boldsymbol{M}}
\newcommand{\dataAugVecTestChanged}{\widetilde{\boldsymbol{t}}^{\star}}
\newcommand{\dataAugVecTestChangedMax}{\boldsymbol{t}_{\mathrm{max}}^{\star}}
\newcommand{\ARQnumChange}{\boldsymbol{U}^{\star}}

\newcommand{\pilotTermLL}{\Likelihood_{\mathrm{P}}}
\newcommand{\dataTermLL}{\Likelihood_{\mathrm{D}}}
\newcommand{\ToA}{\tau}
\newcommand{\TDoA}{\Delta}

\newcommand{\dataObsMat}[1]{\boldsymbol{Y}_{#1}}
\newcommand{\channelConstructVec}[1]{\boldsymbol{h}_{#1}}
\newcommand{\dataObsMatCov}[1]{\boldsymbol{C}_{#1}}
\newcommand{\numA}{A}                                           

\newcommand{\SVDMatLeft}[1]{\boldsymbol{P}_{#1}}
\newcommand{\SVDMatRight}[1]{\boldsymbol{Q}_{#1}}
\newcommand{\SVDMatVals}[1]{\boldsymbol{\Sigma}_{#1}}
\newcommand{\SVindex}{s}
\newcommand{\numSV}{S}
\newcommand{\SVDValVec}[1]{\boldsymbol{\sigma}_{#1}}

\newcommand{\UEposTest}{\widetilde{\boldsymbol{x}}_{\mathrm{s}}}
\newcommand{\UEposEst}{\widehat{\boldsymbol{x}}_{\mathrm{s}}}

\newcommand{\rmse}{\mathrm{RMSE}}

\newcommand{\rangeRes}{R_{\mathrm{res}}}

\newcommand{\JMLo}{\mathrm{JML}}
\newcommand{\JMLc}{\mathrm{JML_{c}}}
\newcommand{\JMLa}{\mathrm{JML_{a}}}
\newcommand{\JMLfast}{\mathrm{JML_{fast}}}
\newcommand{\MMLo}{\mathrm{MML}}
\newcommand{\MMLa}{\mathrm{MML_{a}}}
\newcommand{\MMLfast}{\mathrm{MML_{fast}}}
\newcommand{\Pmethod}{\textsc{Pilot}}
\newcommand{\Dmethod}{\textsc{Data}}
\newcommand{\PDmethod}{\textsc{Genie}}

\newcommand{\HDDcentrmethod}{\mathrm{HDD_{centr}}}
\newcommand{\HDDdistrmethod}{\mathrm{HDD_{distr}}}

\newcommand{\Ngrid}{N_{\mathrm{g}}}
\newcommand{\Nmc}{N_{\mathrm{mc}}}

\newcommand{\MR}[1]{\textcolor{myGreen}{\textbf{[MR]:} #1}}

\newcommand{\LV}[1]{\textcolor{myRed}{\textbf{[LV]:} #1}}
\newcommand{\JL}[1]{\textcolor{myBlue}{\textbf{[JL]:} #1}}
\newacronym{isac}{ISAC}{Integrating Sensing And Communications}
\newacronym{ff}{FF}{Far-Field}
\newacronym{nf}{NF}{Near-Field}
\newacronym{5g-nr}{5G-NR}{Fifth Generation New Radio}
\newacronym{prs}{PRS}{Positioning Reference Signal}
\newacronym{6g}{6G}{Sixth Generation}
\newacronym{pcs}{PCS}{Probabilistic Constellation Shaping}
\newacronym{aoa}{AoA}{Angle of Arrival}
\newacronym{ula}{ULA}{Uniform Linear Array}
\newacronym{toa}{ToA}{Time of Arrival}
\newacronym{tdoa}{TDoA}{Time Difference of Arrival}
\newacronym{rg}{RG}{Resource Grid}
\newacronym{tfb}{TFB}{Time-Frequency Block}
\newacronym{siso}{SISO}{single-input single-output}
\newacronym{mimo}{MIMO}{multiple-input multiple-output}

\newacronym{ofdm}{OFDM}{Orthogonal Frequency-Division Multiplexing}
\newacronym{qam}{QAM}{Quandrature Amplitude Modulation}
\newacronym{bpsk}{BPSK}{Binary Phase-Shift Keying}
\newacronym{psk}{PSK}{Phase-Shift Keying}
\newacronym{gfsk}{GFSK}{Gaussian Frequency Shift Keying}
\newacronym{acfk}{ACFK}{auto-correlation function keying}

\newacronym{ue}{UE}{User Equipment}
\newacronym{tx}{TX}{transmitter}
\newacronym{rx}{RX}{receiver}
\newacronym{srx}{SRX}{sensing receiver}
\newacronym{das}{DAS}{Distributed Antenna System}
\newacronym{cpu}{CPU}{Central Processing Unit}
\newacronym{uca}{UCA}{Uniform Circular Array}

\newacronym{los}{LoS}{Line of Sight}

\newacronym{awgn}{AWGN}{Additive White Gaussian Noise}
\newacronym{snr}{SNR}{Signal-to-Noise Ratio}

\newacronym{mml}{MML}{Marginal Maximum Likelihood}
\newacronym{iml}{IML}{Integrated Maximum Likelihood}
\newacronym{ml}{ML}{Maximum Likelihood}
\newacronym{jml}{JML}{Joint Maximum Likelihood}
\newacronym{map}{MAP}{Maximum A Posteriori}
\newacronym{mmap}{MMAP}{Marginal Maximum A Posteriori}
\newacronym{np}{NP}{Nuisance Parameter}
\newacronym[longplural=Parameters of Interest]{poi}{PoI}{Parameter of Interest}
\newacronym{iid}{i.i.d.}{independent and identically distributed}
\newacronym{pdf}{p.d.f.}{probability distribution function}
\newacronym{dd}{DD}{Decision-Directed}
\newacronym{zf}{ZF}{Zero Forcing}
\newacronym{lmmse}{LMMSE}{Linear Minimum Mean Square Error}
\newacronym{rmse}{RMSE}{Root Mean Square Error}
\newacronym{mc}{MC}{Monte Carlo}
\newacronym{ser}{SER}{Symbol Error Rate}
\newacronym{af}{AF}{Ambiguity Function}
\newacronym{elmmse}{ELMMSE}{Ergodic Linear Minimum Mean Square Error}
\newacronym{nda}{NDA}{Non-Data-Aided}
\newacronym{da}{DA}{Data-Aided}
\newacronym{zzb}{ZZB}{Ziv-Zakai Bound}
\newacronym{ls}{LS}{Least Squares}
\newacronym{hdd}{HDD}{Hard Decision-Directed}
\newacronym{sdd}{SDD}{Soft Decision-Directed}
\newacronym{dop}{DoP}{Dilution of Precision}
\newacronym{crb}{CRB}{Cramér-Rao Bound}

\newacronym{evd}{EVD}{Eigenvalue Decomposition}
\newacronym{svd}{SVD}{Singular Value Decomposition}

\newacronym{cdf}{CDF}{Cumulative Distribution Function}

\begin{document}

\title{A Computationally Efficient Joint Maximum Likelihood Estimator \\ for Passive Localization in OFDM Distributed Antenna Systems \\ with Pilots and Unknown Data Payloads
    \thanks{Mathieu Reniers is a Research Fellow of the Fonds de la Recherche Scientifique - FNRS.}
}
\author{
    Mathieu Reniers\orcidlink{0009-0007-3811-6112},
    Martin Willame\orcidlink{0000-0002-7107-6198}, 
    Jérôme Louveaux\orcidlink{0000-0003-2557-1857}, 
    Luc Vandendorpe\orcidlink{0000-0003-4958-8848}, \\
    ICTEAM, UCLouvain - Louvain-La-Neuve, Belgium.
    \footnotesize{Emails: \{firstname.lastname\}@uclouvain.be}
}

\bstctlcite{MyBSTcontrol}


\maketitle


\begin{abstract}
Communication-centric Integrated Sensing and Communications (ISAC) is a promising paradigm for sixth-generation (6G) wireless systems, enabling new sensing services by leveraging the already-deployed communication infrastructure.
Communication signals typically comprise both known deterministic pilot sequences and unknown random data payloads.
For localization and sensing tasks, the prevailing approach in multistatic and distributed ISAC systems relies exclusively on pilot symbols, entirely overlooking the positioning information carried by data payloads, which constitute the majority of each transmitted frame.
Alternatively, Decision-Directed (DD) approaches treat data estimates as additional pilots, inherently limiting localization performance to that of the underlying communication system, while Non-Data-Aided (NDA) methods from the literature require prior knowledge of the data symbol distribution and incur a computational cost that grows with constellation size.
In this paper, we derive a Joint Maximum Likelihood (JML) estimator that jointly exploits pilot and data symbols for localization without requiring data decoding, in a passive scenario where a distributed sensing receiver localizes a User Equipment (UE) by exploiting its Orthogonal Frequency-Division Multiplexing (OFDM) communication signal as a signal of opportunity.
The optimal solution is derived and shown to be computationally intractable for typical 6G parameters.
Two tractable approximations are then proposed, achieving localization performance superior to DD baselines at comparable computational complexity, while remaining constellation-agnostic and yielding substantially lower computational requirements than existing NDA approaches.
Furthermore, the proposed estimators are shown to admit a geometric interpretation, providing insight into their intrinsic localization behavior.

\end{abstract}

\begin{IEEEkeywords}
Integrated Sensing and Communication, Data Payloads, Multistatic Sensing, Distributed Antenna Systems, OFDM Passive Radar, Source Localization, Non-Data-Aided
\end{IEEEkeywords}

\section{Introduction}\label{sec:introduction}

\IEEEPARstart{S}{ixth}-generation (6G) \glsunset{6g} wireless systems identify \gls{isac} as a key feature of future networks, enabling mobile infrastructure to sense their surrounding environment \cite{itu_recommendation_2023, mazahir_survey_2021}.
Embedding both functions within a single platform enables hardware reuse, improved spectral efficiency \cite{liu_joint_2020}, and reduced device cost, form factor, and energy consumption, while potentially benefiting the performance of each service \cite{feng_joint_2020, zhang_overview_2021}.
Among existing \gls{isac} approaches, the communication-centric paradigm—where communication remains the primary function and sensing is performed by reusing the communication waveform—appears particularly promising \cite{lu_sensing_2026}, as it leverages the already-deployed communication infrastructure.
Such signals typically comprise both pilot or reference sequences and data payloads.
Pilot sequences are deterministic and known at both the \gls{tx} and \gls{rx}, and exhibit strong correlation properties leading to robust sensing performance.
Data symbols, by contrast, convey information and are therefore random and unknown to the \gls{rx} a priori.
Most existing approaches in multistatic \gls{isac} and equivalent source localization systems rely exclusively on known pilot signals, e.g., using \glsdesc{prs} embedded in the \gls{5g-nr} frame structure \cite{wei_5g_2023, golzadeh_joint_2024, wei_multiple_2024}.
However, such reference signals typically account for only \SI{3}{\percent}--\SI{25}{\percent} of each transmitted frame in \gls{5g-nr} \cite{3gpp_5g_2025}, leaving the majority of communication resources unexploited for sensing.
Since data payloads are transmitted regardless of any sensing intent, leveraging them for localization incurs no additional spectral or energy overhead, making their exploitation particularly attractive from a resource efficiency standpoint.
In this work, we investigate the joint exploitation of known pilots and unknown data payloads for opportunistic localization\footnotemark, defined here as the usage of communication signals transmitted by a third-party device as signals of opportunity to localize either the source or passive targets.
\footnotetext{The term opportunistic localization is used interchangeably with passive sensing terminology in this work.}

\subsection{Related Work}

\subsubsection{Sensing Performance Under Random Data Signals}
To address the impact of data randomness on sensing performance, multiple studies have characterized sensing using random data signals rather than fixed pilot sequences, typically in monostatic configurations assuming perfect knowledge of the transmitted data at the \gls{srx} \cite{zhang_discrete-periodic_2026}.
Keskin et al. \cite{keskin_fundamental_2025} reveal a fundamental time-frequency tradeoff in monostatic \gls{ofdm} \gls{isac} systems: high-order \gls{qam} improves communication rates but degrades sensing due to elevated sidelobe levels from the non-constant modulus, whereas low-order \gls{psk} minimizes sidelobes at the cost of reduced spectral efficiency.
Liu et al. \cite{liu_cp-ofdm_2025, liu_cp-ofdm_2026} show that \gls{ofdm} with \gls{qam} achieves superior ranging performance among cyclic-prefix waveforms, motivating its adoption in \gls{6g} and in this work.
To further mitigate the sensing-communication tradeoff, peak sidelobe control \cite{zhao_auto-correlation_2026}, constellation shaping \cite{du_reshaping_2024, keshavarzchafjiri_gamma-distributed_2026}, and mismatched filtering \cite{yang_constellation-independent_2026-1} have also been proposed.

However, while all these studies provide valuable insights into the impact of data randomness on sensing, the underlying assumption of perfect data knowledge at the \gls{srx} is not achievable in multistatic or passive sensing configurations.

\subsubsection{Decision-Directed Approaches}\label{sec:SOTA_DD}
A natural way to exploit the full transmitted frame when data symbols are unknown is to demodulate the data payloads and treat the resulting decisions as additional pilots, following the \gls{dd} philosophy.
Wypich and Zielinski \cite{wypich_ofdm-based_2025} demonstrate that under low \gls{ser} conditions, estimated data symbols can effectively supplement pilots in \gls{ofdm}-based passive radar, while performance degrades to the pilot-only case under challenging channel conditions or high-order modulations.
In \cite{wypich_experimental_2026}, the same authors experimentally validate this approach using 5G waveforms.
Brunner et al. \cite{brunner_bistatic_2025} propose a complete bistatic \glsdesc{siso} framework incorporating synchronization and frame reconstruction based on channel-coded communication, subsequently extended to \glsdesc{mimo} in \cite{giroto_system_2026}.
Henninger et al. \cite{henninger_hybrid_2026} propose a resource allocation scheme placing lower-order constellation symbols as pseudo-pilots to mitigate the sensing-communication tradeoff, while Hu et al. \cite{hu_learning-based_2025} propose an end-to-end deep learning framework for joint constellation design and data demodulation in bistatic \gls{ofdm} \gls{isac}.
The \gls{dd} methodology has also been integrated into iterative schemes that exploit the mutual benefits of \textit{sensing-aided communication} and \textit{communication-aided sensing}, first for single-carrier bistatic systems \cite{zhao_joint_2024} and subsequently extended to \gls{ofdm} \cite{keskin_bridging_2025-1}.

While these methods can achieve strong localization performance under favorable \gls{snr} conditions, they may incur high computational complexity due to forward error correction decoding or iterative processing, the latter typically relying on heuristic designs and requiring sufficiently reliable data detection \cite{chen_joint_2026}.
Moreover, the localization performance of \gls{dd} approaches is inherently limited by that of the underlying communication system.

\subsubsection{Non-Data-Aided Approaches}
An alternative to \gls{dd} is the \gls{nda} philosophy, in which the unknown transmitted symbols are treated as random variables rather than being pre-estimated.
Specifically, the data symbols are modeled as \textit{\glspl{np}}—parameters that affect the observation distribution but are not of direct interest—as opposed to the \textit{\glspl{poi}}.

A natural way to eliminate the dependence on \glspl{np} is to \textit{marginalize} the likelihood over their prior distribution, an approach referred to as \gls{mml} estimation.
In the case of data symbols, this amounts to evaluating the conditional likelihood over all constellation points, rather than making hard symbol decisions as in \gls{dd}.
Monfared et al. \cite{monfared_iterative_2020} propose an iterative \gls{mml}-based algorithm using angle-of-arrival measurements for sensor localization with a Gaussian Frequency Shift Keying waveform.
Graff and Humphreys \cite{graff_ofdm-based_2026} derive the \gls{mml} estimator for \gls{ofdm}-based positioning and validate their framework on simulated low-earth orbit satellite scenarios.
In \cite{reniers_localization_2026}, we extend this methodology to a multi-antenna distributed array with unknown complex channel coefficients, with an analytical acceleration reducing the per-symbol complexity from $\Compl(\constSize)$ to $\Compl(\sqrt{\constSize})$ for $\constSize$-\gls{qam} constellations.

However, all these works \cite{monfared_iterative_2020, graff_ofdm-based_2026, reniers_localization_2026} require prior knowledge of the constellation—and of the coding scheme if applicable—which may not be accessible at the \gls{srx} in an opportunistic scenario, and may incur high complexity due to enumeration over the constellation in the localization step.

Another approach to handling \glspl{np} is \gls{jml} estimation, where \glspl{poi} and \glspl{np} are estimated jointly, entailing a high-dimensional joint search. 
A strategy to eliminate the \glspl{np} dependence is to \textit{concentrate} the likelihood \cite{stoica_concentrated_1995}, i.e., express closed-form \glspl{np} estimates as a function of the \glspl{poi} and substitute them back into the likelihood, yielding a substantially reduced complexity.
In \cite{reniers_joint_2026-1}, we applied this methodology to a source localization scenario restricted to an \gls{srx} consisting of a co-located \gls{ula} operating under \gls{ff} conditions.

The present work extends this framework to a distributed \gls{srx} with unknown complex channel coefficients, and provides a comprehensive comparison against the \gls{mml} approach of \cite{graff_ofdm-based_2026, reniers_localization_2026}, pilot-only and \gls{dd} baselines, as well as a practical performance upper bound, in terms of both localization accuracy and computational requirements.

\subsection{Contributions}
To address the aforementioned limitations of \gls{dd} and marginal approaches, we propose a \gls{jml} framework for an ``uplink''\footnotemark\ scenario in which the communication signal transmitted by a \gls{ue} is passively captured by an opportunistic \gls{srx} consisting of multiple distributed nodes.
While this scenario is considered for simplicity, all proposed methods can be directly adapted to classical multistatic passive sensing, where the \gls{srx} localizes a passive target based on signal reflections, as discussed in Section~\ref{sec:other_scenarios}.
\footnotetext{This terminology is used by analogy with the conventional communication uplink, and is used here interchangeably with source localization.}
The main contributions of this paper are summarized as follows:
\begin{itemize}
    \item A \gls{jml} framework is developed for uplink localization with a distributed \gls{srx}, treating both random data symbols and random complex channel gains as \glspl{np}. The optimal solution is derived in closed form as a generalized Rayleigh quotient, reducible to a classical \gls{evd} problem, and is shown to be computationally intractable for typical time-frequency resources, motivating the development of practical approximations.
    \item Based on a theoretical analysis of the optimal solution, two computationally feasible approximations are proposed, offering different performance-complexity tradeoffs. The resulting estimators are \textit{constellation-agnostic}: no knowledge of the constellation is required at the \gls{srx}, and regardless of the modulation scheme employed, both their localization performance and computational requirements remain unchanged—a key advantage over \gls{dd} and marginal approaches.
    \newpage
    \item Through \gls{mc} simulations, significant localization performance improvements over \gls{dd} baselines are demonstrated at comparable computational complexity, while the proposed estimators slightly outperform the \gls{mml} approach at substantially reduced complexity.
    \item A comprehensive analysis of the impact of system parameters on both localization performance and computational requirements is provided, supported by asymptotic complexity analysis, simulation results, and runtime measurements.
    \item The proposed \gls{jml} estimators are shown to operate as single-step hybrid \gls{toa}-\gls{tdoa} methods, combining range circle loci from the pilot term and hyperbolic loci from the data term. 
    This geometric interpretation provides theoretical insight into their intrinsic localization behavior.
\end{itemize}

\subsection{Structure of the Paper}

The remainder of this paper is organized as follows.
Section~\ref{sec:system_model} describes the system model, including the scenario, channel model, and signal definitions.
Section~\ref{sec:JML} derives the optimal \gls{jml} estimator and establishes its computational intractability for typical time-frequency resources, while Section~\ref{sec:JML_heuristics} introduces two tractable approximations with different performance-complexity tradeoffs, and identifies additional scenarios to which the proposed methods directly apply.
Section~\ref{sec:results} presents numerical simulation results, demonstrates the localization improvement of the proposed estimators over existing methods, investigates the impact of system parameters on localization performance, and establishes the single-step hybrid \gls{toa}-\gls{tdoa} interpretation of the proposed estimators.
Section~\ref{sec:Complexity} provides a detailed complexity analysis based on asymptotic operation counts and practical runtime measurements.
Finally, Section~\ref{sec:conclusion} concludes the paper and outlines directions for future work.

\subsection{Notations}

Scalars, vectors, matrices, and tensors are respectively denoted by $a$, $\boldsymbol{a}$, $\boldsymbol{A}$, and $\boldsymbol{\mathcal{A}}$, and when functions of parameters, written as $a(\cdot)$, $\boldsymbol{a}(\cdot)$, $\boldsymbol{A}(\cdot)$, and $\boldsymbol{\mathcal{A}}(\cdot)$.
The $i$-th element of $\boldsymbol{a}$, the $(i,j)$-th element of $\boldsymbol{A}$, and the $(i,j,k)$-th element of $\boldsymbol{\mathcal{A}}$ are indexed as $\boldsymbol{a}[i]$, $\boldsymbol{A}[i,j]$, and $\boldsymbol{\mathcal{A}}[i,j,k]$.
The symbol ``$:$'' selects all elements along a given dimension, e.g.,  $\boldsymbol{\mathcal{A}}[i,j,:]$ denotes the vector of  $\boldsymbol{\mathcal{A}}$ along the third dimension for indices $(i,j)$.
The vectorization operator $\vecc\{\boldsymbol{A}\}$ stacks the columns of $\boldsymbol{A}$ into a single column vector.
The absolute value, the vector $\ell_2$ norm, and the Frobenius norm are written $\abs{a}$, $\norm{\boldsymbol{a}}$, and $\frob{\boldsymbol{A}}$, respectively.
The transpose, complex conjugate, and Hermitian transpose are written $\boldsymbol{A}^T$, $\boldsymbol{A}^*$, and $\boldsymbol{A}^H$.
The real and complex sets are $\R$ and $\C$, and $\jc \triangleq \sqrt{-1}$.
The zero vector of size $K$ is expressed as $\Zero{K}$, the $K \times K$ identity matrix as $\I{K}$, the asymptotic complexity  as $\Compl(\cdot)$ and the cardinality of a finite set $\mathcal{S}$ as $\#\mathcal{S}$.
The likelihood function and the expectation over $\ast$ are denoted $\Likelihood(\cdot)$ and $\E_{\ast}\{\cdot\}$, respectively.
Finally, given a true quantity $\theta$, candidate and final estimates are indicated by $\widetilde{\theta}$ and $\widehat{\theta}$, respectively.
\section{System Model}\label{sec:system_model}

\subsection{Scenario}
In this paper, we investigate the problem of localizing a single-antenna \gls{ue} at position $\UEpos \in \R^{2}$, which transmits an \gls{ofdm} uplink signal to an intended communication receiver, e.g., a base station.
At the same time, this communication signal is opportunistically captured by an \gls{srx} consisting of $\numRX$ single-antenna \gls{rx} nodes located at positions $\{\RXpos{\RXindex}\}_{\RXindex=0}^{\numRX-1} \in \R^{\numRX \times 2}$, forming a \gls{das}.
A \gls{cpu} collects the received signals from all nodes and estimates the \gls{ue} position $\UEpos$.
The considered scenario is illustrated in \autoref{fig:scenario}.
Note that the proposed framework extends straightforwardly to $\R^3$ and to other scenarios, as described in Section~\ref{sec:other_scenarios} (see \autoref{fig:other_scenarios}).

\begin{figure}[ht]
    \centering
        \resizebox{0.95\linewidth}{!}{%
        \def\SRXCOLOR{UCLouvainDarkBlue}
\def\TXCOLOR{UCLouvainMediumBlue}

\newcommand{\antenna}[2][1]{
    \draw[thick, \SRXCOLOR] #2 -- ++(0,0.5*#1);
    \draw[thick, \SRXCOLOR] #2 ++(0,0.5*#1) -- ++(-0.25*#1,0.25*#1);
    \draw[thick, \SRXCOLOR] #2 ++(0,0.5*#1) -- ++(0.25*#1,0.25*#1);
}
\def\ueX{-0.5}  
\def\ueY{0} 
\def\ueradius{0.08} 
\def\antennaheight{0.75}    
\def\antennapositions{(-3,-2), (-1,-3), (1,-2.5), (3,-1), (2,0.5)}
\def\cpuY{-4.5}  
\def\cpuX{0.5}    

\begin{tikzpicture}
    \coordinate (ue) at (\ueX,\ueY);
    \fill[\TXCOLOR] (ue) circle (\ueradius);
    \node[above, \TXCOLOR] at ($(ue)+(0,0.1)$) {TX};
    \node[below, \TXCOLOR] at ($(ue)+(0,-0.15)$) {$\UEpos$};

    \draw[\TXCOLOR!50!white] (ue) circle (0.7);
    \draw[\TXCOLOR!35!white] (ue) circle (0.9);
    \draw[\TXCOLOR!20!white] (ue) circle (1.1);
    \draw[\TXCOLOR!5!white] (ue) circle (1.3);
    \foreach [count=\i from 0] \pos in \antennapositions {
        \antenna[1.0]{\pos}
        \path \pos coordinate (ant\i);  
    }
    \foreach [count=\i from 0] \pos in \antennapositions {
            \ifnum\i<2
        \node[below, \SRXCOLOR] at \pos {\small RX$_{\i}$};
        \node[below, \SRXCOLOR] at \pos (rx\i) {\small RX$_{\i}$};
    \fi
    }
    \path let \p1=(ant1), \p2=(ant2), \n1={atan2(\y2-\y1,\x2-\x1)} in
    node[rotate=\n1] at ($(ant1)!0.5!(ant2)$) {$\cdots$};
    \path let \p1=(ant2), \p2=(ant3), \n1={atan2(\y2-\y1,\x2-\x1)} in
    node[rotate=\n1] at ($(ant2)!0.5!(ant3)$) {$\cdots$};
    \node[below, \SRXCOLOR] at (ant2) (rx2) {\small RX$_{n}$};
    \node[right, \SRXCOLOR] at ($(ant2)+(0,+0.25)$) {$\RXpos{\RXindex}$};
    \node[below, \SRXCOLOR] at (ant3) (rx3) {\small RX$_{\numRX - 2}$};
    \node[below, \SRXCOLOR] at (ant4) (rx4) {\small RX$_{\numRX - 1}$};
    
    \node[draw=\SRXCOLOR, thick, rectangle, rounded corners, fill=\SRXCOLOR!10, text=\SRXCOLOR, minimum width=2cm, minimum height=0.5cm] at (\cpuX,\cpuY) (cpu) {CPU};
    \draw[\SRXCOLOR,thin] ($(ant0)+(0,-0.5)$) -- ++(0,-1.3) -| (cpu.north);
    \draw[\SRXCOLOR,thin]  ($(ant1)+(0,-0.5)$) -- ++(0,-0.3) -| (cpu.north);
    \draw[\SRXCOLOR,thin]  ($(ant2)+(0,-0.5)$) -- ++(0,-0.8) -| (cpu.north);
    \draw[\SRXCOLOR,thin]  ($(ant3)+(0,-0.5)$) -- ++(0,-2.3) -| (cpu.north);
    \draw[\SRXCOLOR,thin]  ($(ant4)+(0,-0.5)$) -- ++(0.0,-0.1) -- ++(2.0,0.0) -- ++(0,-3.7) -| (cpu.north);

    \draw[<->,sloped] ($(ant2)+(0,\antennaheight)$) -- ($(ue)+(\ueradius,-\ueradius)$) 
            node[pos=0.5, align=center,above=0.02cm, fill=white, fill opacity=0.5, text opacity=1] {$\norm{\UEpos - \RXpos{n}}$};
    
    \draw[thick, ->, gray] (-3.1,0) -- ++(0.5,0) node[right]{$x$};
    \draw[thick, ->, gray] (-3,-0.1) -- ++(0,0.5) node[above]{$y$};

    \node[\SRXCOLOR] at (-1.8,-1.5) {SRX};

\end{tikzpicture}
    }
    \caption{Illustration of the considered scenario. 
    The \gls{ue} (i.e., the \gls{tx}) transmits an uplink signal to a \gls{das} (i.e., the \gls{srx}) consisting of $\numRX$ single-antenna \gls{rx} nodes. 
    A \gls{cpu} collects the received signals and performs localization.
    }
    \label{fig:scenario}
\end{figure}

The \gls{ue} transmits \gls{ofdm} symbols comprising $\numP$ pilots and $\numD$ data symbols across $\numF$ subcarriers spaced by $\Fspacing$, yielding a total bandwidth of $\BW \triangleq \numF \Fspacing$.
The lowest subcarrier frequency is denoted $\carrierF$, with corresponding wavelength $\carrierWl$ and wavenumber $\carrierWn \triangleq 2\pi \carrierF / c = 2\pi / \carrierWl$, where $c$ is the speed of light.
Pilot symbols are denoted by $\pilotMat \in \C^{\numF \times \numP}$ and may consist of arbitrary complex sequences.
These sequences are deterministic and designed to exhibit well-defined correlation properties \cite{wypich_experimental_2026}.
Data symbols are denoted by $\dataMat \in \constSet^{\numF \times \numD} \subset \C^{\numF \times \numD}$, where each element $\dataMat[\Findex,\Dindex]$ belongs to a constellation $\constSet_{\constMap(\Findex,\Dindex)}$.
The mapping $\constMap(\Findex,\Dindex)$ assigns a constellation from the set $\constSet$ to each resource element $(\Findex, \Dindex)$. 
It is determined by the \gls{tx} for communication purposes, and is typically defined in the transmission preamble.
For notational convenience, we adopt the following shorthand: \begingroup \small$\dataMat \in \constSet_{\constMap}^{\numF \times \numD} \triangleq \{\dataMat[\Findex,\Dindex] \in \constSet_{\constMap(\Findex,\Dindex)}\}_{\Findex,\Dindex=0,0}^{\numF-1,\numD-1}$\endgroup.
Unlike pilot symbols, data symbols are random as they convey information \cite{keskin_fundamental_2025}.
The symbol transmission follows the \gls{tfb} illustrated in \autoref{fig:resource_grid}.
Note that other resource allocations can be accommodated; specifically, the time-domain distribution of pilot and data symbols has no impact, as the channel is assumed constant across the time dimension (see Section~\ref{sec:channel_model}), whereas a uniform allocation across the frequency axis is required.

\begin{figure}[ht]
    \resizebox{0.88\linewidth}{!}{%
        \def\numPilots{4}
\def\numData{14}
\pgfmathsetmacro{\numTimeSlots}{\numPilots+\numData}   
\def\numSubcarriers{8}   
\def\cellSize{0.25}       

\def\DataIndex{1}
\def\SubcarrierIndex{6}

\begin{tikzpicture}

    \pgfmathsetmacro{\numPilotsMinusOne}{\numPilots-1}
    \pgfmathsetmacro{\numTimeSlotsMinusOne}{\numTimeSlots-1}
    \pgfmathsetmacro{\numSubcarriersMinusOne}{\numSubcarriers-1}
    
    \foreach \x in {0,...,\numPilotsMinusOne} {
        \foreach \y in {0,...,\numSubcarriersMinusOne} {
            \fill[black!10] (\x*\cellSize, \y*\cellSize) rectangle ({(\x+1)*\cellSize}, {(\y+1)*\cellSize});
        }
    }
    
    \foreach \x in {\numPilots,...,\numTimeSlotsMinusOne} {
        \foreach \y in {0,...,\numSubcarriersMinusOne} {
            \fill[black!20] (\x*\cellSize, \y*\cellSize) rectangle ({(\x+1)*\cellSize}, {(\y+1)*\cellSize});
        }
    }

    \foreach \x in {0,...,\numTimeSlots} {
        \draw[black!30] (\x*\cellSize, 0) -- (\x*\cellSize, \numSubcarriers*\cellSize);
    }
    \foreach \y in {0,...,\numSubcarriers} {
        \draw[black!30] (0, \y*\cellSize) -- (\numTimeSlots*\cellSize, \y*\cellSize);
    }

    \draw[->, thick] (-0.75*\cellSize, 0) -- (\numTimeSlots*\cellSize + \cellSize, 0) node[pos=0.4, below=0.5cm] {Time};
    \draw[->, thick] (0, -0.75*\cellSize) -- (0, \numSubcarriers*\cellSize + \cellSize) node[pos=0.5, left=1.4cm, rotate=90, anchor=center] {Frequency};

    \draw[<->, thick] (0, -\cellSize) -- (\numPilots*\cellSize, -\cellSize) node[midway, below] {$\numP\Tspacing$};
    \draw[<->, thick] (\numPilots*\cellSize, -\cellSize) -- (\numTimeSlots*\cellSize, -\cellSize) node[midway, below] {$\numD\Tspacing$};
    \draw[<->, thick] (-\cellSize,0) -- (-\cellSize, \numSubcarriers*\cellSize) node[midway, left] {$\numF\Fspacing$}
    node[pos=0, left=0.1cm] {$\carrierF$};

    \pgfmathsetmacro{\pilotCenter}{\numPilots*\cellSize/2}
    \pgfmathsetmacro{\dataCenter}{(\numPilots + \numTimeSlots)*\cellSize/2}
    \pgfmathsetmacro{\verticalCenter}{\numSubcarriers*\cellSize/2}
    
    \node[] at (\pilotCenter, \verticalCenter) {$\pilotMat$};
    \node[] at (\dataCenter, \verticalCenter) {$\dataMat$};

    \draw[<->, thick]
    (\numTimeSlots*\cellSize - 2*\cellSize, \numSubcarriers*\cellSize + 0.75*\cellSize)
    --
    (\numTimeSlots*\cellSize - 1*\cellSize, \numSubcarriers*\cellSize + 0.75*\cellSize)
    node[midway, above] {$\Tspacing$};

    \draw[<->, thick]
    (\numTimeSlots*\cellSize + 0.75*\cellSize, \numSubcarriers*\cellSize - 2*\cellSize)
    --
    (\numTimeSlots*\cellSize + 0.75*\cellSize, \numSubcarriers*\cellSize - 1*\cellSize)
    node[midway, right] {$\Fspacing$};


    \def\DataIndex{1}
    \def\SubcarrierIndex{3}
    \pgfmathsetmacro{\dotX}{(\numPilots + \DataIndex + 0.5)*\cellSize}
    \pgfmathsetmacro{\dotY}{(\SubcarrierIndex + 0.5)*\cellSize}

    \fill[myBlue] (\dotX, \dotY) circle (1.5pt);

    \pgfmathsetmacro{\gridTop}{\numSubcarriers*\cellSize}
    \pgfmathsetmacro{\constellationBottom}{\gridTop + 0.3}
    \pgfmathsetmacro{\constellationCenterX}{\dotX - 2*\cellSize}
    \pgfmathsetmacro{\constellationCY}{\constellationBottom + 0.45}

    \draw[myBlue, thin, dashed] (\dotX, \dotY) node[below, font=\tiny] {$(\Findex,\Dindex)$} -- (\constellationCenterX, \constellationBottom) ;

    \def\qamScale{0.1}
    \def\qamDotSize{1.2pt}
    \pgfmathsetmacro{\qamHalf}{3*\qamScale}

    \node[above, font=\footnotesize] at (\constellationCenterX, \constellationBottom + 0.93)
        {$\constSet_{\constMap(\Findex,\Dindex)}$};

    \draw[->, black!50, thin]
        ({\constellationCenterX - \qamHalf - 0.2}, {\constellationCY})
        --
        ({\constellationCenterX + \qamHalf + 0.2}, {\constellationCY});
    \draw[->, black!50, thin]
        ({\constellationCenterX}, {\constellationBottom + 0.02})
        --
        ({\constellationCenterX}, {\constellationBottom + 0.93});

    \pgfmathsetmacro{\selectedQI}{3}    
    \pgfmathsetmacro{\selectedQQ}{3}  

    \pgfmathsetmacro{\selectedI}{2*\selectedQI - 3}
    \pgfmathsetmacro{\selectedQ}{2*\selectedQQ - 3}

    \foreach \xi in {-3,-1,1,3} {
        \foreach \yi in {-3,-1,1,3} {
            \pgfmathsetmacro{\px}{\constellationCenterX + \xi*\qamScale}
            \pgfmathsetmacro{\py}{\constellationCY + \yi*\qamScale}
            \pgfmathsetmacro{\isSelected}{(\xi == \selectedI) && (\yi == \selectedQ) ? 1 : 0}
            \ifnum\isSelected=1
                \fill[myBlue] (\px, \py) circle (2pt);
                \draw[myBlue, thin] (\px, \py) circle (3pt) node[above right=1pt, font=\footnotesize] {$\dataMat[\Findex,\Dindex]$};
            \else
                \fill[black] (\px, \py) circle (\qamDotSize);
            \fi
        }
    }


    \def\DataIndex{9}
    \def\SubcarrierIndex{6}
    \pgfmathsetmacro{\dotX}{(\numPilots + \DataIndex + 0.5)*\cellSize}
    \pgfmathsetmacro{\dotY}{(\SubcarrierIndex + 0.5)*\cellSize}

    \fill[myBlue2] (\dotX, \dotY) circle (1.5pt);

    \pgfmathsetmacro{\gridTop}{\numSubcarriers*\cellSize}
    \pgfmathsetmacro{\constellationBottom}{\gridTop + 0.5}
    \pgfmathsetmacro{\constellationCenterX}{\dotX + 1*\cellSize}
    \pgfmathsetmacro{\constellationCY}{\constellationBottom + 0.45}

    \draw[myBlue2, thin, dashed] (\dotX, \dotY) node[below, font=\tiny] {$(\Findex',\Dindex')$} -- (\constellationCenterX, \constellationBottom);

    \def\qamScale{0.2}   
    \def\qamDotSize{1.2pt}
    \pgfmathsetmacro{\qamHalf}{1*\qamScale}

    \node[above, font=\footnotesize] at (\constellationCenterX, \constellationBottom + 0.93)
        {$\constSet_{\constMap(\Findex',\Dindex')}$};

    \draw[->, black!50, thin]
        ({\constellationCenterX - \qamHalf - 0.15}, {\constellationCY})
        --
        ({\constellationCenterX + \qamHalf + 0.15}, {\constellationCY});
    \draw[->, black!50, thin]
        ({\constellationCenterX}, {\constellationBottom + 0.02})
        --
        ({\constellationCenterX}, {\constellationBottom + 0.93});

    \pgfmathsetmacro{\selectedQI}{0}   
    \pgfmathsetmacro{\selectedQQ}{0}   

    \pgfmathsetmacro{\selectedI}{2*\selectedQI - 1}
    \pgfmathsetmacro{\selectedQ}{2*\selectedQQ - 1}

    \foreach \xi in {-1,1} {
        \foreach \yi in {-1,1} {
            \pgfmathsetmacro{\px}{\constellationCenterX + \xi*\qamScale}
            \pgfmathsetmacro{\py}{\constellationCY + \yi*\qamScale}
            \pgfmathsetmacro{\isSelected}{(\xi == \selectedI) && (\yi == \selectedQ) ? 1 : 0}
            \ifnum\isSelected=1
                \fill[myBlue2] (\px, \py) circle (2pt);
                \draw[myBlue2, thin] (\px, \py) circle (3pt) node[below left=1pt, font=\footnotesize] {$\dataMat[\Findex',\Dindex']$};
            \else
                \fill[black] (\px, \py) circle (\qamDotSize);
            \fi
        }
    }

\end{tikzpicture}
    }
    \caption{Illustration of the considered \gls{tfb}. 
    }
    \label{fig:resource_grid}
\end{figure}

The \gls{ue} and the background are assumed to be stationary during the frame duration $\T \triangleq (\numP+\numD)\Tspacing$, with $\Tspacing$ being the \gls{ofdm} symbol duration. Hence, no Doppler effect is considered.
For the typical \gls{6g} \gls{isac} parameters used in the simulations (see \autoref{tab:parameters}), this yields\footnotemark\ $\Tspacing \approx \Fspacing^{-1} = \SI{22.22}{\micro\second}$ and $\T=(\numP+\numD)\Tspacing \approx \SI{0.8}{\milli\second}$, which is sufficiently short to justify the stationarity assumption within a single localization update interval.
\footnotetext{Neglecting the cyclic prefix of length $\numCP$, since $\Tspacing = \frac{\numF + \numCP}{\numF \Fspacing} \approx \Fspacing^{-1}$ for $\numF \gg \numCP$.}

\subsection{Channel Model and Observations}\label{sec:channel_model}

The channel model assumes \gls{los} propagation and perfect timing-clock synchronization at each node.
Under these assumptions, the channel matrix $\channelMat(\UEpos) \in \C^{\numRX \times \numF}$, whose $(\RXindex,\Findex)$-th element at the $\RXindex$-th node and $\Findex$-th subcarrier reads (see, e.g., \cite{sakhnini_near-field_2022})
\begin{align}
    \channelMat(\UEpos)[\RXindex,\Findex] & = \channel[\RXindex] e^{-\jc 2 \pi (\carrierF + \Findex \Fspacing) 
    \norm{\UEpos - \RXpos{\RXindex}}/c} \\
    & = \underbrace{\channel[\RXindex] e^{-\jc \carrierWn \norm{\UEpos - \RXpos{\RXindex}}}}_{\triangleq \channelcoeff[\RXindex]} 
    \underbrace{e^{-\jc \carrierWn  \norm{\UEpos - \RXpos{\RXindex}} \Findex \frac{\Fspacing}{\carrierF}}}_{\triangleq \stMat(\UEpos)[\RXindex,\Findex]}, \label{eq:channel_model}
\end{align}
where $\channel \in \C^{\numRX \times 1}$ collects the random channel coefficient at all nodes, accounting for propagation losses and random phases at each \gls{rx}; $\channelcoeff \in \C^{\numRX \times 1}$ further incorporates the propagation phases at the reference frequency; and $\stMat(\UEpos) \in \C^{\numRX \times \numF}$ denotes the steering matrix, i.e., the component carrying the localization information.
The carrier-phase information is not exploited for localization and the corresponding term is therefore absorbed into $\channel[\RXindex]$, forming the unknown complex channel gains $\channelcoeff[\RXindex]$.
Furthermore, the coefficients $\channelcoeff[\RXindex]$ are assumed constant across the entire \gls{tfb}, which is therefore contained within a single coherent block \cite{bjornson_massive_2017}—a condition ensured by the stationarity and \gls{los} assumptions.

At the \gls{srx}, pilot and data observations are denoted by $\pilotObs \in \C^{\numRX \times \numF \times \numP}$ and $\dataObs \in \C^{\numRX \times \numF \times \numD}$, respectively, and are given by
\begin{align}
    \pilotObs [\RXindex, \Findex, \Pindex ] &= \channelMat(\UEpos)[\RXindex,\Findex] \pilotMat[\Findex,\Pindex] + \pilotAWGN[\RXindex, \Findex, \Pindex], \label{eq:pilotObs}\\
    \dataObs[\RXindex, \Findex, \Dindex] &= \channelMat(\UEpos)[\RXindex,\Findex] \dataMat[\Findex,\Dindex] + \dataAWGN[\RXindex, \Findex, \Dindex], \label{eq:dataObs}
\end{align}
where $\pilotAWGN \in \C^{\numRX \times \numF \times \numP}$ and $\dataAWGN \in \C^{\numRX \times \numF \times \numD}$ are the \gls{awgn} terms that are independent across all dimensions, where each element follows a complex normal distribution $\CN(0,\noiseVar)$.
For brevity in following derivations, we introduce the time-concatenated quantities $\symbolMat \in \C^{\numF \times \numPD}$ and $\Obs \in \C^{\numRX \times \numF \times \numPD}$, where $\numPD \triangleq \numP + \numD$ denotes the total number of \gls{ofdm} symbols.

\section{JML Framework}\label{sec:JML}

The objective is to estimate the \gls{ue} position $\UEpos$ (i.e., the \gls{poi} here), from both the pilot and data observations.
Since the data symbols $\dataMat$ are unknown at the \gls{srx}, they constitute \glspl{np}, along with the channel coefficients $\channelcoeff$.
Extracting position information from $\pilotObs$ and $\dataObs$ therefore requires eliminating the dependence on these \glspl{np}.
In this paper, we adopt a \gls{jml} approach, in which the uncertainty associated with the \glspl{np} is addressed by jointly estimating them together with the \gls{poi}.
The \gls{jml} estimation problem is expressed as
\begin{equation}\label{eq:JML_formulation}
    \UEposEst^{\JMLo} = \argmax_{\UEposTest} \max_{\dataMatTest \in \constSet_{\constMap}^{\numF\times\numD}} \max_{\channelcoeffTest \in \C^{\numRX \times 1}} \Likelihood(\pilotObs, \dataObs; \UEposTest, \dataMatTest, \channelcoeffTest),
\end{equation}
where $\Likelihood(\pilotObs, \dataObs; \UEposTest, \dataMatTest, \channelcoeffTest)$ denotes the likelihood of observing $(\pilotObs, \dataObs)$ given the candidate \gls{poi} $\UEposTest$ and candidate \glspl{np} $(\dataMatTest, \channelcoeffTest)$.
However, this joint optimization entails a prohibitive computational complexity due to the high-dimensional search space over the \gls{poi}, data symbols, and channel coefficients \cite{chen_joint_2026}.
To address this issue, we adopt a concentrated likelihood approach in which closed-form estimates of the \glspl{np} are derived as explicit functions of the \gls{poi}, and subsequently substituted back into the likelihood function, thereby eliminating their dependence.
To enable closed-form maximization with respect to $\dataMatTest$, this approach requires relaxing the constellation constraint. Specifically, the discrete nature of the data symbols is neglected, and \eqref{eq:JML_formulation} is maximized over $\dataMatTest \in \C^{\numF\times\numD}$ instead of $\dataMatTest \in \constSet_{\constMap}^{\numF\times\numD}$, resulting in
\begin{equation}\label{eq:JMLc_formulation}
    \UEposEst^{\JMLc} = \argmax_{\UEposTest} \max_{\dataMatTest \in \C^{\numF\times\numD}} \max_{\channelcoeffTest \in \C^{\numRX \times 1}} \Likelihood(\pilotObs, \dataObs; \UEposTest, \dataMatTest, \channelcoeffTest),
\end{equation}
where $\UEposEst^{\JMLc}$ denotes the \textit{constellation-relaxed} version of the optimal \gls{jml} solution $\UEposEst^{\JMLo}$.

This section derives the solution to \eqref{eq:JMLc_formulation}, analyzes the resulting estimator structure, and establishes its computational intractability for large \glspl{tfb}.

\subsection{Optimal Constellation-Relaxed JML Solution}

The following \gls{ls} problem is obtained directly from \eqref{eq:JMLc_formulation} by substituting the pilot and data observation models \eqref{eq:pilotObs} and \eqref{eq:dataObs}:
\begin{align}
    & \UEposEst^{\JMLc} =  \argmin_{\UEposTest} \min_{\dataMatTest \in \C^{\numF\times\numD}} \min_{\channelcoeffTest \in \C^{\numRX \times 1}} \nonumber \\
    &  \sum_{\RXindex=0}^{\numRX-1} \sum_{\Findex=0}^{\numF-1} \sum_{\PDindex=0}^{\numPD-1} \abs{\Obs[\RXindex,\Findex,\PDindex] - \channelcoeffTest[\RXindex] \stMat(\UEposTest)[\RXindex,\Findex] \symbolMatTest[\Findex,\PDindex] }^2, \label{eq:JMLc_LS}
\end{align}
where \begingroup\small$\symbolMatTest \triangleq \begin{bmatrix} \pilotMat^T & \dataMatTest^T \end{bmatrix}^T \in \C^{\numF\times\numPD}$\endgroup denotes the column-wise concatenation of the known pilot symbol matrix $\pilotMat$ and the candidate data symbol matrix $\dataMatTest$.

Since the problem of estimating the channel coefficients $\channelcoeff$ is separable in $\RXindex$, a closed-form solution follows via Wirtinger calculus \cite{koor_short_2023} as
\begin{equation}\label{eq:channel_implicit_estimation}
    \channelcoeffEst(\UEposTest, \symbolMatTest)[\RXindex] = \frac{\sum_{\Findex,\PDindex} \stMat^{*}(\UEposTest)[\RXindex,\Findex] \symbolMatTest^{*}[\Findex, \PDindex] \Obs[\RXindex,\Findex,\PDindex]}{\sum_{\Findex,\PDindex} \abs{\stMat(\UEposTest)[\RXindex,\Findex] \symbolMatTest[\Findex, \PDindex]}^2},
\end{equation} 
where the denominator simplifies to the candidate symbol matrix energy $\rgEnergy(\symbolMatTest) \triangleq \sum_{\Findex,\PDindex} |\symbolMatTest[\Findex, \PDindex]|^2$, since $|\stMat(\UEposTest)[\RXindex,\Findex]|^2 = 1 \; \forall \UEposTest$.
These closed-form estimates can then be substituted back into the objective function.

\begin{proposition}\label{prop:JMLc1}
    The \gls{jml} estimation problem \eqref{eq:JMLc_formulation} with $\channelcoeff$ concentrated out reduces to
    \begin{equation}\label{eq:JMLc_LS_data}
     \UEposEst^{\JMLc} =  \argmax_{\UEposTest} \max_{\dataMatTest \in \C^{\numF\times\numD}} \OF(\UEposTest, \symbolMatTest),
    \end{equation}
    where the objective function $\OF$ is given by
    \begin{equation}\label{eq:JMLc_LS_data_OF}
         \OF(\UEposTest, \symbolMatTest) = \frac{ \sum_{\RXindex} \abs{\sum_{\Findex,\PDindex}\Obs^{*}[\RXindex,\Findex,\PDindex] \stMat(\UEposTest)[\RXindex, \Findex] \symbolMatTest[\Findex,\PDindex] }^2}{\sum_{\Findex,\PDindex} \abs{\symbolMatTest[\Findex, \PDindex]}^2}.
    \end{equation}
    
\end{proposition}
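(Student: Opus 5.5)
Because the AWGN is i.i.d. circular complex Gaussian, the log-likelihood equals, up to an additive constant and a positive scale factor, minus the sum of squared residuals. Hence \eqref{eq:JMLc_formulation} is equivalent to the least-squares problem \eqref{eq:JMLc_LS}. The plan is to substitute the channel estimate \eqref{eq:channel_implicit_estimation} into the inner objective. The residual then splits into a term that does not depend on the candidates and a term that does; the latter has to be maximized.

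Fix $\UEposTest$ and $\symbolMatTest$, and define for each node the scalar $c_{\RXindex} \triangleq \sum_{\Findex,\PDindex} \stMat^{*}(\UEposTest)[\RXindex,\Findex]\symbolMatTest^{*}[\Findex,\PDindex]\Obs[\RXindex,\Findex,\PDindex]$. Using $|\stMat(\UEposTest)[\RXindex,\Findex]|=1$, the per-node cost is a quadratic in $\channelcoeffTest[\RXindex]$:
\begin{equation*}
\textstyle\sum_{\Findex,\PDindex}|\Obs[\RXindex,\Findex,\PDindex]|^2 - 2\,\mathrm{Re}\{\channelcoeffTest^{*}[\RXindex] c_{\RXindex}\} + |\channelcoeffTest[\RXindex]|^2 \rgEnergy(\symbolMatTest).
\end{equation*}
Its minimizer is $c_{\RXindex}/\rgEnergy(\symbolMatTest)$, which is \eqref{eq:channel_implicit_estimation}. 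Completing the square gives the minimum value $\sum_{\Findex,\PDindex}|\Obs[\RXindex,\Findex,\PDindex]|^2 - |c_{\RXindex}|^2/\rgEnergy(\symbolMatTest)$.

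Summing over $\RXindex$, the first part is the observation energy $\sum_{\RXindex,\Findex,\PDindex}|\Obs[\RXindex,\Findex,\PDindex]|^2$. It depends on neither $\UEposTest$ nor $\symbolMatTest$ and can be dropped. Minimizing the remaining negative term is the same as maximizing $\sum_{\RXindex}|c_{\RXindex}|^2/\rgEnergy(\symbolMatTest)$. Finally, $|c_{\RXindex}|=|c_{\RXindex}^{*}|$, and $c_{\RXindex}^{*} = \sum_{\Findex,\PDindex}\Obs^{*}[\RXindex,\Findex,\PDindex]\stMat(\UEposTest)[\RXindex,\Findex]\symbolMatTest[\Findex,\PDindex]$. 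This gives exactly \eqref{eq:JMLc_LS_data_OF}, with the outer $\argmax_{\UEposTest}\max_{\dataMatTest}$ unchanged, which is \eqref{eq:JMLc_LS_data}.

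The only delicate point is justifying that concentrating out $\channelcoeffTest$ first is legitimate. Nested minimizations commute in the sense that $\min_{a,b} f=\min_a \min_b f$, so this is fine. One also needs $\rgEnergy(\symbolMatTest)>0$, which holds whenever the pilot matrix $\pilotMat$ is nonzero. Everything else is routine algebra.
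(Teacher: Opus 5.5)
Your proof is correct and follows essentially the paper's route. Like the paper, you substitute the closed-form estimate \eqref{eq:channel_implicit_estimation} into the per-node least-squares cost, reduce it to the observation energy minus $|c_{\RXindex}|^2/\rgEnergy(\symbolMatTest)$, drop the constant, and convert to a maximization using $|z|=|z^*|$. Completing the square instead of expanding the substituted residual (as the appendix does) is a cosmetic difference, and your remark that $\rgEnergy(\symbolMatTest)>0$ whenever $\pilotMat\neq\boldsymbol{0}$ makes explicit a point the paper leaves implicit.
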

\begin{proof}
    See \hyperref[app:JMLc_proof1]{Appendix~\ref*{app:JMLc_proof1}}.
\end{proof}
\begin{remark}\label{rem:spatial_coherence}
    Substituting the channel estimates \eqref{eq:channel_implicit_estimation} in the objective function results in a loss of \underline{spatial coherence} across nodes ($\RXindex$), causing contributions from different receivers to combine incoherently—as reflected by the summation over receiver indices outside the squared norm in \eqref{eq:JMLc_LS_data_OF}.
    This degrades the likelihood function, as discussed in \cite{reniers_localization_2026}.
    This loss arises because $\channelcoeffEst(\UEposTest, \symbolMatTest)[\RXindex]$ depends on the full transmitted frame. 
    However, as shown in the following, a specific separation of the pilot and data contributions enables coherent processing to be recovered for part of the objective function.
    In contrast, \underline{frequency} ($\Findex$) and \underline{time} ($\PDindex$) \underline{coherence} are preserved, as the expression remains conditioned on $\symbolMatTest$ and all receivers share the same transmitted symbol sequence.
\end{remark}

\begin{table*}[t]
    \centering
    \caption{Notation summary for the proposed $\JMLc$ formulation.}
    \label{tab:notations_JMLc}
    \setlength{\extrarowheight}{0.1cm}
    \begin{tabular}{|l|l|p{8.5cm}|}
        \hline 
        Symbol & Dimension & Definition / Interpretation \\
        \hline \hline 
        $\dataVecTest \triangleq \vecc\{\dataMatTest\}$ & $\C^{\numF\numD \times 1}$ & Vectorized candidate data symbols \\
        \hline
        $\pilotEnergy \triangleq \rgEnergy(\pilotMat) =  \sum_{\Findex,\Pindex} |\pilotMat[\Findex, \Pindex]|^2$ & $\R_{+}$ & Pilot component energy\\
        \hline 
        $\channelcoeffEstPilotsUn{\RXindex}(\UEposTest) \triangleq \sum_{\Findex,\Pindex} \pilotObs[\RXindex,\Findex,\Pindex] \stMat^{*}(\UEposTest)[\RXindex, \Findex] \pilotMat^{*}[\Findex,\Pindex]$ & $\C$ & Unnormalized pilot-based channel estimate at receiver $\RXindex$, such that $\channelcoeffEstPilotsUn{\RXindex}(\UEposTest) / \pilotEnergy$ yields the estimate of $\channelcoeff[\RXindex]$ \\
        $\channelcoeffEstPilotsU(\UEposTest) \triangleq \begin{bmatrix} \channelcoeffEstPilotsUn{0}(\UEposTest) & \cdots & \channelcoeffEstPilotsUn{\numRX-1}(\UEposTest) \end{bmatrix}^{\Trans}$ & $\C^{\numRX\times 1}$ & $-$ \\
        $\channelEnergy(\UEposTest) \triangleq \norm{\channelcoeffEstPilotsU(\UEposTest)}^2 = \channelcoeffEstPilotsU^{\Herm}(\UEposTest)\channelcoeffEstPilotsU(\UEposTest)$ & $\R_{+}$ & Squared norm of the unnormalized pilot-based channel estimates across all receiver nodes, such that $\channelEnergy(\UEposTest) / \pilotEnergy^2$ yields the total channel energy \\
        \hline 
        $\dataMatEstcc{\RXindex}(\UEposTest)[\Findex, \Dindex] \triangleq \dataObs[\RXindex,\Findex,\Dindex] \stMat^{*}(\UEposTest)[\RXindex,\Findex]$ & $\dataMatEstcc{\RXindex} \in \C^{\numF \times \numD}$ & Data observations phase-compensated by the steering matrix at candidate position $\UEposTest$, such that $\dataMatEstcc{\RXindex}(\UEposTest)[\Findex, \Dindex] / \channelcoeff[\RXindex]$ yields the soft estimate of data symbol $(\Findex,\Dindex)$ at node $\RXindex$ for candidate position $\UEposTest$ \\
        $\dataVecEstcc{\RXindex}(\UEposTest) \triangleq \vecc\{\dataMatEstcc{\RXindex}(\UEposTest)\}$ & $\C^{\numF\numD \times 1}$ & $-$ \\
        $\dataMatEstccConcat(\UEposTest) \triangleq \begin{bmatrix} \dataVecEstcc{0}(\UEposTest) & \cdots &  \dataVecEstcc{\numRX-1}(\UEposTest) \end{bmatrix} $ & $\C^{\numF\numD \times \numRX}$ & $-$ \\
        \hline
        \makecell[l]{$\dataMatEstccCov(\UEposTest) \triangleq \dataMatEstccConcat(\UEposTest) \dataMatEstccConcat^{\Herm}(\UEposTest) = \sum_{\RXindex} \dataVecEstcc{\RXindex}(\UEposTest) \dataVecEstcc{\RXindex}^{\Herm}(\UEposTest)$} & $\C^{\numF\numD \times \numF\numD}$ & Aggregate Gram matrix of the phase-compensated data observations $\dataVecEstcc{\RXindex}(\UEposTest)$ across all receiver nodes \\
        \hline
        $\dataVecEstccP(\UEposTest) \triangleq \dataMatEstccConcat(\UEposTest) \channelcoeffEstPilotsU^{*}(\UEposTest) = \sum_{\RXindex} \dataVecEstcc{\RXindex} \channelcoeffEstPilotsUn{\RXindex}^{*}(\UEposTest)$ & $\C^{\numF\numD \times 1}$ & Coherent combination of phase-compensated data observations $\dataVecEstcc{\RXindex}(\UEposTest)$ weighted by the pilot-based channel estimates, such that $\dataVecEstccP(\UEposTest) / \pilotEnergy$ yields soft data symbol estimates. \\
        \hline
    \end{tabular}
\end{table*}

To obtain an expression that depends solely on the candidate \gls{poi} $\UEposTest$, the dependency on $\dataMatTest$ (which appears inside $\symbolMatTest$) must be eliminated.
The resulting concentrated estimator is derived in two steps, within which the notations in \autoref{tab:notations_JMLc} are introduced to ease interpretability.
First, the data symbol estimates are expressed as a function of the candidate \gls{poi} $\UEposTest$ by imposing the following stationarity conditions (i.e., setting the Wirtinger derivatives to zero):
\begin{equation}\label{eq:JMLc_stationary_conditions}
    \frac{\partial\OF(\UEposTest,\dataVecTest)}{\partial\dataVecTest } = 0 \quad \text{and} \quad \frac{\partial\OF(\UEposTest,\dataVecTest)}{\partial\dataVecTest^{*}} = 0.
\end{equation}
\begin{proposition}\label{prop:JMLc2}
    The estimate of the data symbol vector, denoted $\dataVecEst \triangleq \dataVecEst(\UEposTest)$, satisfies
    \begin{equation}\label{eq:JMLc_genEVD}
        \dataMatEstccCov(\UEposTest) \dataVecEst + \dataVecEstccP(\UEposTest) = \OF(\UEposTest,\dataVecEst) \dataVecEst,
    \end{equation}
    which constitutes a generalized \gls{evd} problem in the $\dataVecEst$-direction. 
    Consequently, the solution to $\max_{\dataVecTest} \OF(\UEposTest,\dataVecTest)$ is given by the largest generalized eigenvalue of the corresponding augmented formulation uniquely defined by $\dataMatEstccCov(\UEposTest)$ and $\dataVecEstccP(\UEposTest)$.
\end{proposition}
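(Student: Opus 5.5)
The plan is to first put $\OF$ in an explicit quadratic-over-quadratic form in $\dataVecTest$, then differentiate, and finally recognise a generalized Rayleigh quotient. The first step is to split the inner sum in the numerator of \eqref{eq:JMLc_LS_data_OF} into its pilot part ($\PDindex<\numP$) and its data part ($\PDindex\geq\numP$). Using the definitions of \autoref{tab:notations_JMLc}, the pilot part equals $\channelcoeffEstPilotsUn{\RXindex}^{*}(\UEposTest)$. The data part equals $\sum_{\Findex,\Dindex}\dataMatEstcc{\RXindex}^{*}[\Findex,\Dindex]\dataMatTest[\Findex,\Dindex]=\dataVecEstcc{\RXindex}^{\Herm}\dataVecTest$. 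Expanding $\abs{\channelcoeffEstPilotsUn{\RXindex}^{*}+\dataVecEstcc{\RXindex}^{\Herm}\dataVecTest}^2$ and summing over $\RXindex$ should give the numerator
\begin{equation*}
\OFnum(\dataVecTest)=\channelEnergy+\dataVecTest^{\Herm}\dataMatEstccCov\dataVecTest+\dataVecEstccP^{\Herm}\dataVecTest+\dataVecTest^{\Herm}\dataVecEstccP .
\end{equation*}
Here I use $\sum_{\RXindex}\channelcoeffEstPilotsUn{\RXindex}\dataVecEstcc{\RXindex}^{\Herm}=\dataVecEstccP^{\Herm}$, which is the point of defining $\dataVecEstccP$ with conjugated pilot estimates. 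The denominator is simply $\OFdenom(\dataVecTest)=\pilotEnergy+\dataVecTest^{\Herm}\dataVecTest$.

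Next I will impose \eqref{eq:JMLc_stationary_conditions}. Since $\OF$ is real-valued, the condition on $\dataVecTest$ is the conjugate of the condition on $\dataVecTest^{*}$, so it suffices to use $\partial/\partial\dataVecTest^{*}$. By the Wirtinger quotient rule,
\begin{equation*}
\frac{\partial \OF}{\partial\dataVecTest^{*}}=\frac{(\dataMatEstccCov\dataVecTest+\dataVecEstccP)\OFdenom-\OFnum\dataVecTest}{\OFdenom^{2}} .
\end{equation*}
Setting this to zero and dividing by $\OFdenom>0$ gives $\dataMatEstccCov\dataVecEst+\dataVecEstccP=\OF(\UEposTest,\dataVecEst)\dataVecEst$, which is \eqref{eq:JMLc_genEVD}.

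For the eigenvalue interpretation, I introduce the augmented vector $\dataAugVecTest\triangleq[1,\ \dataVecTest^{T}]^{T}$ together with the matrices
\begin{equation*}
\ARQnum=\begin{bmatrix}\channelEnergy & \dataVecEstccP^{\Herm}\\ \dataVecEstccP & \dataMatEstccCov\end{bmatrix},\qquad \ARQdenom=\begin{bmatrix}\pilotEnergy & \Zero{\numF\numD}^{T}\\ \Zero{\numF\numD} & \I{\numF\numD}\end{bmatrix}.
\end{equation*}
With these, $\OF=\dataAugVecTest^{\Herm}\ARQnum\dataAugVecTest/\dataAugVecTest^{\Herm}\ARQdenom\dataAugVecTest$, a generalized Rayleigh quotient with $\ARQnum$ Hermitian and $\ARQdenom$ positive definite. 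The lower block row of $\ARQnum\dataAugVecTest=\OF\ARQdenom\dataAugVecTest$ is exactly \eqref{eq:JMLc_genEVD}. The first row, $\channelEnergy+\dataVecEstccP^{\Herm}\dataVecEst=\OF\pilotEnergy$, follows automatically from the definition of $\OF$ at a stationary point.

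Because the quotient is invariant to scaling of $\dataAugVecTest$, every vector with a nonzero first entry can be normalised to the form $[1,\dataVecTest^{T}]^{T}$. Hence $\max_{\dataVecTest}\OF$ equals the maximum of the Rayleigh quotient over $\{\dataAugVecTest:\dataAugVecTest[0]\neq0\}$. By the Courant--Fischer theorem applied to $\ARQdenom^{-1/2}\ARQnum\ARQdenom^{-1/2}$, this is the largest generalized eigenvalue of $(\ARQnum,\ARQdenom)$.

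The main obstacle is the case where the principal generalized eigenvector has a zero first entry, since such a vector does not correspond to any finite $\dataVecTest$. I would first handle this by density: the set with $\dataAugVecTest[0]\neq0$ is dense, and the quotient is continuous away from zero, so the supremum over $\dataVecTest$ still equals the largest eigenvalue. Attainment then holds generically, namely whenever $\dataVecEstccP$ is not orthogonal to the top eigenspace of $\dataMatEstccCov$. I will state this non-degeneracy caveat explicitly rather than try to prove that attainment always holds.
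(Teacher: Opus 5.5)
Your proposal is correct and follows essentially the paper's route. Like the paper, you split the pilot and data contributions to write $\OF$ as $(\channelEnergy+\dataVecEstccP^{\Herm}\dataVecTest+\dataVecTest^{\Herm}\dataVecEstccP+\dataVecTest^{\Herm}\dataMatEstccCov\dataVecTest)/(\pilotEnergy+\dataVecTest^{\Herm}\dataVecTest)$, apply the Wirtinger quotient rule in $\dataVecTest^{*}$, divide by $\OFdenom>0$, and recast the objective as the augmented generalized Rayleigh quotient; your ordering $[1,\ \dataVecTest^{T}]^{T}$ is just a permutation of the paper's $\dataAugVecTest$.

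Your version is somewhat more careful than the paper's. In an intermediate step the paper writes the pilot part as $\channelcoeffEstPilotsUn{\RXindex}$ rather than $\channelcoeffEstPilotsUn{\RXindex}^{*}$, although its final $\dataVecEstccP=\dataMatEstccConcat\channelcoeffEstPilotsU^{*}$ agrees with yours. The paper also never checks the first block row, and it does not address the sup-versus-max issue when the top eigenvector has a vanishing augmented entry; you handle both correctly.
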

\begin{proof}
    See \hyperref[app:JMLc_proof2]{Appendix~\ref*{app:JMLc_proof2}}.
\end{proof}

Unlike the concentration over $\channelcoeff$, which admits a closed-form solution, the concentration over $\dataMat$ does not, and insteads leads to the generalized \gls{evd} problem in \eqref{eq:JMLc_genEVD}.
The objective function \eqref{eq:JMLc_LS_data_OF} is therefore reformulated, in a second step, as a \textit{generalized Rayleigh Quotient} to cast the problem into a more tractable form, as shown below.
Introducing the augmented candidate vector \begingroup$\dataAugVecTest \triangleq \begin{bmatrix} \dataVecTest^{\Trans} & 1 \end{bmatrix}^{\Trans} \in \C^{(\numF\numD + 1) \times 1}$\endgroup, the objective is defined as $\OFmax(\UEposTest) \triangleq \max_{\dataAugVecTest} \OF(\UEposTest,\dataAugVecTest)$, with
\begin{equation}\label{eq:gen_RQ}
    \OF(\UEposTest,\dataAugVecTest) = \frac{\dataAugVecTest^{\Herm} \begin{bmatrix} \dataMatEstccCov(\UEposTest) &  \dataVecEstccP(\UEposTest) \\  \dataVecEstccP^{\Herm}(\UEposTest) & \channelEnergy(\UEposTest) \end{bmatrix} \dataAugVecTest}{\dataAugVecTest^{\Herm} \begin{bmatrix} \I{\numF\numD} &  \Zero{\numF\numD} \\ \Zero{\numF\numD}^{\Trans} & \pilotEnergy \end{bmatrix} \dataAugVecTest} \triangleq \frac{\dataAugVecTest^{\Herm} \ARQnum(\UEposTest) \dataAugVecTest}{\dataAugVecTest^{\Herm} \ARQdenom \dataAugVecTest},
\end{equation}
directly  obtained from \eqref{eq:OF_dev1}, where $\ARQnum(\UEposTest) \in \C^{(\numF\numD + 1) \times (\numF\numD + 1)}$ is Hermitian and $\ARQdenom \in \C^{(\numF\numD + 1) \times (\numF\numD + 1)}$ is Hermitian positive definite.
Hence, $\OFmax(\UEposTest)$ is obtained as the largest generalized eigenvalue of $\ARQnum(\UEposTest) \dataAugVecTest = \lambda(\UEposTest) \ARQdenom \dataAugVecTest$.
Since $\ARQdenom$ is diagonal with strictly positive entries, the following change of variables reduces this to a standard eigenvalue problem. 
Let $\MatChangeToEVD \triangleq \ARQdenom^{1/2}$ and $\dataAugVecTestChanged \triangleq \MatChangeToEVD \dataAugVecTest = \begin{bmatrix} \dataVecTest^{\Trans} & \sqrt{\pilotEnergy} \end{bmatrix}^{\Trans}$:
\begin{equation}
    \ARQnumChange(\UEposTest) \dataAugVecTestChanged = \lambda(\UEposTest) \dataAugVecTestChanged,
\end{equation}
where $\ARQnumChange(\UEposTest) = \MatChangeToEVD^{-1} \ARQnum(\UEposTest) \MatChangeToEVD^{-1}$.
\begin{proposition}
    The solution to the constellation-relaxed \gls{jml} problem \eqref{eq:JMLc_formulation}, with $\channelcoeff$ and $\dataMat$ concentrated out, is given by
    \begin{equation}\label{eq:JMLc_estimator}
        \UEposEst^{\JMLc} = \argmax_{\UEposTest} \Lambda_{\max} \Big\{ \ARQnumChange(\UEposTest) \Big\},
    \end{equation}
    where $\Lambda_{\max} \left\{ \ARQnumChange(\UEposTest)\right\}$ denotes the largest eigenvalue of matrix
    \begin{equation}\label{eq:JMLc_estimator_matrix}
        \ARQnumChange(\UEposTest) = \begin{bmatrix} \dataMatEstccCov(\UEposTest) &  \frac{\dataVecEstccP(\UEposTest)}{\sqrt{\pilotEnergy}}\\  \frac{\dataVecEstccP^{\Herm}(\UEposTest)}{\sqrt{\pilotEnergy}} & \frac{\channelEnergy(\UEposTest)}{\pilotEnergy} \end{bmatrix}.
    \end{equation}
\end{proposition}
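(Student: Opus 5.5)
The plan is to start from Proposition~\ref{prop:JMLc1}, which already concentrates out $\channelcoeff$, so it suffices to show that for each fixed $\UEposTest$,
\[
\max_{\dataMatTest\in\C^{\numF\times\numD}}\OF(\UEposTest,\symbolMatTest)=\Lambda_{\max}\{\ARQnumChange(\UEposTest)\}.
\]
The estimator \eqref{eq:JMLc_estimator} then follows by taking $\argmax_{\UEposTest}$.

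\textbf{Step 1: rewrite the numerator.} Split the inner sum of \eqref{eq:JMLc_LS_data_OF} into its pilot and data parts. With the notation of \autoref{tab:notations_JMLc}, the inner sum at node $\RXindex$ equals $\channelcoeffEstPilotsUn{\RXindex}^{*}(\UEposTest)+\dataVecEstcc{\RXindex}^{\Herm}(\UEposTest)\dataVecTest$. Expanding the squared modulus and summing over $\RXindex$ gives
\[
\dataVecTest^{\Herm}\dataMatEstccCov\dataVecTest+\dataVecTest^{\Herm}\dataVecEstccP+\dataVecEstccP^{\Herm}\dataVecTest+\channelEnergy .
\]
Indeed, the cross terms collect $\sum_{\RXindex}\dataVecEstcc{\RXindex}\channelcoeffEstPilotsUn{\RXindex}^{*}=\dataVecEstccP$, and the pilot term collects $\sum_{\RXindex}|\channelcoeffEstPilotsUn{\RXindex}|^2=\channelEnergy$. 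The denominator is $\norm{\dataVecTest}^2+\pilotEnergy$. This is exactly the quadratic-form ratio \eqref{eq:gen_RQ} evaluated at $\dataAugVecTest=[\dataVecTest^{\Trans}\ 1]^{\Trans}$. Hermitian symmetry of $\ARQnum$ is immediate, and $\ARQdenom$ is positive definite because $\pilotEnergy>0$.

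\textbf{Step 2: remove the constraint that the last entry equals $1$.} The ratio \eqref{eq:gen_RQ} is invariant under scaling $\dataAugVecTest\mapsto\alpha\dataAugVecTest$ for any $\alpha\in\C\setminus\{0\}$. Every vector whose last entry is nonzero can therefore be normalized to have last entry $1$, so
\[
\sup_{\dataVecTest}\OF=\sup\{\text{ratio over }\dataAugVecTest:\ \dataAugVecTest[\numF\numD]\neq0\}.
\]
This set is dense in $\C^{\numF\numD+1}\setminus\{\boldsymbol{0}\}$ and the ratio is continuous there. Hence this supremum equals the unconstrained maximum, which is the largest generalized eigenvalue of the pencil $(\ARQnum,\ARQdenom)$ by the Courant--Fischer characterization.

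\textbf{Step 3: reduce to a standard eigenvalue problem.} Substituting $\dataAugVecTestChanged=\MatChangeToEVD\dataAugVecTest$ with $\MatChangeToEVD=\ARQdenom^{1/2}=\mathrm{diag}(\I{\numF\numD},\sqrt{\pilotEnergy})$ turns the ratio into an ordinary Rayleigh quotient of $\ARQnumChange=\MatChangeToEVD^{-1}\ARQnum\MatChangeToEVD^{-1}$. Its maximum is $\Lambda_{\max}\{\ARQnumChange\}$, and computing the blocks gives \eqref{eq:JMLc_estimator_matrix} directly.

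\textbf{Main obstacle.} The delicate point is Step 2: the maximum over $\dataVecTest$ may only be a supremum. This happens when every top eigenvector of $\ARQnumChange$ has a zero last entry, for instance if $\channelcoeffEstPilotsU$ is orthogonal to the relevant data directions. In that case, $\dataVecTest$ along such an eigenvector, scaled so that $\norm{\dataVecTest}\to\infty$, drives $\OF$ to the same value $\Lambda_{\max}$ without attaining it.

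I would handle this by reading the inner $\max$ in \eqref{eq:JMLc_formulation} as a supremum, which leaves the outer $\argmax_{\UEposTest}$ unchanged. I would also note that the generic case (last eigenvector entry nonzero) attains the maximum at $\dataVecEst=\dataAugVecTestChanged_{\max}[0{:}\numF\numD-1]\sqrt{\pilotEnergy}/\dataAugVecTestChanged_{\max}[\numF\numD]$. This is consistent with the stationarity condition \eqref{eq:JMLc_genEVD} of Proposition~\ref{prop:JMLc2}.
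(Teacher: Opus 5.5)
Your proposal is correct and follows essentially the paper's route: you expand the concentrated objective of Proposition~\ref{prop:JMLc1} into the augmented generalized Rayleigh quotient \eqref{eq:gen_RQ}, identify its maximum with the largest generalized eigenvalue of $(\ARQnum,\ARQdenom)$, and whiten with $\ARQdenom^{1/2}$ to obtain $\ARQnumChange$. Your Step~2 (scale invariance plus density) makes explicit the removal of the constraint that the last entry of $\dataAugVecTest$ equals $1$, which the paper performs without comment; you also correctly observe that the inner maximum over $\dataVecTest$ is in general only a supremum, attained exactly when the top eigenspace contains a vector with nonzero last entry.
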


\vspace*{-0.375cm}
\subsection{Structural Analysis and Computational Intractability}

Matrix \eqref{eq:JMLc_estimator_matrix} exhibits an insightful block structure, which is illlustrated in \autoref{fig:ARQnumChange_Illustration} and which we now examine term by term.

\begin{enumerate}
    \item The scalar bottom-right entry $\frac{\channelEnergy(\UEposTest)}{\pilotEnergy}$ corresponds to the pilot-only estimator ($\numD=0$), given by $\UEposEst^{\Pmethod} = \argmax_{\UEposTest} \frac{\channelEnergy(\UEposTest)}{\pilotEnergy} = \argmax_{\UEposTest}\channelEnergy(\UEposTest)$.
    It is equivalently obtained from \eqref{eq:JMLc_LS_data} by restricting to pilot observations and removing the maximization over $\dataMatTest$.
    This estimator is further discussed in Section~\ref{sec:results_baselines} (see \eqref{eq:P_estimator}).
    \item The top-left block $\dataMatEstccCov(\UEposTest)$ captures the data-only contribution ($\numP=0$): $\UEposEst^{\Dmethod} = \argmax_{\UEposTest} \Lambda_{\max} \{ \dataMatEstccCov(\UEposTest)\}$.
    In contrast to the pilot-only case, which concentrates the energy of all $\numF\numP$ pilot observations into a single scalar, the information carried by the data observations is distributed across a $\numF\numD$-dimensional space due to their unknown nature.
    \item Finally, the off-diagonal cross-terms $\dataVecEstccP(\UEposTest)/\sqrt{\pilotEnergy}$ and $\dataVecEstccP^{\Herm}(\UEposTest)/\sqrt{\pilotEnergy}$ reveal the coupling between pilot and data contributions.
    Specifically, they correspond, up to a constant factor, to data estimates whose channel component has been pre-compensated using the pilot-based channel estimates (see \autoref{tab:notations_JMLc}).\\
    As will be shown in the next section, this structure directly motivates the tractable approximation we propose, in which the pilot observations are leveraged to eliminate the channel dependency in the data term.
\end{enumerate}

\begin{figure}[t]
    \centering
    \resizebox{1.0\linewidth}{!}{%
        \begin{tikzpicture}
    \def\N{13}
    \def\ps{0.25}
    \def\pad{0.1} 

    \newcommand{\highlightPart}[6]{
        \draw[#5, #6, very thick,] ({(#1)*\ps}, {(#2)*\ps}) rectangle ({(#3)*\ps}, {(#4)*\ps});
    }

    \pgfmathsetseed{1234}

    \foreach \i in {0,...,\numexpr\N-1} {
        \foreach \j in {\i,...,\numexpr\N-1} {

            \pgfmathparse{10 + 80*rnd}
            \let\g\pgfmathresult

            \expandafter\xdef\csname val\i-\j\endcsname{\g}
            \expandafter\xdef\csname val\j-\i\endcsname{\g}
        }
    }

    \foreach \i in {0,...,\numexpr\N-1} {
        \foreach \j in {0,...,\numexpr\N-1} {

            \pgfmathparse{\csname val\i-\j\endcsname}
            \let\g\pgfmathresult

            \fill[black!\g]
                (\i*\ps, {(\N-1-\j)*\ps})
                rectangle ++(\ps,\ps);
        }
    }

    \foreach \i in {0,...,\numexpr\N-1} {
        \draw[black!30] (\i*\ps, 0) -- (\i*\ps, \N*\ps);
    }
    \foreach \j in {0,...,\numexpr\N-1} {
        \draw[black!30] (0,\j*\ps) -- (\N*\ps,\j*\ps);
    }

    \node[very thick, font=\large] at (0,\N*\ps+3*\ps) {$\ARQnumChange(\UEposTest)$};

    \pgfmathsetmacro{\total}{\N*\ps}

    \draw[semithick] (\pad, -\pad) -- (-\pad, -\pad) -- (-\pad, \total+\pad) -- (\pad, \total+\pad);

    \draw[semithick] (\total-\pad, -\pad) -- (\total+\pad, -\pad) -- (\total+\pad, \total+\pad) -- (\total-\pad, \total+\pad);

    \draw[<->, thick] (0, \N*\ps+\ps) -- (\N*\ps-\ps, \N*\ps+\ps) node[midway, above] {$\numF\numD$};       
    \draw[<->, thick] (-\ps, \ps) -- (-\ps, \N*\ps) node[midway, left] {$\numF\numD$};                      

    \highlightPart{0}{1}{\N-1}{\N}{myGreen}{solid}
    \node[thick, myGreen, fill=myGreen!15!white, fill opacity=0.8, text opacity=1, rounded corners=5pt, font=\large] at (0.5*\N*\ps,0.5*\N*\ps) {$\dataMatEstccCov(\UEposTest)$};

    \highlightPart{\N-1}{0}{\N}{1}{myOrange}{solid}
    \node[thick, myOrange!90!black, fill=myOrange, fill opacity=0.05, text opacity=1, rounded corners=5pt, font=\large] at (\N*\ps+2*\ps,-2*\ps) {$\frac{\channelEnergy(\UEposTest)}{\pilotEnergy} $};
    \highlightPart{0}{0}{\N-1}{1}{myRed}{dashed}

    \node[thick, myRed, fill=myRed, fill opacity=0.05, text opacity=1, rounded corners=5pt, font=\large] at(0.5*\N*\ps,-2*\ps) {$\frac{\dataVecEstccP^{\Herm}(\UEposTest)}{\sqrt{\pilotEnergy}}$};
    \highlightPart{\N-1}{1}{\N}{\N}{myRed}{dashed}
    \node[thick, myRed, fill=myRed, fill opacity=0.05, text opacity=1, rounded corners=5pt, font=\large] at(\N*\ps+3*\ps,0.5*\N*\ps) {$\frac{\dataVecEstccP(\UEposTest)}{\sqrt{\pilotEnergy}}$};


    \draw[->, thick, myGreen,preaction={draw, white, line width=8pt}] (-3*\ps, 0.9*\N*\ps) 
    node[anchor=east, align=center,font=\footnotesize] {Data-only \\ component \\ distributed over \\ the $\numF\numD$ symbols}
    to[out=75, in=130] (-0.1*\ps, 1.01*\N*\ps);

    \draw[->, thick, myOrange!90!black,preaction={draw, white, line width=8pt}] (\N*\ps+2*\ps, 2*\ps) 
    node[anchor=west, align=center, font=\footnotesize] {Pilot-only \\ component \\ aggregated over \\ the $\numF\numP$ symbols} 
    to[out=170, in=30] (\N*\ps+0.2*\ps, \ps);

     \draw[->, thick, myRed,preaction={draw, white, line width=8pt}] (\N*\ps+2*\ps, 1.08*\N*\ps) 
    node[anchor=west, align=center, font=\footnotesize] {Data component \\ compensated using \\ pilot-based channel \\  estimates} 
    to[out=160, in=75] (\N*\ps+0.1*\ps, \N*\ps*1.01);

\end{tikzpicture}
    }
    \caption{Illustration of the block structure of matrix $\ARQnumChange(\UEposTest)$, highlighting the pilot-only component, the data-only component, and the data component with pilot-based channel compensation.}
    \label{fig:ARQnumChange_Illustration}
\end{figure}

An insightful analysis consists in examining the behavior of $\Lambda_{\max} \left\{ \ARQnumChange(\UEposTest)\right\}$ as a function of $\UEposTest$, which is expected to increase as $\UEposTest$ approaches the true position $\UEpos$. 
Moreover, the eigenvector associated with the largest eigenvalue, denoted by $\dataAugVecTestChangedMax$, corresponds to an implicit soft estimate of $\dataMat$ when restricted to its first $\numF\numD$ components (i.e., excluding the augmented coefficient). 
\autoref{fig:Lambda_max_illustration} illustrates $\Lambda_{\max} \left\{ \ARQnumChange(\UEposTest)\right\}$ and the corresponding $\dataAugVecTestChangedMax$ for several candidate positions. 
As $\UEposTest \rightarrow \UEpos$, $\Lambda_{\max} \left\{ \ARQnumChange(\UEposTest)\right\}$ increases and the underlying constellation structure progressively emerges, thereby confirming the proposed interpretation.

\begin{figure}[ht]
    \centering
    \includegraphics[width=1.0\linewidth]{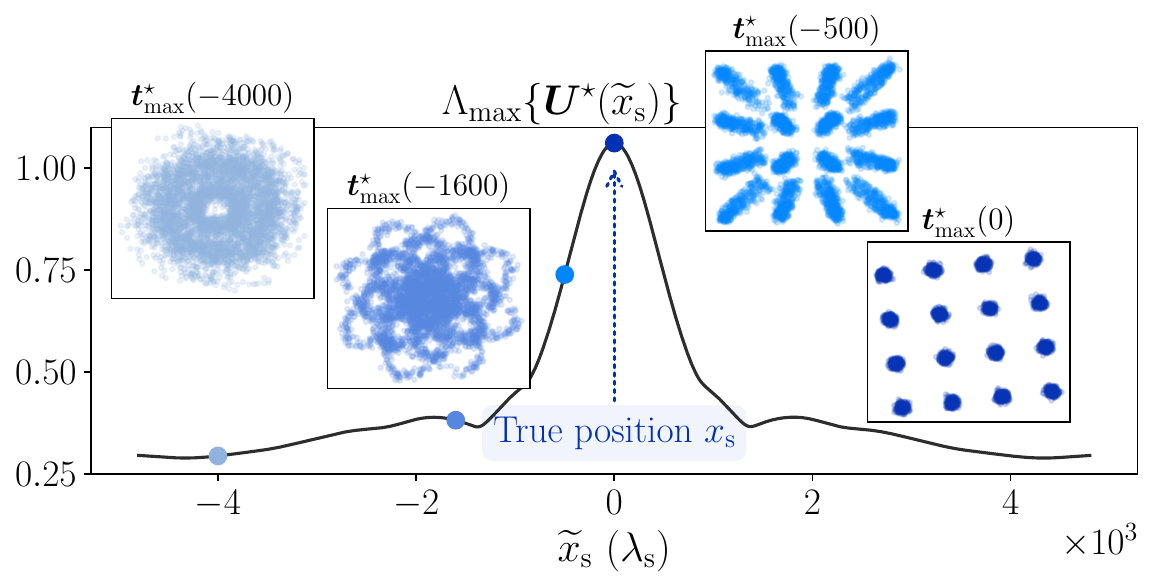}
    \caption{Illustration of $\Lambda_{\max} \left\{ \ARQnumChange(\UEposTest)\right\}$ for a \gls{ue} located at $\UEpos = [0, 0]$. 
    The first $\numF\numD$ elements of $\dataAugVecTestChangedMax$ (corresponding to the implicit estimate of $\dataMat$) are shown for selected candidate positions. 
    The plot represents a cut along $\widetilde{x}_{\mathrm{s}} = \UEposTest[:,0]$, normalized by $\carrierWl$ (here \SI{4.17}{\centi\meter}), and follows the simulation setup from Section~\ref{sec:results}.
    The data constellation is a $16$-\gls{qam} and the \gls{snr} is fixed to \SI{20}{\decibel}; all other parameters are left at their default values (see \autoref{tab:parameters}).}
    \label{fig:Lambda_max_illustration}
\end{figure}

Finally, while being nearly optimal (the only approximation being the constellation relaxation), the estimator $\JMLc$ in \eqref{eq:JMLc_estimator} is computationally intractable for typical \glspl{tfb} used in practice.
As detailed in Section~\ref{sec:Complexity_th}, computing $\Lambda_{\max}\{\ARQnumChange(\UEposTest)\}$ incurs an asymptotic complexity of $\Compl(\numF^3 \numD^3)$ per candidate position, exceeding practical computational limits for typical system parameters, as confirmed by the runtime measurements reported in Section~\ref{sec:Complexity_meas}.
This motivates the development of more tractable approximations.

\section{Practical JML Heuristics}\label{sec:JML_heuristics}

This section first introduces an approximation of the $\JMLc$ estimator.
Next, the resulting estimator is reformulated by means of a \gls{svd}, from which an empirical approximation based on a subset of components is obtained. 
This second approximation preserves the estimation accuracy of the previous estimator, which already outperforms existing baselines, while further reducing the computational complexity.
Finally, two additional scenarios to which the proposed estimators directly apply are discussed.

\subsection{Approximate JML Solution}

Motivated by the structure of the off-diagonal term $\dataVecEstccP(\UEposTest)$ in matrix $\ARQnumChange(\UEposTest)$ (see \autoref{fig:ARQnumChange_Illustration}), we replace the unknown channel coefficients in the data likelihood term with a pilot-based estimate.
This estimate is constructed for each candidate position $\UEposTest$ and denoted by $\channelcoeffEstPilots(\UEposTest)$.
The resulting approximate \gls{jml} estimator is given by
\begin{align}
    & \UEposEst^{\JMLa} = \argmax_{\UEposTest} \max_{\dataMatTest \in \C^{\numF\times\numD}} \max_{\channelcoeffTest\in\C^{\numRX\times1}} \Likelihood(\pilotObs ; \UEposTest, \channelcoeffTest) \nonumber\\
    & \hspace{4cm} \times \Likelihood(\dataObs; \UEposTest, \dataMatTest, \channelcoeffEstPilots(\UEposTest)) \label{eq:JMLa_formulation}\\[0.2em]
    & \textcolor{white}{\UEposEst^{\JMLa}} =  \argmax_{\UEposTest} \underbrace{\max_{\channelcoeffTest\in\C^{\numRX\times1}} \Likelihood(\pilotObs ; \UEposTest, \channelcoeffTest)}_{\text{Pilot term}} \nonumber \\
    & \hspace{2cm} \times \underbrace{\max_{\dataMatTest \in \C^{\numF\times\numD}} \Likelihood(\dataObs; \UEposTest, \dataMatTest, \channelcoeffEstPilots(\UEposTest))}_{\text{Data term}},\label{eq:JMLa_formulation2}
\end{align}
where \eqref{eq:JMLa_formulation} follows from \eqref{eq:JMLc_formulation} by injecting the approximation 
\begin{equation}\label{eq:approximation}
    \Likelihood(\dataObs; \UEposTest, \dataMatTest, \channelcoeffTest) \approx \Likelihood(\dataObs; \UEposTest, \dataMatTest, \channelcoeffEstPilots(\UEposTest)).
\end{equation}

This approximation eliminates the channel dependence in the data likelihood term, which now depends only on the single \gls{np} term $\dataMatTest$ and can therefore be handled efficiently, as highlighted below.
The pilot-based channel coefficient estimate is obtained as
\begin{equation}\label{eq:channelcoeffEstPilots_ML}
    \channelcoeffEstPilots(\UEposTest)[\RXindex] = \argmax_{\channelcoeffTest[\RXindex] \in \C} \Likelihood(\pilotObs[\RXindex,:,:]; \channelcoeffTest[\RXindex], \UEposTest),
\end{equation}
which is separable across the $\numRX$ receivers.
Using Wirtinger calculus \cite{koor_short_2023}, analogously to \eqref{eq:channel_implicit_estimation}, this yields
\begin{align}
    \channelcoeffEstPilots(\UEposTest)[\RXindex] & = \frac{\sum_{\Findex,\Pindex=0,0}^{\numF-1,\numP-1} \stMat^{*}(\UEposTest)[\RXindex,\Findex] \pilotMat^{*}[\Findex,\Pindex]  \pilotObs[\RXindex, \Findex, \Pindex]}{\sum_{\Findex,\Pindex=0,0}^{\numF-1,\numP-1}\abs{\stMat(\UEposTest)[\RXindex,\Findex] \pilotMat[\Findex,\Pindex]}^2} \label{eq:channelcoeffEstPilotsFirst} \\
    & = \frac{1}{\pilotEnergy} \sum_{\Findex=0}^{\numF-1} \stMat^{*}(\UEposTest)[\RXindex,\Findex] \pilotObsEq[\RXindex,\Findex], \label{eq:channelcoeffEstPilots}
\end{align}
where the denominator in \eqref{eq:channelcoeffEstPilotsFirst} reduces to the pilot energy $\pilotEnergy$ since $\abs{\stMat(\UEposTest)[\RXindex,\Findex]}^2 = 1$ for all $\UEposTest$, and where the ``symbol-equalized'' pilot observations $\pilotObsEq \in \C^{\numRX \times \numF}$ are defined as
\begin{equation}\label{eq:pilotObsEq}
    \pilotObsEq[\RXindex,\Findex] \triangleq \sum_{\Pindex=0}^{\numP-1} \pilotMat^{*}[\Findex,\Pindex]  \pilotObs[\RXindex, \Findex, \Pindex],
\end{equation}
as this quantity will be useful in subsequent expressions.
For clarity in the following, we also define $\channelConstruct(\UEposTest) \in \C^{\numRX \times \numF}$ as the channel estimate for candidate position $\UEposTest$, obtained by combining the pilot-based estimate of $\channelcoeff[\RXindex]$ with the steering matrix defined in \eqref{eq:channel_model}:
\begin{equation}
    \channelConstruct(\UEposTest)[\RXindex,\Findex] = \channelcoeffEstPilots(\UEposTest)[\RXindex] \stMat(\UEposTest)[\RXindex,\Findex].
\end{equation}

\begin{proposition}\label{prop:JMLa}
    The estimator $\UEposEst^{\JMLa}$ defined in \eqref{eq:JMLa_formulation2} is given by
    \begin{align}
        & \UEposEst^{\JMLa} =  \argmax_{\UEposTest} \frac{1}{\pilotEnergy} \sum_{\RXindex=0}^{\numRX-1} \abs{\sum_{\Findex=0}^{\numF-1} \pilotObsEq^{*}[\RXindex,\Findex] \stMat(\UEposTest)[\RXindex,\Findex]}^2 \nonumber \\
        + & \frac{1}{\channelConstructEnergy(\UEposTest)} \sum_{\Findex=0}^{\numF-1} \sum_{\Dindex=0}^{\numD-1} \abs{\sum_{\RXindex=0}^{\numRX-1} \dataObs^{*}[\RXindex,\Findex,\Dindex] \channelConstruct(\UEposTest)[\RXindex,\Findex]}^2, \label{eq:JMLa_estimator}
    \end{align}
    where $\channelConstructEnergy(\UEposTest) \triangleq \norm{\channelcoeffEstPilots(\UEposTest)}^2.$
\end{proposition}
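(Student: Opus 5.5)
Taking the log of \eqref{eq:JMLa_formulation2} splits the objective into a pilot term and a data term. Each is a separable least-squares problem under the Gaussian model \eqref{eq:pilotObs}--\eqref{eq:dataObs}, and each is concentrated in closed form. Additive constants (the observation energies $\sum\abs{\pilotObs}^2$ and $\sum\abs{\dataObs}^2$, which do not depend on $\UEposTest$) are dropped throughout. The common factor $1/\noiseVar$ is also dropped, since it does not change the argmax.

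\emph{Pilot term.} Maximizing $\Likelihood(\pilotObs;\UEposTest,\channelcoeffTest)$ over $\channelcoeffTest$ amounts to minimizing $\sum_{\RXindex,\Findex,\Pindex}\abs{\pilotObs[\RXindex,\Findex,\Pindex]-\channelcoeffTest[\RXindex]\stMat(\UEposTest)[\RXindex,\Findex]\pilotMat[\Findex,\Pindex]}^2$. The minimizer is \eqref{eq:channelcoeffEstPilots}. Substituting it back is the Proposition~\ref{prop:JMLc1} computation with $\numD=0$: expanding the square and using $\abs{\stMat}=1$ gives the residual as a constant minus $\frac{1}{\pilotEnergy}\sum_{\RXindex}\abs{\sum_{\Findex,\Pindex}\pilotObs\stMat^{*}\pilotMat^{*}}^2$. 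Collapsing the sum over $\Pindex$ with \eqref{eq:pilotObsEq} and conjugating inside the modulus yields the first term of \eqref{eq:JMLa_estimator}.

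\emph{Data term.} With $\channelcoeffEstPilots(\UEposTest)$ fixed, the data residual is $\sum_{\Findex,\Dindex}\sum_{\RXindex}\abs{\dataObs[\RXindex,\Findex,\Dindex]-\channelConstruct(\UEposTest)[\RXindex,\Findex]\dataMatTest[\Findex,\Dindex]}^2$. This separates over each resource element $(\Findex,\Dindex)$ into a scalar least-squares problem across receivers. For each element, the Wirtinger stationarity condition gives
\begin{equation*}
\dataMatEst[\Findex,\Dindex]=\frac{\sum_{\RXindex}\channelConstruct^{*}[\RXindex,\Findex]\dataObs[\RXindex,\Findex,\Dindex]}{\sum_{\RXindex}\abs{\channelConstruct[\RXindex,\Findex]}^2}.
\end{equation*}
Because $\abs{\stMat}=1$, the denominator equals $\sum_{\RXindex}\abs{\channelcoeffEstPilots[\RXindex]}^2=\channelConstructEnergy(\UEposTest)$, independently of $\Findex$. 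Substituting back gives a constant minus $\frac{1}{\channelConstructEnergy}\abs{\sum_{\RXindex}\dataObs^{*}\channelConstruct}^2$, and summing over $(\Findex,\Dindex)$ yields the second term.

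\emph{Assembly.} The product of the likelihoods becomes a sum of negative residuals. Negating turns the argmin into the argmax of the sum of the two concentrated terms, which is \eqref{eq:JMLa_estimator}.

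The one point needing care is the data-term denominator. It must be shown to be exactly $\channelConstructEnergy(\UEposTest)$, which relies on the unit modulus of the steering entries. It also requires $\channelcoeffEstPilots(\UEposTest)\neq\Zero{\numRX}$, which holds almost surely. Otherwise the argument is routine bookkeeping, mirroring Appendix~\ref{app:JMLc_proof1}.
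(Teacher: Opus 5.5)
Your proposal is correct and takes essentially the same route as the paper. The pilot term is the $\numD=0$ specialization of the Proposition~\ref{prop:JMLc1} concentration, collapsed via \eqref{eq:pilotObsEq}; the data term is the per-resource-element Wirtinger estimate with denominator $\channelConstructEnergy(\UEposTest)$ substituted back, mirroring Appendix~\ref{app:JMLc_proof1} with the roles of channel and data interchanged, and your explicit assembly step (log of the product, with a common $1/\noiseVar$ factor since pilot and data noise share the same variance) just spells out what the paper leaves implicit.
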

\begin{proof}
    See \hyperref[app:JMLa_proof]{Appendix~\ref*{app:JMLa_proof}}.
\end{proof}

\begin{remark}
    In \eqref{eq:JMLa_estimator}, the contributions of different receivers combine incoherently in the pilot term (as in \autoref{rem:spatial_coherence}, where the data dependency has not yet been eliminated), while the phase correction provided by $\channelcoeffEstPilots(\UEposTest)$ enables coherent combining in the data term (subject to pilot-based estimation quality).
    This recovery of spatial coherence in the data term is possible because the channel coefficients $\channelcoeff$ remain constant across the full \gls{tfb}, and the pilot-based estimate—derived from observations independent of the data symbols—effectively serves as a phase reference.
    However, since the transmitted data symbols vary randomly across subcarriers and time instances, coherence in both frequency and time is lost in the second term, as reflected by their summations appearing outside the squared norm.
\end{remark}

\subsection{Low-rank Approximation}\label{sec:JML_fast}
Since the steering matrix $\stMat(\UEposTest) \in \C^{\numRX\times\numF}$ is constant across the time dimension, the data observations carry no additional temporal structure. 
Consequently, the summation over $\Dindex$ in \eqref{eq:JMLa_estimator} can be compressed via an \gls{svd}, and the estimator reformulated using the equivalence established in \autoref{prop:JMLfast}.
\begin{proposition}\label{prop:JMLfast}
    Denote the observations on subcarrier $\Findex$ as $\dataObsMat{\Findex} \triangleq \dataObs[:,\Findex,:] \in \C^{\numRX\times\numD}$ and let $\dataObsMat{\Findex} = \SVDMatLeft{\Findex} \SVDMatVals{\Findex} \SVDMatRight{\Findex}^{\Herm}$ be its \gls{svd}, where $\SVDMatLeft{\Findex} \in \C^{\numRX\times\numRX}$ (resp. $\SVDMatRight{\Findex} \in \C^{\numD\times\numD}$) is a unitary matrix whose columns are the left (resp. right) singular vectors of $\dataObsMat{\Findex}$, and $\SVDMatVals{\Findex} \in \C^{\numRX\times\numD}$ is a rectangular matrix whose diagonal entries are the singular values of $\dataObsMat{\Findex}$ and all off-diagonal entries are zero.
    The data contribution of \eqref{eq:JMLa_estimator} can be equivalently rewritten as
    \begin{align}
        & \frac{1}{\channelConstructEnergy(\UEposTest)} \sum_{\Findex=0}^{\numF-1} \sum_{\Dindex=0}^{\numD-1} \abs{\sum_{\RXindex=0}^{\numRX-1} \dataObs^{*}[\RXindex,\Findex,\Dindex] \channelConstruct(\UEposTest)[\RXindex,\Findex]}^2 = \nonumber \\
         & \frac{1}{\channelConstructEnergy(\UEposTest)} \sum_{\Findex=0}^{\numF-1} \sum_{\SVindex=0}^{\numSV-1} \abs{\SVDValVec{\Findex}[\SVindex]}^2 \abs{\sum_{\RXindex=0}^{\numRX-1} \SVDMatLeft{\Findex}^{*}[\RXindex,\SVindex] \channelConstruct(\UEposTest)[\RXindex,\Findex]}^2, \label{eq:JMLfast_equivalence}
    \end{align}
    where $\SVDValVec{\Findex} \triangleq \begin{bmatrix} \SVDMatVals{\Findex}[0,0] & \cdots & \SVDMatVals{\Findex}[\numSV-1,\numSV-1] \end{bmatrix}^{\Trans} \in \R^{\numSV\times1}_{+}$ and $\numSV = \min(\numRX,\numD)$ denotes the number of singular values.
\end{proposition}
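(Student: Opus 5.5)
Fix a subcarrier $\Findex$ and write $\channelConstructVec{\Findex} \triangleq \channelConstruct(\UEposTest)[:,\Findex] \in \C^{\numRX\times 1}$. The plan is to recognize the inner sum over $\Dindex$ as a Hermitian quadratic form in $\channelConstructVec{\Findex}$, and then to diagonalize that form with the SVD of $\dataObsMat{\Findex}$. The common factor $1/\channelConstructEnergy(\UEposTest)$ and the outer sum over $\Findex$ play no role, so it suffices to prove the identity for each fixed $\Findex$.

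\textbf{Step 1 (quadratic form).} Since $\dataObsMat{\Findex}[\RXindex,\Dindex] = \dataObs[\RXindex,\Findex,\Dindex]$, the inner complex sum is
\begin{equation*}
\sum_{\RXindex}\dataObs^{*}[\RXindex,\Findex,\Dindex]\channelConstruct(\UEposTest)[\RXindex,\Findex] = \big(\dataObsMat{\Findex}^{\Herm}\channelConstructVec{\Findex}\big)[\Dindex].
\end{equation*}
Summing the squared moduli over $\Dindex$ therefore gives
\begin{equation*}
\norm{\dataObsMat{\Findex}^{\Herm}\channelConstructVec{\Findex}}^2 = \channelConstructVec{\Findex}^{\Herm}\dataObsMat{\Findex}\dataObsMat{\Findex}^{\Herm}\channelConstructVec{\Findex}.
\end{equation*}

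\textbf{Step 2 (insert the SVD).} Substituting $\dataObsMat{\Findex} = \SVDMatLeft{\Findex}\SVDMatVals{\Findex}\SVDMatRight{\Findex}^{\Herm}$ and using the unitarity of $\SVDMatRight{\Findex}$, which is the only place its structure matters, gives
\begin{equation*}
\dataObsMat{\Findex}\dataObsMat{\Findex}^{\Herm} = \SVDMatLeft{\Findex}\SVDMatVals{\Findex}\SVDMatVals{\Findex}^{\Herm}\SVDMatLeft{\Findex}^{\Herm}.
\end{equation*}

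\textbf{Step 3 (diagonal structure).} The matrix $\SVDMatVals{\Findex}\SVDMatVals{\Findex}^{\Herm} \in \C^{\numRX\times\numRX}$ is diagonal. Its first $\numSV = \min(\numRX,\numD)$ diagonal entries are $\abs{\SVDValVec{\Findex}[\SVindex]}^2$, and its remaining entries are zero, which occurs when $\numRX > \numD$. Setting $\boldsymbol{v} = \SVDMatLeft{\Findex}^{\Herm}\channelConstructVec{\Findex}$, the quadratic form becomes
\begin{equation*}
\sum_{\SVindex=0}^{\numSV-1}\abs{\SVDValVec{\Findex}[\SVindex]}^2\abs{\boldsymbol{v}[\SVindex]}^2, \qquad \boldsymbol{v}[\SVindex] = \sum_{\RXindex}\SVDMatLeft{\Findex}^{*}[\RXindex,\SVindex]\channelConstruct(\UEposTest)[\RXindex,\Findex].
\end{equation*}
This is exactly the right-hand side of \eqref{eq:JMLfast_equivalence}. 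Multiplying by $1/\channelConstructEnergy(\UEposTest)$ and summing over $\Findex$ completes the argument.

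The only step that needs real care is the bookkeeping with the rectangular $\SVDMatVals{\Findex}$ in Step 3. One must check that $\SVDMatVals{\Findex}\SVDMatVals{\Findex}^{\Herm}$ truncates correctly to $\numSV$ terms in both regimes, $\numRX \le \numD$ and $\numRX > \numD$. In the second regime the extra $\numRX - \numD$ left singular vectors carry zero weight and drop out of the sum. Everything else is direct index manipulation.
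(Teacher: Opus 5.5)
Your proposal is correct and follows essentially the same route as the paper. The paper also rewrites the per-subcarrier term as the quadratic form $\channelConstructVec{\Findex}^{\Herm}\dataObsMat{\Findex}\dataObsMat{\Findex}^{\Herm}\channelConstructVec{\Findex}$ and uses the unitarity of $\SVDMatRight{\Findex}$; the only difference is cosmetic, in that it phrases the last step as the norm $\norm{\SVDMatVals{\Findex}^{\Herm}\SVDMatLeft{\Findex}^{\Herm}\channelConstructVec{\Findex}}^2$ via the square-root factor $\SVDMatLeft{\Findex}\SVDMatVals{\Findex}$, whereas you expand the diagonal $\SVDMatVals{\Findex}\SVDMatVals{\Findex}^{\Herm}$ directly and make the truncation to $\numSV$ terms explicit.
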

\begin{proof}
    See \hyperref[app:JMLfast_proof]{Appendix~\ref*{app:JMLfast_proof}}.
\end{proof}
When $\numD > \numRX$, we have $\numSV < \numD$, and computational savings may already be achieved if the \gls{svd} overhead is smaller than the gain obtained by summing \eqref{eq:JMLfast_equivalence} over $\numSV$ rather than $\numD$ terms.
Furthermore, \autoref{fig:SV_relevant_illustration} shows that only the first (i.e., largest) singular value contributes significantly as the \gls{snr} increases. 
This empirical observation is consistent with the single-target assumption (i.e., a single \gls{ue} transmits a communication signal over the observed \gls{tfb}) and hence motivates the following approximation, in which the summation over $\SVindex$ is truncated to its first term:
\begin{align}
    & \UEposEst^{\JMLfast} = \argmax_{\UEposTest} \frac{1}{\pilotEnergy} \sum_{\RXindex=0}^{\numRX-1} \abs{\sum_{\Findex=0}^{\numF-1} \pilotObsEq^{*}[\RXindex,\Findex] \stMat(\UEposTest)[\RXindex,\Findex]}^2 \nonumber \\
    & + \frac{1}{\channelConstructEnergy(\UEposTest)} \sum_{\Findex=0}^{\numF-1} \abs{\SVDValVec{\Findex}[0]}^2 \abs{\sum_{\RXindex=0}^{\numRX-1} \SVDMatLeft{\Findex}^{*}[\RXindex,0] \channelConstruct(\UEposTest)[\RXindex,\Findex]}^2. \label{eq:JMLfast_estimator}
\end{align}
While the validity of this approximation at low \gls{snr} is not theoretically guaranteed, Section~\ref{sec:results} demonstrates empirically that it yields the same performance as $\JMLa$ over a large \gls{snr} range, while reducing its computational 
complexity (detailed in Section~\ref{sec:Complexity}) to a similar order as the considered \gls{dd} baselines.

\begin{figure}[ht]
    \centering
    \includegraphics[width=1.0\linewidth]{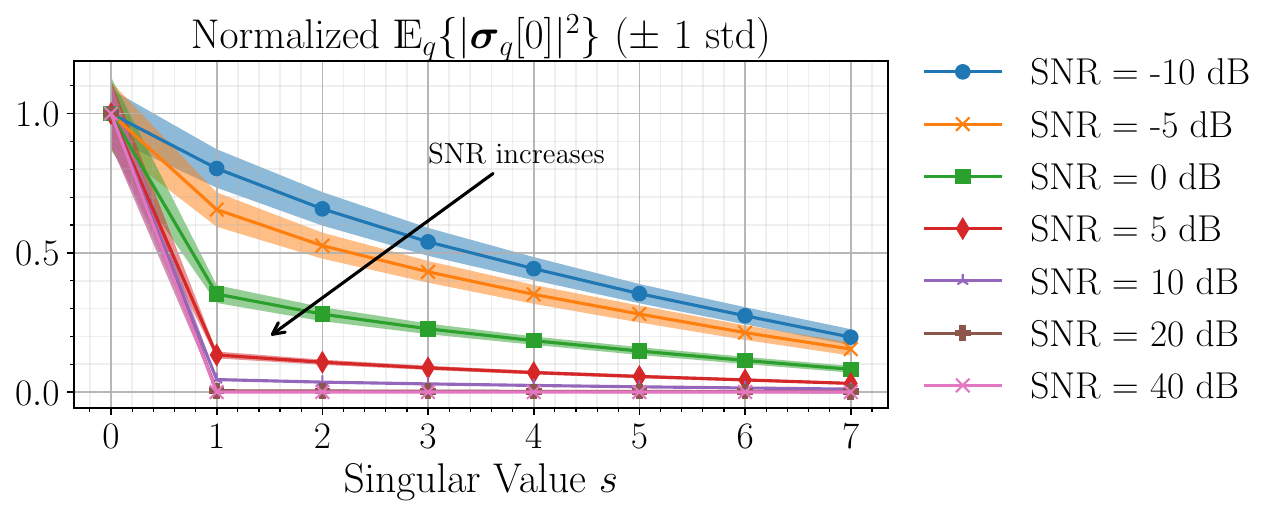}
    \caption{Normalized singular values $\E_{\Findex} \{\abs{\SVDValVec{\Findex}[\SVindex]}^2\}$ ($\pm$ one standard deviation indicated by the shaded area) as a function of \gls{snr}.
    This follows the simulation setup from Section~\ref{sec:results}, with all parameters set to their default values (see \autoref{tab:parameters}).
    The dominant contribution of the first singular value ($\SVindex = 0$) is apparent, supporting the truncation approximation in \eqref{eq:JMLfast_estimator}.}
    \label{fig:SV_relevant_illustration}
\end{figure}

\subsection{Extension to Related Sensing Scenarios}\label{sec:other_scenarios}

While the source localization scenario is considered as a representative instance, the proposed framework and estimators apply directly to other configurations, two of which are illustrated in \autoref{fig:other_scenarios}.

\begin{figure}[ht]
    \centering
    \hspace{-1.5cm}
    \resizebox{0.49\linewidth}{!}{
        \trimbox{1.25cm 0cm 1cm 0cm}{%
        \def\SRXCOLOR{UCLouvainDarkBlue}
\def\TXCOLOR{UCLouvainMediumBlue}
\def\targetCOLOR{black}

\newcommand{\antenna}[3][1]{%
    \draw[thick,#3] #2 -- ++(0,0.5*#1);
    \draw[thick,#3] #2 ++(0,0.5*#1) -- ++(-0.25*#1,0.25*#1);
    \draw[thick,#3] #2 ++(0,0.5*#1) -- ++(0.25*#1,0.25*#1);
}

\pgfmathsetseed{42} 
\newcommand\irregularcircle[2]{%
  \pgfextra{\pgfmathsetmacro{\len}{(#1) + rand*(#2)}}
  +(0:\len cm)
  \foreach \a in {10,20,...,350}{
    \pgfextra{\pgfmathsetmacro{\len}{(#1) + rand*(#2)}}
    -- +(\a:\len cm)
  } -- cycle
}

\def\ueX{-0.5}
\def\ueY{0}
\def\txX{-2.5}
\def\txY{0.5}
\def\ueradius{0.1}
\def\antennaheight{0.75}    
\def\antennapositions{(-3,-2), (-1,-3), (1,-2.5)}
\def\cpuY{-4.5}             
\def\cpuX{0.5}              


\begin{tikzpicture}
    \coordinate (tx) at (\txX,\txY);
    \antenna[1.0]{(tx)}{\TXCOLOR}
    \draw[\TXCOLOR!50!white] ($(tx)+(0,\antennaheight)$) circle (0.7);
    \draw[\TXCOLOR!35!white] ($(tx)+(0,\antennaheight)$) circle (0.9);
    \draw[\TXCOLOR!20!white] ($(tx)+(0,\antennaheight)$) circle (1.1);
    \draw[\TXCOLOR!5!white] ($(tx)+(0,\antennaheight)$) circle (1.3);
    \node[above,\TXCOLOR] at ($(tx)+(-0.45,0)$) {TX};
    \node[below,\TXCOLOR] at ($(tx)+(0,-0.15)$){$\boldsymbol{x}_{\mathrm{tx}}$};

    \coordinate (ue) at (\ueX,\ueY);
    \fill[\targetCOLOR] (ue) circle (\ueradius);
    \draw[\targetCOLOR, rounded corners={2*\ueradius}, fill=\targetCOLOR!10!white] (ue) \irregularcircle{3*\ueradius}{\ueradius/4};
    \node[above,\targetCOLOR] at ($(ue)+(0.5,0.3)$) {Target};
    \node[\targetCOLOR] at ($(ue)+(0,0)$) {$\UEpos$};

    \foreach [count=\i from 0] \pos in \antennapositions {
        \antenna[1.0]{\pos}{\SRXCOLOR}
        \path \pos coordinate (ant\i);
    }

    \node[below,\SRXCOLOR] at (ant0) {\small RX$_{0}$};
    \node[below,\SRXCOLOR] at (ant1) {\small RX$_{n}$};
    \node[below,\SRXCOLOR] at (ant2) {\small RX$_{N-1}$};

    \draw[white,opacity=0] ($(ant1)+(0,\antennaheight)$) circle (0.7);
    \draw[white,opacity=0] ($(ant1)+(0,\antennaheight)$) circle (0.9);
    \draw[white,opacity=0] ($(ant1)+(0,\antennaheight)$) circle (1.1);
    \draw[white,opacity=0] ($(ant1)+(0,\antennaheight)$) circle (1.3);

    \path let \p1=(ant1), \p2=(ant2),
              \n1={atan2(\y2-\y1,\x2-\x1)} in
        node[rotate=\n1]
        at ($(ant1)!0.5!(ant2)$) {$\cdots$};

    \path let \p1=(ant0), \p2=(ant1),
              \n1={atan2(\y2-\y1,\x2-\x1)} in
        node[rotate=\n1]
        at ($(ant0)!0.5!(ant1)$) {$\cdots$};

    \node[right,\SRXCOLOR] at ($(ant1)+(0,+0.25)$)
        {$\RXpos{\RXindex}$};

    \draw[<->,sloped]
        ($(ant1)+(0,\antennaheight)$)
        --
        ($(ue)+(0*\ueradius,-3.5*\ueradius)$)
        node[
            pos=0.5,
            align=center,
            above=0.02cm,
            fill=white,
            fill opacity=0.0,
            text opacity=1
        ]
        {$\norm{\UEpos-\RXpos{n}}$};

    \draw[<->,sloped]
        ($(tx)+(0,\antennaheight)$)
        --
        ($(ue)+(-2.5*\ueradius,2.5*\ueradius)$)
        node[
            pos=0.5,
            align=center,
            above=0.02cm,
            fill=white,
            fill opacity=0.0,
            text opacity=1
        ]
        {$\norm{\UEpos-\boldsymbol{x}_{\mathrm{tx}}}$};

    \node[\SRXCOLOR] at (0.5,-1.2) {SRX};

\end{tikzpicture}
        }
    }
    \resizebox{0.49\linewidth}{!}{
        \trimbox{1.25cm 0cm 1cm 0cm}{%
        \def\SRXCOLOR{UCLouvainMediumBlue}
\def\TXCOLOR{UCLouvainDarkBlue}
\def\targetCOLOR{black}

\newcommand{\antenna}[3][1]{%
    \draw[thick,#3] #2 -- ++(0,0.5*#1);
    \draw[thick,#3] #2 ++(0,0.5*#1) -- ++(-0.25*#1,0.25*#1);
    \draw[thick,#3] #2 ++(0,0.5*#1) -- ++(0.25*#1,0.25*#1);
}

\pgfmathsetseed{42} 
\newcommand\irregularcircle[2]{%
  \pgfextra{\pgfmathsetmacro{\len}{(#1) + rand*(#2)}}
  +(0:\len cm)
  \foreach \a in {10,20,...,350}{
    \pgfextra{\pgfmathsetmacro{\len}{(#1) + rand*(#2)}}
    -- +(\a:\len cm)
  } -- cycle
}

\def\ueX{-0.5}
\def\ueY{0}
\def\txX{-2.5}
\def\txY{0.5}
\def\ueradius{0.1}
\def\antennaheight{0.75}    
\def\antennapositions{(-3,-2), (-1,-3), (1,-2.5)}
\def\cpuY{-4.5}             
\def\cpuX{0.5}              


\begin{tikzpicture}

    \coordinate (tx) at (\txX,\txY);
    \antenna[1.0]{(tx)}{\TXCOLOR}
    \draw[white,opacity=0] ($(tx)+(0,\antennaheight)$) circle (0.7);
    \draw[white,opacity=0] ($(tx)+(0,\antennaheight)$) circle (0.9);
    \draw[white,opacity=0] ($(tx)+(0,\antennaheight)$) circle (1.1);
    \draw[white,opacity=0] ($(tx)+(0,\antennaheight)$) circle (1.3);

    \node[above,\TXCOLOR] at ($(tx)+(-0.45,0)$) {SRX};
    \node[below,\TXCOLOR] at ($(tx)+(0,-0.15)$){$\boldsymbol{x}_{\mathrm{srx}}$};

    \coordinate (ue) at (\ueX,\ueY);
    \fill[\targetCOLOR] (ue) circle (\ueradius);
    \draw[\targetCOLOR, rounded corners={2*\ueradius}, fill=\targetCOLOR!10!white] (ue) \irregularcircle{3*\ueradius}{\ueradius/4};
    \node[above,\targetCOLOR] at ($(ue)+(0.5,0.3)$) {Target};
    \node[\targetCOLOR] at ($(ue)+(0,0)$) {$\UEpos$};

    \foreach [count=\i from 0] \pos in \antennapositions {
        \antenna[1.0]{\pos}{\SRXCOLOR}
        \path \pos coordinate (ant\i);
        
    }

    \draw[\SRXCOLOR!50!white] ($(ant1)+(0,\antennaheight)$) circle (0.7);
    \draw[\SRXCOLOR!35!white] ($(ant1)+(0,\antennaheight)$) circle (0.9);
    \draw[\SRXCOLOR!20!white] ($(ant1)+(0,\antennaheight)$) circle (1.1);
    \draw[\SRXCOLOR!5!white] ($(ant1)+(0,\antennaheight)$) circle (1.3);

    \node[below,\SRXCOLOR] at (ant0) {\small TX$_{0}$};
    \node[below,\SRXCOLOR] at (ant1) {\small TX$_{n}$};
    \node[below,\SRXCOLOR] at (ant2) {\small TX$_{N-1}$};

    \path let \p1=(ant1), \p2=(ant2),
              \n1={atan2(\y2-\y1,\x2-\x1)} in
        node[rotate=\n1]
        at ($(ant1)!0.5!(ant2)$) {$\cdots$};

    \path let \p1=(ant0), \p2=(ant1),
              \n1={atan2(\y2-\y1,\x2-\x1)} in
        node[rotate=\n1]
        at ($(ant0)!0.5!(ant1)$) {$\cdots$};

    \node[right,\SRXCOLOR] at ($(ant1)+(0,+0.25)$)
        {$\RXpos{\RXindex}$};

    \draw[<->,sloped]
        ($(ant1)+(0,\antennaheight)$)
        --
        ($(ue)+(0*\ueradius,-3.5*\ueradius)$)
        node[
            pos=0.5,
            align=center,
            above=0.02cm,
            fill=white,
            fill opacity=0.0,
            text opacity=1
        ]
        {$\norm{\UEpos-\RXpos{n}}$};

    \draw[<->,sloped]
        ($(tx)+(0,\antennaheight)$)
        --
        ($(ue)+(-2.5*\ueradius,2.5*\ueradius)$)
        node[
            pos=0.5,
            align=center,
            above=0.02cm,
            fill=white,
            fill opacity=0.0,
            text opacity=1
        ]
        {$\norm{\UEpos-\boldsymbol{x}_{\mathrm{srx}}}$};

\end{tikzpicture}
        }
    }
    \caption{Illustration of two additional sensing scenarios to which the proposed framework directly applies: multistatic sensing (left) and distributed transmitters with a single-antenna \gls{srx} (right).}
    \label{fig:other_scenarios}
\end{figure}

\textbf{Multistatic opportunistic sensing} (left panel): a single-antenna \gls{tx} at known position $\boldsymbol{x}_{\mathrm{tx}}$ transmits an \gls{ofdm} communication signal to an intended receiver. 
The \gls{srx} exploits the signal echoes reflected by a target (the main reflector) at position $\UEpos$—which may be the intended \gls{ue}, another device, or any passive scatterer. 
Under the single-bounce assumption, the channel model retains the form $\channelMat(\UEpos)[\RXindex,\Findex] = \channelcoeff[\RXindex] \stMat(\UEpos)[\RXindex,\Findex]$, where $\channelcoeff[\RXindex]$ now encapsulates path loss, radar cross-section, and propagation phase. 
The steering matrix is adapted to account for the bistatic range at each node $\RXindex$:
\begin{equation}\label{eq:stMat_2}
    \stMat(\UEpos)[\RXindex,\Findex] = e^{-\jc \carrierWn ( \norm{\UEpos - \boldsymbol{x}_{\mathrm{tx}}} + \norm{\UEpos - \RXpos{\RXindex}} ) \Findex \frac{\Fspacing}{\carrierF}}.
\end{equation}

\textbf{Distributed transmitter} (right panel): $\numRX$ single-antenna \gls{tx} nodes transmit the same communication signal, whose reflections from the target are collected at a single-antenna \gls{srx} at position $\boldsymbol{x}_{\mathrm{srx}}$. Provided the transmitted signals are mutually orthogonal (e.g., via time or frequency division), the \gls{srx} can separate their contributions, and the channel model remains identical in form. 
Indexing the \gls{tx} nodes by $\RXindex$, the steering matrix becomes
\begin{equation}\label{eq:stMat_3}
    \stMat(\UEpos)[\RXindex,\Findex] = e^{-\jc \carrierWn ( \norm{\UEpos - \RXpos{\RXindex}} + \norm{\UEpos - \boldsymbol{x}_{\mathrm{srx}}} ) \Findex \frac{\Fspacing}{\carrierF}}.
\end{equation}

In both cases, all proposed estimators remain valid; only the steering matrix definition needs to be substituted accordingly with \eqref{eq:stMat_2} or \eqref{eq:stMat_3}. 
The hybrid \gls{toa}-\gls{tdoa} interpretation revealed in Section~\ref{sec:JML_ToA_TDoA} also carries over, with range circle loci replaced by ellipsoidal bistatic range loci.

Finally, extension to multi-target scenarios is straightforward for the passive sensing cases above, where multiple ($K$) targets contribute additively to the channel model. 
The proposed objective functions could be directly reused\footnotemark, and existing multi-target techniques—e.g., successive interference cancellation \cite{jeong_interference_2025, lee_joint_2020}, orthogonal matching pursuit, least squares \cite{pesavento_three_2023} or G-iMUSIC \cite{willame_g-imusic_2026}—applied to extract the $K$ target positions, although this extension is outside the scope of this paper and left for future work.
Extension to multi-\gls{ue} source localization is less relevant in practice, as communication \glspl{ue} in uplink are typically separated via orthogonal multiple access schemes.

\footnotetext{For $\JMLfast$, the truncation is extended to the first $K$ singular values rather than only the first one.}
\section{Numerical Results}\label{sec:results}

This section presents \gls{mc} simulation results assessing the localization performance of the proposed estimators against existing baselines. 
Furthermore, the objective functions are visualized and analyzed theoretically, jointly providing geometric insight into the behavior of the proposed methods and establishing the hybrid \gls{toa}-\gls{tdoa} interpretation.

\subsection{Baselines}\label{sec:results_baselines}

All methods follow a two-stage localization procedure: an initial \textit{coarse estimate} obtained via a grid search over $\Ngrid$ discrete points spanning the scene, followed by a \textit{refinement} stage employing the Nelder-Mead optimization algorithm \cite{nelder_simplex_1965}. 
The following baselines are considered for comparison.

\subsubsection{\texorpdfstring{\textnormal{\Pmethod}}{Pilot}}
The simplest estimator relies solely on the pilot component, discarding the data part entirely.
This is the traditional approach employed in current systems, obtained by retaining only the pilot term of \eqref{eq:JMLa_estimator} (equivalently, by setting $\numD=0$ in \eqref{eq:JMLc_estimator}) and discarding constant coefficients:
\begin{align}
    \UEposEst^{\Pmethod} = \argmax_{\UEposTest} & 
    \sum_{\RXindex=0}^{\numRX-1} \Bigg|\sum_{\Findex=0}^{\numF-1} \pilotObsEq^{*}[\RXindex,\Findex] \stMat(\UEposTest)[\RXindex,\Findex]\Bigg|^2.
    \label{eq:P_estimator}
\end{align}
Note that this estimator originates from the \gls{jml} philosophy, restricted to pilot observations and a single \gls{np} vector term $\channelcoeff$. 
Interestingly, applying the \gls{mml} philosophy with a circularly symmetric complex Gaussian prior on $\channelcoeff$ yields the same estimator (see Equation~(19) in \cite{reniers_localization_2026}).

\subsubsection{\texorpdfstring{\textnormal{\PDmethod}}{Genie}}
This estimator represents the best theoretically achievable performance, established under the assumption that the data symbols are perfectly known (or perfectly demodulated) at the \gls{srx}.
Under this assumption, all $\numP+\numD$ symbols are treated as pilots, and the pilot-equalized observations 
$\pilotObsEq$ in \eqref{eq:P_estimator} are replaced by their full-frame counterpart $\ObsEq \in \C^{\numRX \times \numF}$, defined as
\begin{equation}
    \ObsEq[\RXindex,\Findex] = \pilotObsEq[\RXindex,\Findex] + \sum_{\Dindex=0}^{\numD-1} \dataMat^{*}[\Findex,\Dindex] \dataObs[\RXindex,\Findex,\Dindex].
\end{equation}

\subsubsection{\texorpdfstring{\textnormal{\textsc{Decision-Directed}}}{Decision-Directed}}
In the \gls{dd} approach, the \gls{ue} position is estimated by leveraging the \textit{data symbol estimates as additional pilots}.
This method follows the structure of $\PDmethod$, substituting the known $\dataMat$ with its estimate, and thus suffers from potential demodulation errors.
Since the \gls{srx} is distributed, demodulation can be carried out in either a \textit{distributed} or \textit{centralized} manner.
In the distributed case, each of the $\numRX$ nodes independently estimates its $\numF$ channel coefficients from the pilot observations and performs local data demodulation, producing $\numRX$ potentially different sequences collected in $\dataTensEst \in \C^{\numRX \times \numF \times \numD}$.
In the centralized case, each node forwards its data observations to the \gls{cpu}, which estimates the channel coefficients and performs data demodulation to obtain a single sequence $\dataMatEst \in \C^{\numF \times \numD}$.
In both cases, channel and data estimation rely on the \gls{lmmse} criterion\footnote{A \gls{zf}-based estimator was also considered but yields results very similar to, though marginally worse than, the \gls{lmmse}-based approach; it is therefore omitted for clarity.}.
The \textit{distributed} \gls{dd} estimator takes the form
\begin{align}
    & \UEposEst^{\mathrm{DD}} = \argmax_{\UEposTest} \sum_{\RXindex=0}^{\numRX-1} \Bigg| \sum_{\Findex=0}^{\numF-1} \ObsEqDD^{*}[\RXindex,\Findex] \stMat(\UEposTest)[\RXindex,\Findex]\Bigg|^2 , \, \text{with} \label{eq:DD_estimator} \\
    & \ObsEqDD[\RXindex,\Findex] \triangleq \pilotObsEq[\RXindex,\Findex] + \sum_{\Dindex=0}^{\numD-1} \dataTensEst^{*}[\RXindex,\Findex,\Dindex]  \dataObs[\RXindex,\Findex,\Dindex], \label{eq:DD_estimator_ObsEq}
\end{align}
while the \textit{centralized} variant is obtained by replacing $\dataTensEst[\RXindex,\Findex,\Dindex]$ with $\dataMatEst[\Findex,\Dindex]$ in \eqref{eq:DD_estimator_ObsEq}.
Data estimates can either be projected onto the nearest constellation point, referred to as \textit{hard} decisions (\gls{hdd}), or left in the complex plane without mapping, referred to as \textit{soft} decisions (\gls{sdd}).
The \gls{sdd} case is shown to consistently underperform \gls{hdd} in \cite{reniers_localization_2026}; it is therefore not considered further, and \gls{hdd} and \gls{dd} are used interchangeably in the following.
Note that these \gls{dd} methods operate directly on symbols without accounting for error-correction coding, as we assume the passive system either lacks access to coding information or prefers to avoid performing such computationally intensive decoding.

\subsubsection{\texorpdfstring{\textnormal{\textsc{Marginal Maximum Likelihood}}}{Marginal Maximum Likelihood}}
As discussed in Section~\ref{sec:introduction}, an alternative philosophy to eliminate \glspl{np} dependency consists in marginalizing the conditional likelihood, i.e., integrating over the prior distribution of the \glspl{np}.
The \gls{mml} estimation problem is expressed as
\begin{equation}\label{eq:MML_formulation}
     \UEposEst^{\MMLo} =  \argmax_{\UEposTest} \underbrace{\E_{\dataMat,\channelcoeff} 
    \left\{ \Likelihood(\pilotObs, \dataObs | \dataMat, \channelcoeff; \UEposTest) \right\}}_{\Likelihood(\pilotObs, \dataObs ; \UEposTest) }
\end{equation}
where $\Likelihood(\pilotObs, \dataObs | \dataMat, \channelcoeff ; \UEposTest)$ denotes the likelihood of observing $(\pilotObs, \dataObs)$ conditioned on the stochastic \glspl{np} $(\dataMat, \channelcoeff)$ and given the candidate \gls{poi} $\UEposTest$.
In \cite{reniers_localization_2026}, we derived the optimal estimator based on \eqref{eq:MML_formulation}, which was shown to be computationally intractable (even for very small \glspl{tfb}).
We then introduced a tractable approximation $\MMLa$ based on the same approximation as in \eqref{eq:approximation}.
Furthermore, we introduced a novel analytical acceleration $\MMLfast$ for \gls{qam} constellations by leveraging their geometrical structure, reducing the complexity from $\Compl(\constSize)$ to $\Compl(\sqrt{\constSize})$, where $\constSize$ denotes the constellation size, while yielding the same solution, i.e., $\UEposEst^{\MMLfast} = \UEposEst^{\MMLa}$.


\subsection{Simulation Framework and Metrics}

To evaluate the localization performance improvement of the proposed method against the baselines, we report the \gls{rmse} of the position estimates
\begin{equation}
    \rmse\{\UEposEst^{\mathrm{i}}\} = \sqrt{\Exp{\norm{\UEposEst^{\mathrm{i}} - \UEpos}^2}},
\end{equation}
where $\mathrm{i}$ refers to the chosen estimator.
We also use the Hit-Rate, defined as the proportion of estimates falling within the main lobe of the likelihood, i.e., where the absolute position error is smaller than half the intrinsic range resolution \cite{richards_principles_2010}
\begin{equation}
    \rangeRes \triangleq \frac{c}{\BW} = \frac{\carrierF}{\numF \Fspacing} \; \carrierWl.
\end{equation}
Both metrics are given as a function of the per-node average \gls{snr}, defined as
\begin{equation}
    \snr = \frac{\E_{\RXindex,\channelcoeff} \left\{ \abs{\channelcoeff[\RXindex]}^2 \right\} \constVar}{\noiseVar},
\end{equation}
where $\constVar$ denotes the symbol variance, assumed identical for pilot and data symbols and normalized to $\constVar=1$.
Under the stationarity and \gls{los} assumptions, no fading is modeled.
Following the physical optics approximation \cite{kishk_high_2011}, the channel coefficients are simulated as
\begin{equation}
    \channel[\RXindex] = e^{\jc \channelPhase_{\RXindex}} \frac{1}{\norm{\UEpos - \RXpos{\RXindex}}},
\end{equation}
where $\channelPhase_{\RXindex} \sim \U_{[0,2\pi)}$ accounts for a random phase at each node $\RXindex$.

While the developed framework is general and applies to any \gls{das} geometry, the simulations consider a \glsdesc{uca} of radius $\SRXradius$ and aperture $\SRXaperture \in (0, 2\pi]$.
The \gls{ue} position is drawn uniformly at random from a disk of radius\footnotemark $\Sradius < \SRXradius$.
\footnotetext{A small margin is introduced to avoid the singularity of the $\norm{\UEpos - \RXpos{\RXindex}}^{-1}$ path-loss model as $\norm{\UEpos - \RXpos{\RXindex}}\rightarrow0$.}
The attenuation at each node differs, while $\noiseVar$ is held constant across nodes and determined from the average \gls{snr}.
The latter is defined as $\snr = 2 / (\Sradius^2 \noiseVar)$, which follows from the approximation \begingroup\small$\E_{\RXindex, \channelcoeff}\{ \abs{\channelcoeff[\RXindex]}^2\} = \E_{\RXindex, \UEpos}\{ \norm{\UEpos - \RXpos{\RXindex}}^{-2} \} \approx 2/\Sradius^2$\endgroup, for this configuration.

Unless otherwise stated, the simulation parameters are listed in \autoref{tab:parameters}, where the selected frequency band follows prospective recommendations for \gls{isac} in \gls{6g} systems \cite{baduge_frequency_2025,bazzi_coverage_2026}.

\begin{table}[t]
    \centering
    \caption{Default Parameters used in Simulation.}
    \label{tab:parameters}
        \begin{tabular}{|l|c|}
            \hline
            \multicolumn{2}{|c|}{\textbf{\gls{ue}}} \\ \hline \hline
            Carrier frequency $\carrierF$ & $\SI{7.2}{\giga\hertz}$ \\
            Subcarrier spacing $\Fspacing$ & $\SI{45}{\kilo\hertz}$ \\ 
            Number of subcarriers $\numF$ & $160$ \\
            Number of \gls{ofdm} pilot symbols $\numP$ & $1$ \\
            Number of \gls{ofdm} data symbols $\numD$ & $35$ \\
            Pilots & \glsentryshort{bpsk} \\
            Data constellation $\constSet_{\constMap(\Findex,\Dindex)}$ & $256$-\gls{qam} ($\forall \Findex,\Dindex$)\\ 
            \hline \hline 
            \multicolumn{2}{|c|}{\textbf{\gls{srx}}} \\ \hline \hline 
            Number of nodes $\numRX$ & $8$ \\
            Radius $\SRXradius$ & $\SI{5000}{\carrierWl} = \SI{208.3}{\meter}$ \\
            Aperture $\SRXaperture$ & $2\pi~\si{\radian}$ \\
            \hline \hline 
            \multicolumn{2}{|c|}{\textbf{Simulation}} \\ \hline \hline 
            Scene radius $\Sradius$ & $\SI{4800}{\carrierWl} = \SI{200}{\meter}$ \\
            \makecell[l]{Number of grid points $\Ngrid$ \\ \begingroup\scriptsize(square area of side $2\Sradius$)\endgroup} & $40^2=1600$ \\
            Refinement optimization method & $\texttt{Nelder-Mead}$ \cite{nelder_simplex_1965} \\
            Number of \gls{mc} iterations $\Nmc$ & $3000$ \\
            \hline 
        \end{tabular}
\end{table}

\subsection{Localization Performance}

First, we characterize the performance of the proposed approximations with respect to $\JMLc$, and then compare them against the baselines.

\subsubsection{Quality of the Approximations}
\autoref{fig:RMSE_JML_comparison} depicts the \gls{rmse} as a function of the \gls{snr} for $\numD=16$ and $\numD=36$, respectively, and with a single \gls{ofdm} pilot symbol ($\numP=1$).
A small number of subcarriers is chosen ($\numF=40$) to prevent the computational runtime of $\JMLc$ from becoming prohibitive, as discussed in Section~\ref{sec:Complexity}.
On one hand, $\JMLc$ outperforms $\JMLa$, as expected, and the performance gap introduced by the approximation increases with $\numD$.
As discussed in Section~\ref{sec:Complexity}, this comes at a significant increase in computational requirements (see \autoref{fig:Time_vs_D_JML_comparison}).
On the other hand, the low-rank acceleration $\JMLfast$ performs similarly to $\JMLa$, though marginally worse at very low \gls{snr}.

\begin{figure}[ht]
    \includegraphics[width=0.95\linewidth]{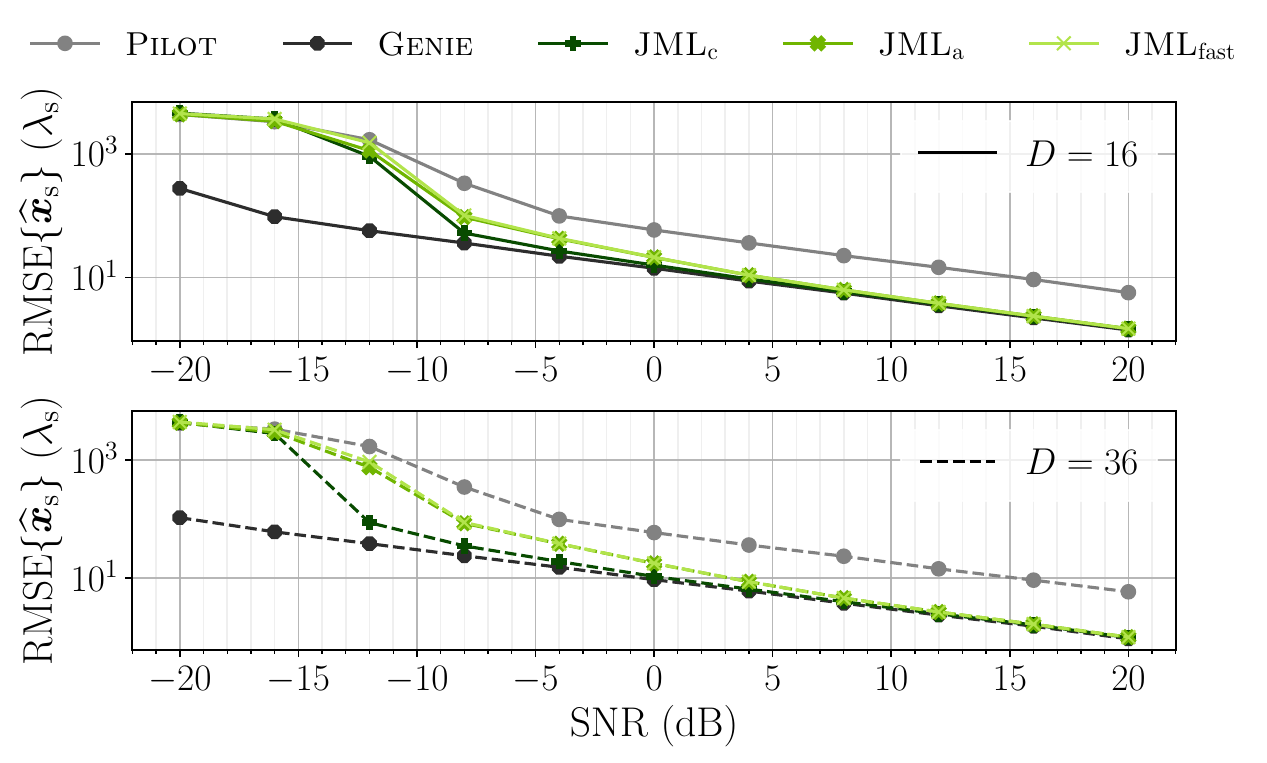}
    \caption{\gls{rmse} as a function of the \gls{snr} for different number of data symbols $\numD$. Parameters: $\numF = 40$, $\Fspacing = \SI{180}{\kilo\hertz}$ (yielding the same bandwidth as in \autoref{tab:parameters}); all other parameters are left at their default values.
    Note that the \gls{rmse} is normalized with respect to $\carrierWl$ (here \SI{4.17}{\centi\meter}).}
    \label{fig:RMSE_JML_comparison}
\end{figure}

\subsubsection{Comparison with Baselines}
\autoref{fig:RMSE_SNR_constellations} shows the \gls{rmse} and Hit Rate of the proposed estimators $\JMLa$ and $\JMLfast$ against the considered baselines, as a function of the \gls{snr} and for different data constellations.
The following observations can be drawn:
\begin{itemize}
    \item As established in \cite{reniers_localization_2026}, the \textit{distributed} \gls{dd} baseline underperforms its \textit{centralized} counterpart, while both converge to the $\PDmethod$ bound when all data symbols are correctly demodulated.
    \item The proposed estimators $\JMLa$, $\JMLfast$, and $\MMLfast$ all achieve superior localization performance over the pilot-only baseline, with an \gls{rmse} reduction of up to a factor of $5.7$. 
    Furthermore, they converge to the $\PDmethod$ performance at a lower \gls{snr} than the \gls{dd} baselines, with an \gls{snr} gain of up to $\SI{4.8}{\decibel}$ over the latter.
    \item The low-rank acceleration does not degrade the performance of $\JMLa$ over a large \gls{snr} range, thereby providing a computational gain with no performance penalty. 
    As highlighted by the Hit Rate curves, a marginal degradation in localization capability is observed at very low \gls{snr}, as anticipated in Section~\ref{sec:JML_fast}.
\end{itemize}

\begin{figure*}
    \includegraphics[width=1.0\linewidth]{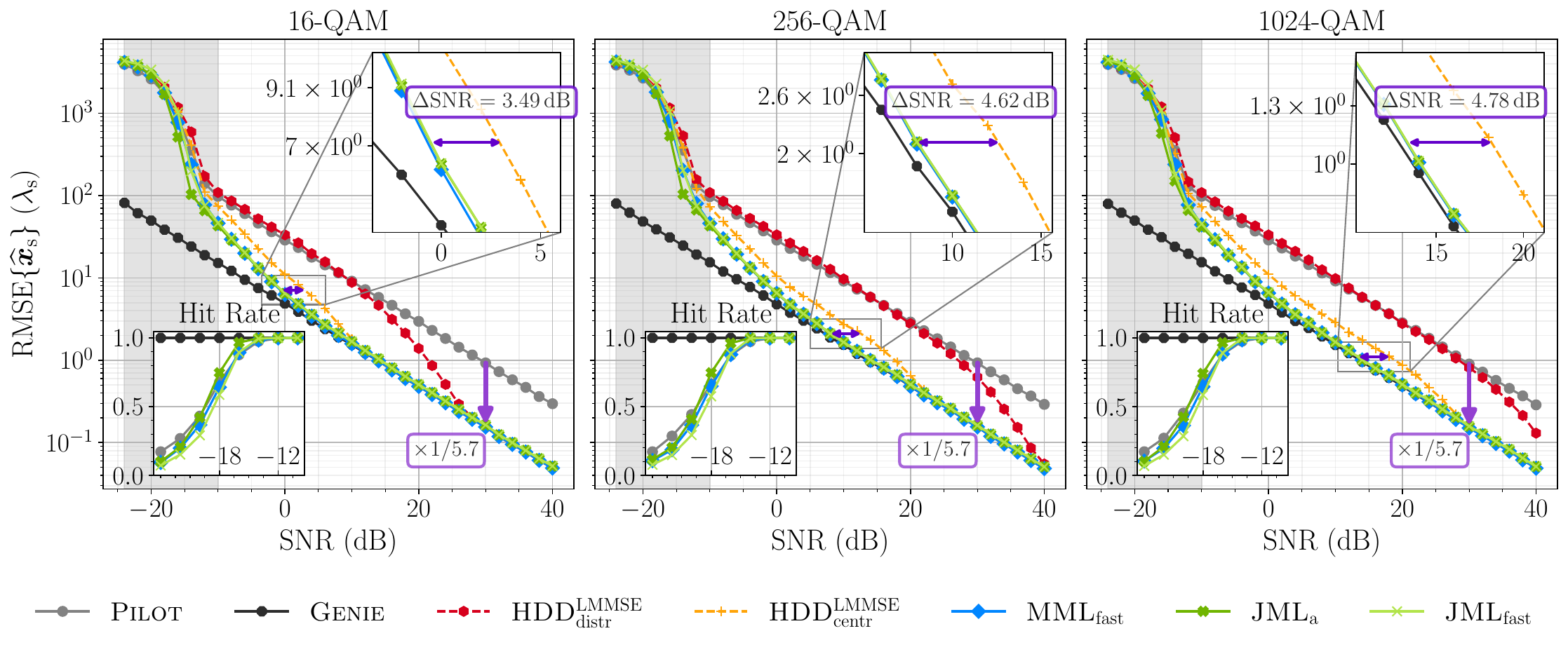}
    \caption{\gls{rmse} and Hit-Rate as a function of the \gls{snr} (step of \SI{2}{\decibel}) for different data constellations. The Hit-Rate is only displayed over the \gls{snr} range $[-24, 10]~\si{\decibel}$ (indicated by the gray area), as all methods achieve $\SI{100}{\percent}$ Hit-Rate at higher \gls{snr} values. The \gls{rmse} is normalized by $\carrierWl$ (here \SI{4.17}{\centi\meter}).
    }
    \label{fig:RMSE_SNR_constellations}
\end{figure*}

While $\MMLfast$ and the proposed \gls{jml} estimators exhibit similar localization performance, both unaffected by increasing modulation order, the key advantage of $\JMLfast$ (and $\JMLa$) lies in the computational cost required to achieve this performance.
As discussed in Section~\ref{sec:Complexity} (see \autoref{fig:Time_vs_ConstSize}), the complexity of $\MMLfast$ grows with constellation size, while that of $\JMLa$ and $\JMLfast$ remains invariant and considerably lower.
Furthermore, the proposed \gls{jml} estimators are fully \textit{constellation-agnostic}: regardless of the constellation employed, the \gls{cpu} requires no knowledge of it, and both localization performance and computational cost remain unchanged.
The \gls{mml} approach, on the other hand, requires knowledge of the constellation—which is not guaranteed in practical opportunistic scenarios—as well as prior distribution assumptions on the \glspl{np}.
Finally, it is worth noting that $\JMLa$ marginally outperforms $\MMLfast$ at low \gls{snr}, as reflected in both the \gls{rmse} and Hit Rate curves.

\vspace*{0.5cm}
\subsubsection{Impact of System Parameters}

This section analyzes the impact of system parameters on localization performance.
A summary of the results, considering both the performance variations discussed here and the computational cost analyzed in Section~\ref{sec:Complexity}, is provided in \autoref{tab:performance_analysis}.\\[0.2em]

\begin{table}[ht]
    \centering
    \setlength{\tabcolsep}{3pt}
    \renewcommand{\arraystretch}{1.2}
    \caption{Impact of system parameters on localization \gls{rmse} (R) and computational costs (C). Note: $\Downarrow$ indicates a stronger reduction than $\downarrow$, and \textcolor{gray}{$\uparrow_{\mathrm{n}}$} indicates negligible increase.}
    \label{tab:performance_analysis}

    \resizebox{\linewidth}{!}{
    \begin{tabular}{|l||cc|cc|cc|cc|cc||cc|}
        \hline

        & \multicolumn{2}{c|}{\textcolor{Pcolor}{$\Pmethod$}}
        & \multicolumn{2}{c|}{\textcolor{DDcentrcolor}{$\mathrm{HDD}$}}
        & \multicolumn{2}{c|}{\textcolor{MMLfastcolor}{$\MMLfast$}}
        & \multicolumn{2}{c|}{\textcolor{JMLacolor}{$\JMLa$}}
        & \multicolumn{2}{c||}{\textcolor{JMLfastcolor}{$\JMLfast$}}
        & \multicolumn{2}{c|}{Figure} \\

        \cline{2-13}

        & R & C
        & R & C
        & R & C
        & R & C
        & R & C
        & R & C \\

        \hline \hline

        $\uparrow \numP$
        & $\downarrow$ & \textcolor{gray}{$\uparrow_{\mathrm{n}}$}
        & $\downarrow$ & \textcolor{gray}{$\uparrow_{\mathrm{n}}$}
        & $\downarrow$ & \textcolor{gray}{$\uparrow_{\mathrm{n}}$}
        & $\downarrow$ & \textcolor{gray}{$\uparrow_{\mathrm{n}}$}
        & $\downarrow$ & \textcolor{gray}{$\uparrow_{\mathrm{n}}$}
        & -- & -- \\ \hline

        $\uparrow \numD$
        & -- & --
        & $\downarrow$ & $\uparrow$
        & $\downarrow$ & $\uparrow$
        & $\downarrow$ & $\uparrow$
        & $\downarrow$ & \textcolor{gray}{$\uparrow_{\mathrm{n}}$}
        & \ref{fig:RMSE_JML_comparison} & \ref{fig:Time_vs_D_JML_comparison} \\ \hline
        
        $\uparrow \constSize$
        & -- & --
        & \colorbox{gray!20}{$\uparrow$} & \colorbox{gray!20}{$\uparrow$}
        & \colorbox{gray!20}{--} & \colorbox{gray!20}{$\Uparrow$}
        & \colorbox{gray!20}{--}  & \colorbox{gray!20}{--} 
        & \colorbox{gray!20}{--}  & \colorbox{gray!20}{--} 
        & \ref{fig:RMSE_SNR_constellations} & \ref{fig:Time_vs_ConstSize} \\ \hline

        $\uparrow \numRX$
        & $\Downarrow$ & $\uparrow$
        & $\Downarrow$ & $\uparrow$
        & $\Downarrow$ & $\uparrow$
        & $\Downarrow$ & $\uparrow$
        & $\Downarrow$ & $\uparrow$
        & \ref{fig:RMSE_vs_N} & -- \\ \hline

        $\uparrow \numF$
        & $\Downarrow$ & $\uparrow$
        & $\Downarrow$ & $\uparrow$
        & $\Downarrow$ & $\uparrow$
        & $\Downarrow$ & $\uparrow$
        & $\Downarrow$ & $\uparrow$
        & \ref{fig:RMSE_vs_Q} & -- \\ \hline

    \end{tabular}
    }
\end{table}

\autoref{fig:RMSE_vs_N} shows the \gls{rmse} as a function of $\numRX$ for multiple \gls{snr} values.
Increasing $\numRX$ naturally reduces the \gls{rmse} for all methods, though the performance gap between methods depends on the operating \gls{snr}, i.e., whether the observations carry near-perfect positioning information or not.
Nevertheless, at very low \gls{snr} (e.g., \SI{-16}{\decibel}), increasing $\numRX$ favors the proposed \gls{jml} estimators, while adding nodes at this noise level does not yield consistent improvement for the \gls{mml} method.\\[0.2em]

\autoref{fig:RMSE_vs_Q} depicts the \gls{rmse} with respect to $\numF$ for different \gls{snr} values.
As expected, increasing $\numF$ improves localization performance across all methods.
Similarly, the performance gap between the methods varies with the \gls{snr}.
Below the \gls{snr} threshold at which all methods converge to the $\PDmethod$ bound, the \gls{jml} and \gls{mml} estimators benefit more substantially from additional subcarriers than the \gls{dd} baselines.\\[0.2em]

\begin{figure}[!ht]
    \includegraphics[width=0.95\linewidth]{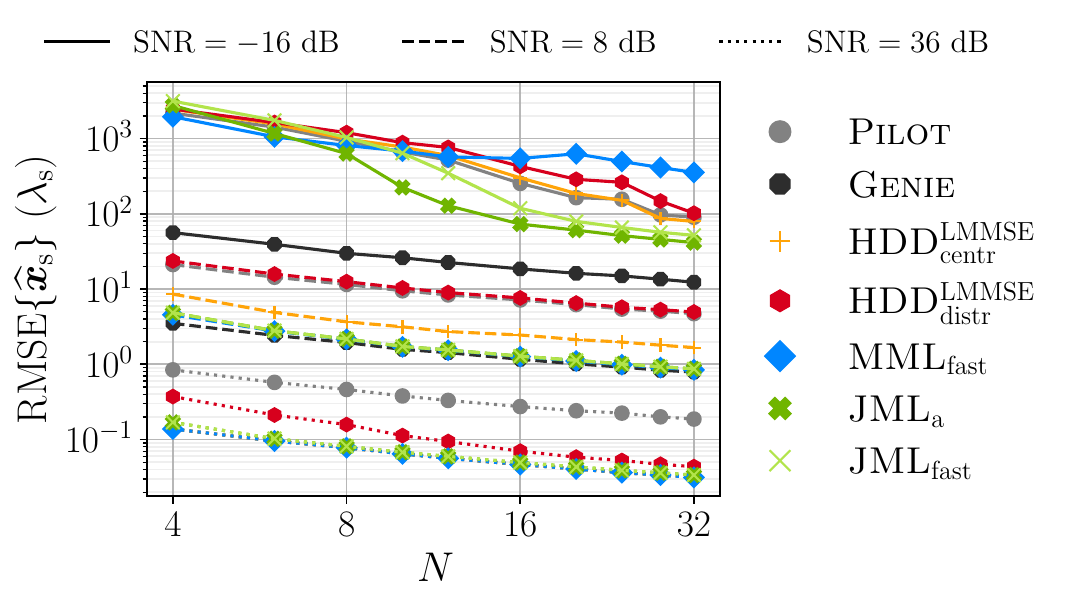}
    \caption{\gls{rmse} as a function of the number of nodes $\numRX$, for different \gls{snr} values. All other parameters are left at their default values.
    The \gls{rmse} is normalized with respect to $\carrierWl$ (here \SI{4.17}{\centi\meter}).
    Note that $\JMLa$, $\JMLfast$, $\HDDcentrmethod$, and $\MMLfast$ all coincide with the $\PDmethod$ bound at $\snr = \SI{36}{\decibel}$.}
    \label{fig:RMSE_vs_N}
\end{figure}

\begin{figure}[!ht]
    \includegraphics[width=0.95\linewidth]{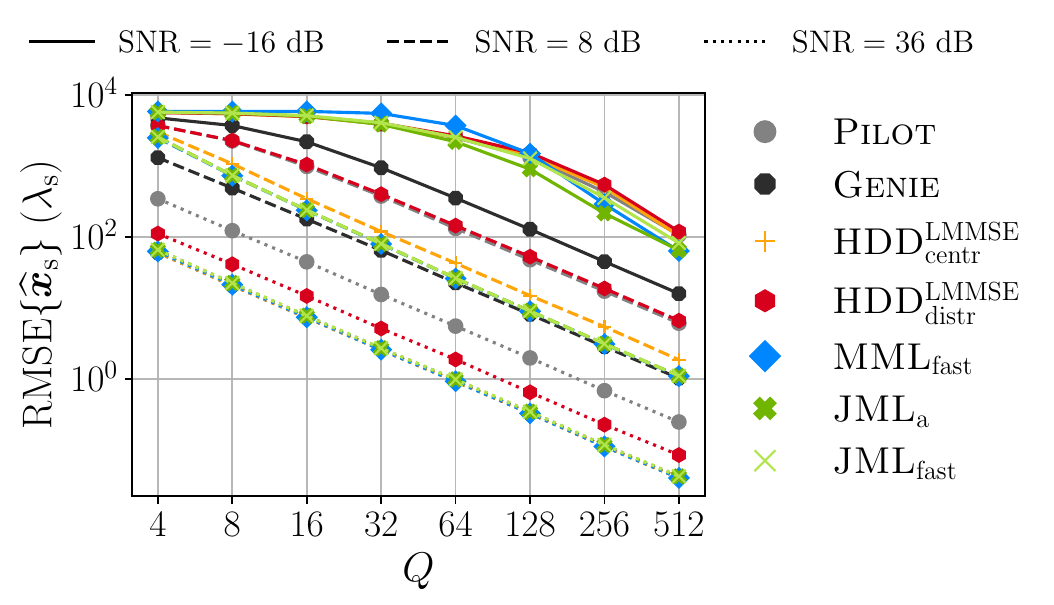}
    \caption{\gls{rmse} as a function of the number of nodes $\numF$, for different \gls{snr} values. All other parameters are left at their default values.
    The \gls{rmse} is normalized with respect to $\carrierWl$ (here \SI{4.17}{\centi\meter}).
    Note that $\JMLa$, $\JMLfast$, $\HDDcentrmethod$, and $\MMLfast$ all coincide with the $\PDmethod$ bound at $\snr = \SI{36}{\decibel}$.
    }
    \label{fig:RMSE_vs_Q}
\end{figure}

\subsection{Hybrid ToA-TDoA Interpretation}\label{sec:JML_ToA_TDoA}

This section provides an interpretation of the proposed \gls{jml} estimators that offers insight into the 
geometric behavior of the proposed methods.
Specifically, we show that the proposed \gls{jml} estimators operate as \textit{single-step hybrid \gls{toa}--\gls{tdoa}} estimators.
This is derived analytically for $\JMLa$, which, as shown previously, is well approximated by $\JMLfast$. 
Additionally, a similar behavior is observed for $\JMLc$.
For convenience, we define the pilot and data terms of $\JMLa$ \eqref{eq:JMLa_estimator} as $\pilotTermLL(\UEposTest)$ and $\dataTermLL(\UEposTest)$, respectively.

Using the steering matrix definition into the pilot term of \eqref{eq:JMLa_estimator}, the latter becomes
\begingroup\small
\begin{equation}
    \pilotTermLL(\UEposTest) = \frac{1}{\pilotEnergy} \sum_{\RXindex=0}^{\numRX-1} \abs{\sum_{\Findex=0}^{\numF-1} \pilotObsEq^{*}[\RXindex,\Findex] e^{-\jc 2 \pi \ToA(\UEposTest, \RXpos{\RXindex}) \Findex\Fspacing}}^2,
\end{equation}
\endgroup
which reveals the \gls{toa} between the candidate position and the $\RXindex$-th node, $\ToA(\UEposTest, \RXpos{\RXindex}) \triangleq \norm{\UEposTest - \RXpos{\RXindex}} /c$, where $c = \carrierWl \carrierF$.
Substituting the observation model in the noise-free case yields the following ambiguity function:
\begingroup\small
\begin{align}
     & \pilotTermLL^{\mathrm{nf}}(\UEposTest) = \frac{1}{\pilotEnergy} \sum_{\RXindex=0}^{\numRX-1}  \abs{\channelcoeff[\RXindex]}^2 \nonumber \\
     & \times  \Bigg| \sum_{\Findex=0}^{\numF-1} \sum_{\Pindex=0}^{\numP-1} \abs{\pilotMat[\Findex,\Pindex]}^2  
     e^{-\jc 2 \pi \left[ \ToA(\UEposTest, \RXpos{\RXindex}) - \ToA(\UEpos, \RXpos{\RXindex}) \right] \Findex\Fspacing}  \Bigg| ^2. \label{eq:JMLa_NF_pilot}
\end{align}
\endgroup
For a given node $\RXindex$, $\pilotTermLL^{\mathrm{nf}}(\UEposTest)$ is maximized when $\ToA(\UEposTest, \RXpos{\RXindex}) = \ToA(\UEpos, \RXpos{\RXindex})$, i.e., when the \gls{toa}, or equivalently the range $\norm{\UEposTest - \RXpos{\RXindex}}$, is correctly estimated.
Combining information across nodes defines a \textit{set of range loci} whose intersection yields the position estimate.

Applying the same procedure to the data term yields
\begingroup\small
\begin{align}
    & \dataTermLL(\UEposTest) = \frac{1}{\channelConstructEnergy(\UEposTest)} \sum_{\Findex=0}^{\numF-1} \sum_{\Dindex=0}^{\numD-1}   \Bigg| \sum_{\RXindex=0}^{\numRX-1} \dataObs^{*}[\RXindex,\Findex,\Dindex] \nonumber \\ 
    & \hspace{3cm}  \times \channelcoeffEstPilots(\UEposTest)[\RXindex]  e^{-\jc 2 \pi \ToA(\UEposTest, \RXpos{\RXindex}) \Findex\Fspacing} \Bigg| ^2, \\
    & =  \frac{1}{\channelConstructEnergy(\UEposTest)} \sum_{\Findex=0}^{\numF-1} \sum_{\Dindex=0}^{\numD-1}  \sum_{\RXindex,\RXindex'=0}^{\numRX-1}   \dataObs^{*}[\RXindex,\Findex,\Dindex] \dataObs[\RXindex',\Findex,\Dindex] \nonumber \\
    & \hspace{1cm}  \times  \channelcoeffEstPilots(\UEposTest)[\RXindex] \channelcoeffEstPilots^{*}(\UEposTest)[\RXindex']  e^{-\jc 2 \pi \TDoA(\UEposTest,\RXpos{\RXindex},\RXpos{\RXindex'}) \Findex\Fspacing}, 
\end{align}
\endgroup
which reveals, upon expanding the squared modulus into a double sum, the \gls{tdoa} between nodes $\RXindex$ and $\RXindex'$: $\TDoA(\UEposTest,\RXpos{\RXindex},\RXpos{\RXindex'}) \triangleq \ToA(\UEposTest, \RXpos{\RXindex}) - \ToA(\UEposTest, \RXpos{\RXindex'})$.
Substituting the observation model in the noise-free case results in the following ambiguity function:
\begingroup\small
\begin{align}
    & \dataTermLL^{\mathrm{nf}}(\UEposTest) = \frac{1}{\channelConstructEnergy(\UEposTest)} \sum_{\Findex=0}^{\numF-1} \sum_{\Dindex=0}^{\numD-1} \abs{\dataMat[\Findex,\Dindex]}^2 \nonumber \\
    & \times \abs{\sum_{\RXindex=0}^{\numRX-1}  \channelcoeffEstPilots^{*}(\UEposTest)[\RXindex]\channelcoeff[\RXindex] e^{-\jc 2 \pi \left[ \ToA(\UEposTest, \RXpos{\RXindex}) - \ToA(\UEpos, \RXpos{\RXindex}) \right] \Findex\Fspacing}}^2 \label{eq:JMLa_NF_data} \\
    & = \frac{1}{\channelConstructEnergy(\UEposTest)} \sum_{\Findex=0}^{\numF-1} \sum_{\Dindex=0}^{\numD-1} \abs{\dataMat[\Findex,\Dindex]}^2  \sum_{\RXindex,\RXindex'=0}^{\numRX-1} \channelcoeffEstPilots^{*}(\UEposTest)[\RXindex] \channelcoeffEstPilots(\UEposTest)[\RXindex'] \nonumber  \\
    & \times  \channelcoeff[\RXindex] \channelcoeff^{*}[\RXindex'] e^{-\jc 2 \pi \left[ \TDoA(\UEposTest,\RXpos{\RXindex},\RXpos{\RXindex'}) - \TDoA(\UEpos,\RXpos{\RXindex},\RXpos{\RXindex'})  \right] \Findex \Fspacing} ,
\end{align}
\endgroup
which is maximized when $\TDoA(\UEposTest,\RXpos{\RXindex},\RXpos{\RXindex'}) = \TDoA(\UEpos,\RXpos{\RXindex},\RXpos{\RXindex'})$, i.e., when the \gls{tdoa} is correctly estimated.

While the pilot and data terms share a similar structure, they differ in the \gls{np} being eliminated and consequently in the coherence structure. 
Notably, the pilot term involves perfect knowledge of the pilot symbols, yielding $\abs{\pilotMat[\Findex,\Pindex]}^2$ in \eqref{eq:JMLa_NF_pilot}, whereas the data term in \eqref{eq:JMLa_NF_data} involves the product $\channelcoeffEstPilots^{*}(\UEposTest)[\RXindex]\channelcoeff[\RXindex] \approx \abs{\channelcoeff[\RXindex]}^2$, which holds when the candidate position $\UEposTest$ is close to the true position $\UEpos$ and the \gls{snr} is sufficiently high.
Consequently, the \gls{tdoa} interpretation arises from the fact that coherence is maintained across the spatial dimension: $\dataTermLL(\UEposTest)$ accumulates \gls{tdoa} contributions over time-frequency resource elements, while $\pilotTermLL(\UEposTest)$ accumulates \gls{toa} contributions across nodes.
Therefore, the data term defines a \textit{set of hyperbolic loci} \cite{torrieri_statistical_1984} whose intersection yields the position estimate.
Importantly, despite this hybrid \gls{toa}-\gls{tdoa} interpretation, the proposed estimators remain single-step, as these quantities are implicitly captured by the objective function rather than being pre-estimated. 
These observations are illustrated in \autoref{fig:LL_loci}, which displays the objective function of the pilot term and the data term for our methods.
The top left panel shows $\pilotTermLL(\UEposTest)$, where the \gls{toa} loci intersect at the true \gls{ue} position.
The bottom row shows the data term of $\JMLc$\footnotemark, $\JMLa$, and $\JMLfast$, which exhibit hyperbolic loci intersecting at $\UEpos$, confirming the hybrid \gls{toa}-\gls{tdoa} interpretation.
While $\JMLfast$ closely matches $\JMLa$ at this \gls{snr}, a notable discrepancy with $\JMLc$ is observed, which can be attributed to the \gls{np} separation approximation \eqref{eq:approximation} introduced to ensure tractability, whereas $\JMLc$ handles both \glspl{np} jointly and optimally.
Finally, the data term of $\MMLfast$ exhibits a similar structure to that of the pilot term, where the broadening of the circular loci reflects the data uncertainty introduced by marginalizing over the constellation prior.

\footnotetext{The $\JMLc$ estimator does not explicitly incorporate a separate pilot term, as it is already embedded in the matrix $\ARQnumChange(\UEposTest)$. Furthermore, under typical parameter settings where $\numD \gg \numP$, the data contribution dominates the total objective function of all methods, hence the predominantly \gls{tdoa}-based behavior of $\JMLc$.}

\begin{figure}[ht]
    \includegraphics[width=\linewidth]{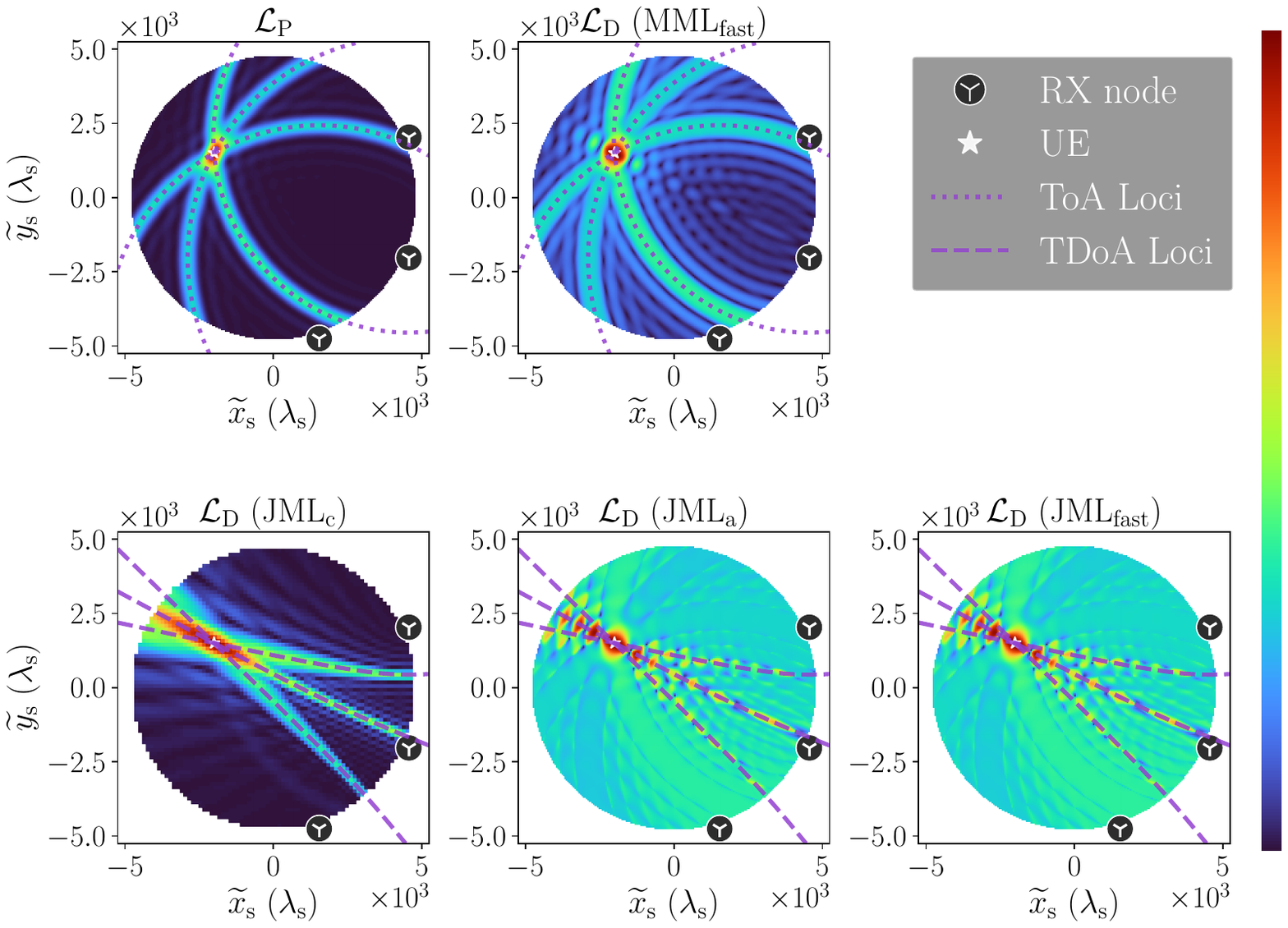}
    \caption{Illustration of the objective function terms, from left to right and top to bottom: pilot term, data term of $\MMLfast$, $\JMLc$, $\JMLa$, and $\JMLfast$. Parameters: $\numRX = 3$, $\Fspacing = \SI{90}{\kilo\hertz}$, $\SRXaperture = 0.8\pi~\si{\radian}$, $\snr = \SI{30}{\decibel}$, $\Ngrid = 200^2$; all other parameters are set to their default values. \gls{toa} loci are shown for each node $\RXindex$, corresponding to positions $\UEposTest$ satisfying $\norm{\UEposTest - \RXpos{\RXindex}} = \norm{\UEpos - \RXpos{\RXindex}}$. \gls{tdoa} loci are shown for each node pair $(\RXindex, \RXindex')$, corresponding to positions $\UEposTest$ satisfying $\norm{\UEposTest - \RXpos{\RXindex}} - \norm{\UEposTest - \RXpos{\RXindex'}} = \norm{\UEpos - \RXpos{\RXindex}} - \norm{\UEpos - \RXpos{\RXindex'}}$.}
    \label{fig:LL_loci}
\end{figure}

\section{Complexity Analysis}\label{sec:Complexity}

This section analyzes the computational complexity of the proposed method against the considered baselines, first in terms of asymptotic operation counts, and then through runtime measurements, illustrating the influence of system parameters on computational requirements.

\subsection{Theoretical Analysis}\label{sec:Complexity_th}
\autoref{tab:complexity} reports the asymptotic complexity of each processing step for an $\constSize$-ary constellation, i.e., $\forall \Findex,\Dindex \, \#\constSet_{\constMap(\Findex,\Dindex)} = \constSize$.
The localization cost includes only the objective function maximization over a grid of $\Ngrid$ candidate positions\footnotemark.
\footnotetext{After grid evaluation, estimates are refined via an optimization algorithm requiring additional objective function evaluations. Since the number of such evaluations depends on the solver strategy rather than on the proposed 
estimators, it is excluded from the complexity analysis. 
In practice, a sufficiently fine search grid yields adequate localization accuracy without refinement.}
The first four columns correspond to the baselines, while the remaining columns correspond to the proposed methods.
Note that the complexity of $\PDmethod$ follows directly from that of $\Pmethod$ by substituting $\numP$ with $\numP + \numD$.

The \textit{almost-optimal} estimator $\JMLc$ \eqref{eq:JMLc_estimator} incurs an asymptotic complexity of $\Compl(\Ngrid \numF^2\numD^2 \numRX) + \Compl(\Ngrid\numF^3\numD^3)$. 
The first term accounts for computing the matrix $\ARQnumChange(\UEposTest)$ at all candidate positions, while the 
second arises from computing its largest eigenvalue, assuming worst-case full eigendecomposition\footnotemark.
As previously mentioned, the resulting computational complexity is intractable, except for very small \glspl{tfb} (see \autoref{fig:Time_vs_D_JML_comparison}).
The \textit{approximate} estimator $\JMLa$ \eqref{eq:JMLa_estimator} incurs a complexity of $\Compl(\numRX\numF\numP)$ for constructing $\pilotObsEq$, and a localization cost of $\Compl(\Ngrid\numRX\numF)$ for the pilot term and $\Compl(\Ngrid\numRX\numF\numD)$ for the data term.
The \textit{low-rank acceleration} $\JMLfast$ \eqref{eq:JMLfast_estimator} leaves the pilot term unchanged, while reducing the data term localization cost to $\Compl(\Ngrid\numRX\numF)$, which merges asymptotically with the pilot term (hence with a factor of two in practice).
However, it introduces an additional cost of $\Compl(\numF \max(\numRX,\numD) \min(\numRX,\numD)^2)$ 
\cite{li_tutorial_2019} for computing the first \gls{svd} component of $\dataObsMat{\Findex}$ across all $\Findex$, which reduces to $\Compl(\numRX^2\numF\numD)$ in the typical case $\numD \geq \numRX$.
\newcounter{sharedfn}
\setcounter{sharedfn}{\value{footnote}}
Note that this again assumes a worst-case full \gls{svd} decomposition\footnotemark[\value{sharedfn}].
As shown in Section~\ref{sec:Complexity_meas}, this cost is considerably lower than that of the localization step, since $\numRX \numD \ll \Ngrid$ in practice.

\footnotetext[\value{sharedfn}]{More efficient algorithms exist \cite{sobczyk_deterministic_2025} but are outside the scope of this work. 
}

It can be directly concluded that the proposed estimators $\JMLa$ and $\JMLfast$ are always faster than the $\MMLfast$ baseline.
Compared to the \gls{dd} baselines, $\JMLa$ introduces an additional localization cost of $\Compl(\Ngrid\numF\numD\numRX)$, while the \gls{dd} baselines incur a demodulation cost of $\Compl(\constSize \numF\numD)$ and $\Compl(\constSize \numRX \numF\numD)$ for centralized and distributed demodulation, respectively.
Therefore, $\JMLa$ is faster than $\HDDcentrmethod$ when $\Ngrid\numRX \ll \constSize$, and faster than $\HDDdistrmethod$ when $\Ngrid \ll \constSize$.
The former condition is seldom met in practice, while the latter is satisfied for large constellations and restricted grid sizes.
Since the \gls{svd} computation cost is negligible, as discussed above, $\JMLfast$ is expected to be faster than both \gls{dd} baselines, while achieving similar localization performance to its non-accelerated counterpart over a large \gls{snr} range, as demonstrated in Section~\ref{sec:results}.
However, the constant factor of two omitted in the asymptotic cost of the localization step results in execution times comparable to those of the \gls{dd} baselines in practice, while still yielding superior localization performance.
These conclusions are validated by runtime measurements in the following section.

\begin{table*}[ht]
    \centering
    \caption{Asymptotic Computational Complexity per Processing Step}
    \label{tab:complexity}
    \resizebox{\textwidth}{!}{
    \begingroup 
    \setlength{\tabcolsep}{4pt} 
    \renewcommand{\arraystretch}{1.8}
    \begin{tabular}{|l||c|c|c|c|c|c|c|c|}
        \hline
        Step &  \textcolor{Pcolor}{\textbf{\Pmethod}} & \textcolor{DDcentrcolor}{$\bm{\HDDcentrmethod}$} & \textcolor{DDdistrcolor}{$\bm{\HDDdistrmethod}$} & \textcolor{MMLfastcolor}{$\bm{\MMLfast}$} \cite{reniers_localization_2026} & \textcolor{JMLccolor}{$\bm{\JMLc}$} \eqref{eq:JMLc_estimator} & \textcolor{JMLacolor}{$\bm{\JMLa}$} \eqref{eq:JMLa_estimator} & \textcolor{JMLfastcolor}{$\bm{\JMLfast}$} \eqref{eq:JMLfast_estimator} \\
        \hline \hline
        \makecell[l]{Channel estimation}    & — & $\Compl(\numRX\numF\numP)$ & $\Compl(\numRX\numF\numP)$ & — & — & — & — \\ 
        \makecell[l]{Soft data estimation} & — & $\Compl(\numRX\numF\numD)$ & $\Compl(\numRX\numF\numD)$ & — & — & — & — \\ 
        \makecell[l]{Hard data decision}  & — & $\Compl(\constSize\numF\numD)$ & $\Compl(\numRX\constSize\numF\numD)$ & — & — & — & — \\
        \hline
        \makecell[l]{Symbol equalization \\ ($\pilotObsEq$, $\ObsEq$, or $\ObsEqDD$ construction)} & $\Compl(\numRX\numF\numP)$ & $\Compl(\numRX\numF(\numP+\numD))$  & $\Compl(\numRX\numF(\numP+\numD))$ & $\Compl(\numRX\numF\numP)$ & — & $\Compl(\numRX\numF\numP)$ & $\Compl(\numRX\numF\numP)$  \\
        \hline
        \makecell[l]{Computation of first \gls{svd} \\ component of $\dataObsMat{\Findex}$ for all $\Findex$}  & —  & —  & — & —  & —  & — & $\Compl(\numRX^2 \numF \numD)$ \\
        \hline
        \makecell[l]{\raisebox{0pt}[4.2ex][3.2ex]{Localization}}  & $\Compl(\Ngrid \numRX\numF)$ & $\Compl(\Ngrid \numRX\numF)$  & $\Compl(\Ngrid \numRX\numF)$ & \makecell[c]{$\Compl(\Ngrid \numRX\numF)$ \\ $+ \Compl(\Ngrid \numF\numD (\numRX + \sqrt{\constSize}))$} & \makecell[c]{$\Compl(\Ngrid \numF^2\numD^2 \numRX)$ \\ $+ \Compl(\Ngrid\numF^3\numD^3)$} & \makecell[c]{$\Compl(\Ngrid \numRX\numF)$ \\ $+ \Compl(\Ngrid \numF\numD \numRX)$} & $\Compl(\Ngrid\numRX\numF)$  \\
        \hline
    \end{tabular}
    \endgroup
    }
\end{table*}

\vspace*{-0.35cm}
\subsection{Runtime Measurements}\label{sec:Complexity_meas}
\vspace*{-0.1cm}
While the asymptotic complexity analysis provides useful insight into how system parameters influence computational requirements, it should be interpreted as a trend indicator only.
To further characterize the computational cost of the considered methods and validate that the trends reported in Section~\ref{sec:Complexity_th} translate into effective computational gains in practice, average runtime measurements are reported for different parameter settings.
For each method, the reported time includes all processing steps, including grid search plus refinement for localization cost.

\subsubsection{Gain of the Accelerations}

\autoref{fig:Time_vs_D_JML_comparison} depicts the average computation time per position estimate for $\JMLc$, $\JMLa$, $\JMLfast$, and the pilot-only and $\PDmethod$ baselines, with $\pm$ one standard deviation shown as the shaded area.
Note that a smaller \gls{tfb} than the default is used to prevent the runtime of $\JMLc$ from becoming prohibitive.
The approximation $\JMLa$ yields orders-of-magnitude reductions in computational time over $\JMLc$ as $\numD$ increases, e.g., from $\SI{70}{\second}$ to $\SI{0.2}{\second}$ (a factor of \underline{$350$}) for $\numD=36$.
Furthermore, while the localization cost of $\JMLa$ still increases with $\numD$, that of $\JMLfast$ remains independent of $\numD$, further reducing the computational cost as anticipated in \autoref{tab:complexity}.
However, for this small \gls{tfb}, the gain is limited (e.g., $\SI{0.15}{\second}$ for $\JMLfast$ with $\numD=36$) and becomes more significant under the default parameter settings, as shown in the following.
Finally, consistent with the analysis in Section~\ref{sec:Complexity_th}, the cost of computing the required \gls{svd} component for $\JMLfast$ is negligible relative to the localization cost.
Additionally, the symbol equalization cost is lower than $\PDmethod$ for the proposed methods, though it remains negligible compared to the other processing steps.

\begin{figure}[!ht]
    \includegraphics[width=1.0\linewidth]{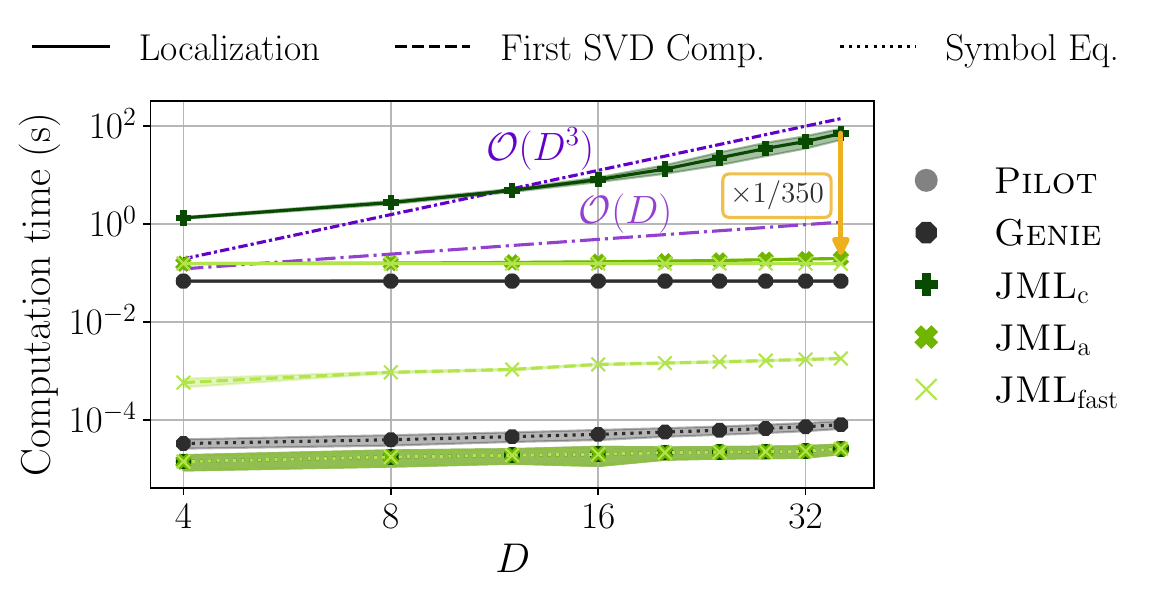}
    \caption{Mean computation time, with $\pm$ one standard deviation, as a function of $\numD$.  Parameters: $\numF = 40$, $\Fspacing = \SI{180}{\kilo\hertz}$ (yielding the same bandwith as in \autoref{tab:parameters}); all other parameters are left at their default values.}
    \label{fig:Time_vs_D_JML_comparison}
\end{figure}
\begin{figure}[ht]
    \includegraphics[width=1.0\linewidth]{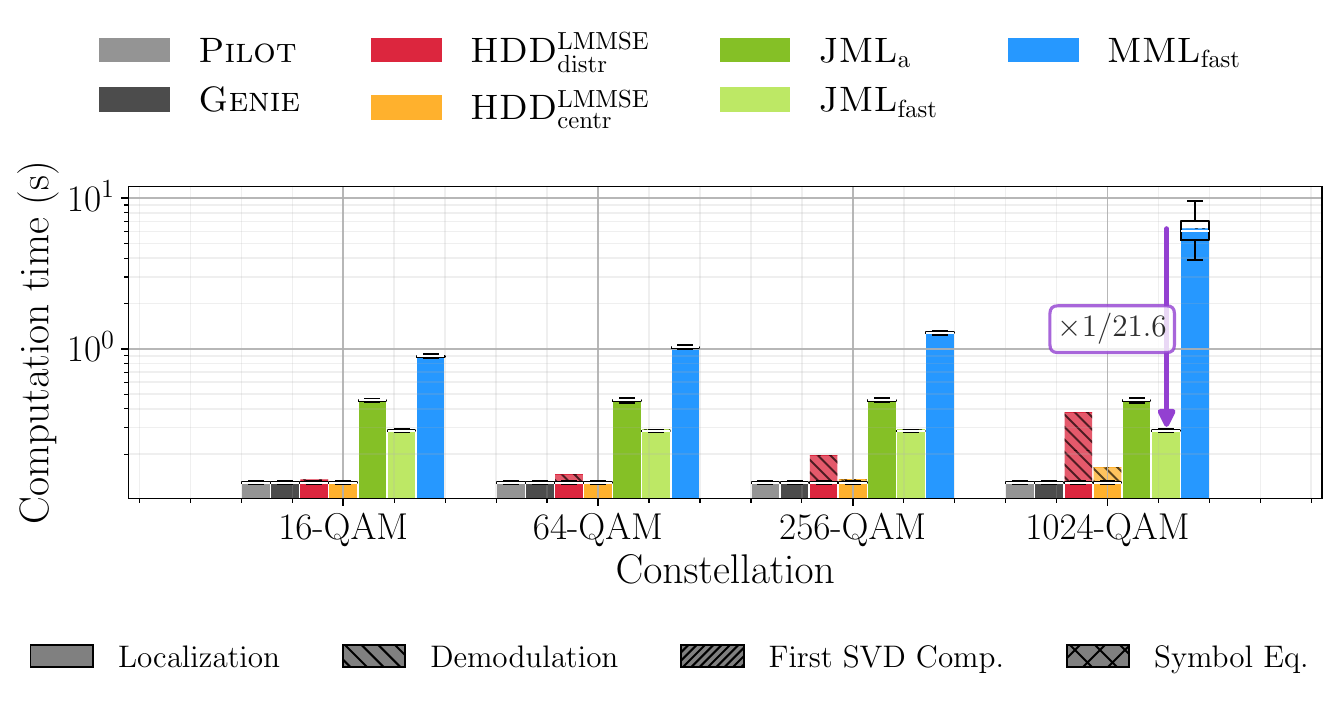}
    \caption{Mean computation time of the proposed methods and the considered baselines for different constellation sizes. Bars represent the mean localization time, with interquartile range (Q1--Q3) indicated by the box and the median shown as a white line. The hatched regions denote the cost of other processing steps, whose variability is not shown as it is negligible.}
    \label{fig:Time_vs_ConstSize}
\end{figure}

\subsubsection{Comparison with Baselines}

\autoref{fig:Time_vs_ConstSize} shows the mean computation time per position estimate for the proposed methods and the considered baselines, recorded during the simulations of \autoref{fig:RMSE_SNR_constellations}, with an additional $64$-\gls{qam} case included for completeness.
The following observations can be drawn:
\begin{itemize}
    \item The computational cost of the \gls{dd} baselines and $\MMLfast$ increases with constellation size, which is especially pronounced for the distributed \gls{dd} baseline and $\MMLfast$, as predicted by the asymptotic analysis.
    \item The cost of the proposed $\JMLa$ and $\JMLfast$ is completely independent of the constellation size, with computational requirements of similar order to those of the \gls{dd} baselines, while $\MMLfast$ is one order of magnitude slower. 
    
    For instance, for $1024$-\gls{qam}, $\MMLfast$ requires \SI{6.26}{\second}, $\JMLfast$ requires \SI{0.29}{\second} (a factor of \underline{$21.6$} faster than $\MMLfast$), $\HDDdistrmethod$ requires \SI{0.38}{\second}, and $\HDDcentrmethod$ requires \SI{0.16}{\second}, for the considered parameter settings.
\end{itemize}

These results validate the asymptotic analysis of Section~\ref{sec:Complexity_th}, and confirm that the proposed \gls{jml} estimators achieve substantially lower computational requirements than the \gls{mml} approach, while remaining comparable in complexity to the fastest \gls{dd} baselines.

\section{Conclusion}\label{sec:conclusion}

This paper investigates the joint exploitation of known pilot sequences and unknown data payloads for opportunistic source localization, in which a distributed \gls{srx} localizes the transmitting \gls{ue} by leveraging its communication signal as a signal of opportunity without interfering with the communication task. 
The proposed framework also extends to related configurations, including multistatic sensing and distributed transmitter scenarios.
A \gls{jml} framework is developed in which both the random channel coefficients and the unknown data symbols are treated as \glspl{np}. 
The optimal solution is derived in closed form and shown to be computationally intractable for typical \glspl{tfb}. 
Two tractable approximations are then proposed based on a theoretical analysis of the optimal solution, offering different performance-complexity tradeoffs.

\noindent The resulting estimators are \textit{constellation-agnostic}: regardless of the modulation scheme employed, both localization performance and computational requirements remain unchanged—a key advantage over \gls{dd} and marginal approaches.
Through \gls{mc} simulations, the proposed estimators demonstrate superior localization performance over existing baselines. Compared to pilot-only approaches, a \gls{rmse} reduction of up to a factor of $5.7$ is achieved. 
Compared to \gls{dd} baselines, an \gls{snr} gain of up to $\SI{4.8}{\decibel}$ is demonstrated at comparable computational complexity. 
Against the \gls{mml} approach of \cite{reniers_localization_2026}, similar or superior localization performance is achieved across most system parameter configurations, while reducing computational requirements by up to a factor of $21.6$ for large constellations.
Furthermore, unlike \gls{mml}, the proposed \gls{jml} estimators require no prior knowledge of the \gls{np} distributions.
Finally, a theoretical analysis of the proposed objective functions reveals that the \gls{jml} estimators operate as single-step hybrid \gls{toa}-\gls{tdoa} methods—combining range circle loci from the pilot term and hyperbolic loci from the data term.
Future work may consider the integration of Doppler estimation exploiting the time dimension, the impact of imperfect synchronization and multipath propagation on performance, the extension to multi-target multistatic sensing, and experimental validation in real-world environments.

\appendix

\subsection{Proof of \autoref{prop:JMLc1}}\label{app:JMLc_proof1}

\begin{proof}
Substituting $\channelcoeffEst(\UEposTest, \symbolMatTest)$ from \eqref{eq:channel_implicit_estimation} into $\channelcoeffTest$ in \eqref{eq:JMLc_LS}, and expanding the squares leads to
\begingroup\small
\begin{align}
     \UEposEst^{\JMLc} = & \argmin_{\UEposTest} \min_{\dataMatTest \in \C^{\numF\times\numD}}  \textcolor{gray}{\sum_{\RXindex,\Findex,\PDindex} \abs{\Obs[\RXindex,\Findex,\PDindex]}^2} \nonumber \\
    & - 2 \Re \bigg\{ \sum_{\RXindex} \sum_{\Findex,\PDindex}  \Obs^{*}[\RXindex,\Findex,\PDindex] \stMat(\UEposTest)[\RXindex,\Findex] \symbolMatTest[\Findex,\PDindex]  \nonumber \\
    & \frac{1}{\rgEnergy(\symbolMatTest)}\sum_{\Findex',\PDindex'} \Obs[\RXindex,\Findex',\PDindex']  \stMat^{*}(\UEposTest)[\RXindex,\Findex'] \symbolMatTest^{*}[\Findex', \PDindex']  \bigg\}   \nonumber \\
    & + \sum_{\RXindex} \textcolor{gray}{\sum_{\Findex,\PDindex} \bigg|\stMat(\UEposTest)[\RXindex,\Findex] \symbolMatTest[\Findex,\PDindex]\bigg|^{2}} \frac{1}{\rgEnergy^{\textcolor{gray}{2}}(\symbolMatTest)}  \nonumber \\
    & \bigg|\sum_{\Findex',\PDindex'} \stMat^{*}(\UEposTest)[\RXindex,\Findex'] \symbolMatTest^{*}[\Findex', \PDindex'] \Obs[\RXindex,\Findex',\PDindex']\bigg|^{2},
\end{align}
\endgroup
where the \textcolor{gray}{gray} terms are either independent of $\UEposTest$ and $\dataMatTest$, or cancel each other.
Simplifying yields
\begingroup\small
\begin{align}
     \UEposEst^{\JMLc} = & \argmin_{\UEposTest} \min_{\dataMatTest \in \C^{\numF\times\numD}}\nonumber \\
    & -  \frac{2}{\rgEnergy(\symbolMatTest)} \sum_{\RXindex} \Re \bigg\{ \bigg| \sum_{\Findex,\PDindex}  \Obs^{*}[\RXindex,\Findex,\PDindex] \stMat(\UEposTest)[\RXindex,\Findex] \symbolMatTest[\Findex,\PDindex] \bigg|^{2} \bigg\}  \nonumber \\
    & + \frac{1}{\rgEnergy(\symbolMatTest)}  \sum_{\RXindex} \bigg|\sum_{\Findex,\PDindex} \stMat^{*}(\UEposTest)[\RXindex,\Findex] \symbolMatTest^{*}[\Findex, \PDindex] \Obs[\RXindex,\Findex,\PDindex]\bigg|^{2}.
\end{align}
\endgroup
Equivalently maximizing the negated objective, and applying the identities $\abs{z} = \abs{z^*}$ and $\Re\{\abs{z}^2\} = \abs{z}^2$ for all $z \in \C$, completes the proof.
\end{proof}

\subsection{Proof of \autoref{prop:JMLc2}}\label{app:JMLc_proof2}

\begin{proof}
Expanding the pilot and data components in \eqref{eq:JMLc_LS_data_OF} by separating the sum over ($\PDindex$) into pilot ($\Pindex$) and data ($\Dindex$) contributions, and substituting the quantities introduced in \autoref{tab:notations_JMLc} yields
\begin{align}
     \OF(\UEposTest, \dataVecTest)  & = \frac{\sum_{\RXindex} \abs{\channelcoeffEstPilotsUn{\RXindex}(\UEposTest) + \dataVecEstcc{\RXindex}^{\Herm}(\UEposTest) \dataVecTest }^2 }{\pilotEnergy + \dataVecTest^{\Herm}\dataVecTest} \\
    &  = \frac{\norm{\channelcoeffEstPilotsU(\UEposTest) + \dataMatEstccConcat^{\Herm}(\UEposTest)\dataVecTest}^2}{\pilotEnergy + \dataVecTest^{\Herm} \dataVecTest}.
\end{align}
Developing the squared norm reveals the aggregate Gram matrix $\dataMatEstccCov(\UEposTest)$ and the unnormalized soft data estimates with channel precompensated by pilot observations (see \autoref{tab:notations_JMLc}), $\dataVecEstccP(\UEposTest)$:
\begin{align}
    & \OF(\UEposTest, \dataVecTest) = (\pilotEnergy + \dataVecTest^{\Herm}\dataVecTest)^{-1}  \times \Big(\channelEnergy(\UEposTest)  \nonumber \\
    & \hspace{0.5cm} + 2 \Re \{ \channelcoeffEstPilotsU^{\Herm} \dataMatEstccConcat^{\Herm}(\UEposTest)\dataVecTest \}  + \dataVecTest^{\Herm} \dataMatEstccConcat(\UEposTest) \dataMatEstccConcat^{\Herm}(\UEposTest) \dataVecTest \Big) \\
    & = \frac{\channelEnergy(\UEposTest) + \dataVecEstccP^{\Herm}(\UEposTest) \dataVecTest + \dataVecTest^{\Herm} \dataVecEstccP(\UEposTest)  + \dataVecTest^{\Herm} \dataMatEstccCov(\UEposTest) \dataVecTest }{\pilotEnergy + \dataVecTest^{\Herm}\dataVecTest}. \label{eq:OF_dev1}
\end{align}
Omitting the dependency on $\UEposTest$ for brevity, let us define the numerator of \eqref{eq:OF_dev1} as $\OFnum(\dataVecTest)$ and the denominator as $\OFdenom(\dataVecTest)$, such that $\OF(\dataVecTest) \triangleq \OFnum(\dataVecTest) / \OFdenom(\dataVecTest)$.
The stationarity condition is
\begin{equation}
    \frac{\partial \OF(\dataVecTest)}{\partial\dataVecTest^{*}} = \frac{\frac{\partial \OFnum(\dataVecTest)}{\partial\dataVecTest^{*}}\OFdenom(\dataVecTest) -\OFnum(\dataVecTest) \frac{\partial \OFdenom(\dataVecTest)}{\partial\dataVecTest^{*}}}{\OFdenom^2(\dataVecTest)} = 0.
\end{equation}
Note that the other Wirtinger derivative $\frac{\partial \OF(\dataVecTest)}{\partial\dataVecTest}=0$ leads to the same condition by conjugate symmetry.
The individual derivatives are given by
\begin{equation}
    \frac{\partial \OFnum(\dataVecTest)}{\partial\dataVecTest^{*}} = \dataMatEstccCov \dataVecTest + \dataVecEstccP \quad \text{and} \quad 
    \frac{\partial \OFdenom(\dataVecTest)}{\partial\dataVecTest^{*}} = \dataVecTest.
\end{equation}
Finally, substituting into the stationarity condition yields
\begin{equation}
    \left( \dataMatEstccCov \dataVecTest + \dataVecEstccP \right)\OFdenom(\dataVecTest) = \OFnum(\dataVecTest) \dataVecTest,
\end{equation}
and dividing both sides by $\OFdenom(\dataVecTest)$, which is strictly positive, completes the proof.
\end{proof}

\subsection{Proof of \autoref{prop:JMLa}}\label{app:JMLa_proof}

\begin{proof}
The pilot term follows directly from \eqref{eq:JMLc_LS_data} by restricting to the pilot sequence, omitting the maximization over $\dataMatTest$, and substituting \eqref{eq:pilotObsEq}.
For the data term, the derivation proceeds similarly as in \hyperref[app:JMLc_proof1]{Appendix~\ref*{app:JMLc_proof1}}. 
Starting from the \gls{ls} formulation \eqref{eq:JMLc_LS} and substituting $\channelcoeffEstPilots$ for $\channelcoeffTest$, minimizing over $\dataMatTest$:
\begingroup \small
\begin{align}
    \sum_{\RXindex=0}^{\numRX-1} \sum_{\Findex=0}^{\numF-1} \sum_{\Dindex=0}^{\numD-1} \abs{\dataObs[\RXindex,\Findex,\Dindex] - \channelcoeffEstPilots[\RXindex] \stMat(\UEposTest)[\RXindex,\Findex] \dataMatTest[\Findex,\Dindex] }^2
\end{align}
\endgroup
is separable across $\Findex$ and $\Dindex$. It then yields the following implicit estimates via Wirtinger calculus:
\begin{equation}
    \dataMatEst[\Findex,\Dindex](\UEposTest) = \frac{\sum_{\RXindex=0}^{\numRX-1} \channelcoeffEstPilots^{*}[\RXindex] \stMat^{*}(\UEposTest)[\RXindex,\Findex] \dataObs[\RXindex,\Findex,\Dindex] }{\sum_{\RXindex=0}^{\numRX-1} \abs{\channelcoeffEstPilots[\RXindex] \stMat(\UEposTest)[\RXindex,\Findex]}^2},
\end{equation}
where the denominator simplifies to $\channelConstructEnergy(\UEposTest) \triangleq \norm{\channelcoeffEstPilots}^2$, since $\abs{\stMat(\UEposTest)[\RXindex,\Findex]} = 1 \, \forall \UEposTest$.
The remainder of the derivation follows by analogy with \hyperref[app:JMLc_proof1]{Appendix~\ref*{app:JMLc_proof1}}, with the following correspondence: the role of the channel $\channelcoeffTest$ and data $\dataMatTest$ are interchanged, and the substitutions summarized in \autoref{tab:proof_substitutions} apply.
\begin{table}[ht]
    \centering
    \caption{Variable correspondence with \hyperref[app:JMLc_proof1]{Appendix~\ref*{app:JMLc_proof1}}.}
    \label{tab:proof_substitutions}
    \begin{tabular}{|cc|}
        \hline
        \hyperref[app:JMLc_proof1]{Appendix~\ref*{app:JMLc_proof1}} & Present proof \\
        \hline
        $\rgEnergy(\symbolMatTest)$ & $\channelConstructEnergy(\UEposTest)$ \\
        $\symbolMatTest$ & $\channelcoeffEstPilots(\UEposTest)$ \\
        $\Obs$ & $\dataObs$ \\
        Summation over $\Findex, \PDindex$ ($\Findex', \PDindex'$) & Summation over $\RXindex$ ($\RXindex'$) \\
        Summation over $\RXindex$ & Summation over $\Findex, \Dindex$ \\
        \hline
    \end{tabular}
\end{table}
\end{proof}

\subsection{Proof of \autoref{prop:JMLfast}}\label{app:JMLfast_proof}
\begin{proof}
For notational convenience, let us introduce $\dataObsMat{\Findex} \triangleq \mbox{$\dataObs[:,\Findex,:]$} \in \C^{\numRX\times\numD}$ and $\channelConstructVec{\Findex}(\UEposTest) \triangleq \channelConstruct[:,\Findex](\UEposTest) \in \C^{\numRX\times1}$.
The data component of the objective function for subcarrier $\Findex$ then becomes
\begin{align}
    \ell_{\mathrm{D},\Findex}(\UEposTest) & \triangleq  \sum_{\Dindex=0}^{\numD-1} \Bigg| \sum_{\RXindex=0}^{\numRX-1} \dataObs^{*}[\RXindex,\Findex,\Dindex] \channelConstruct(\UEposTest)[\RXindex,\Findex] \Bigg|^2 \\
    & = \norm{\dataObsMat{\Findex}^{\Herm} \channelConstructVec{\Findex}(\UEposTest)}^2  = \left(\dataObsMat{\Findex}^{\Herm} \channelConstructVec{\Findex}(\UEposTest)\right)^{\Herm}  \left(\dataObsMat{\Findex}^{\Herm} \channelConstructVec{\Findex}(\UEposTest)\right) \\
    & = \channelConstructVec{\Findex}^{\Herm}(\UEposTest) \dataObsMat{\Findex} \dataObsMat{\Findex}^{\Herm} \channelConstructVec{\Findex}(\UEposTest) = \channelConstructVec{\Findex}^{\Herm}(\UEposTest) \dataObsMatCov{\Findex} \channelConstructVec{\Findex}(\UEposTest),
\end{align}
where $\dataObsMatCov{\Findex} \triangleq \dataObsMat{\Findex} \dataObsMat{\Findex}^{\Herm} \in \C^{\numRX \times \numRX}$ is the sample correlation matrix of the observations associated with subcarrier $\Findex$, which compresses the $\numD$ dimension.

We now define a matrix $\dataObsMatCov{\Findex}^{0.5} \in \C^{\numRX \times \numA}$ such that $\dataObsMatCov{\Findex} = \dataObsMatCov{\Findex}^{0.5} (\dataObsMatCov{\Findex}^{0.5})^{\Herm}$, where $\numA$ is an arbitrary dimension.
We then have
\begin{align}
     \ell_{\mathrm{D},\Findex}(\UEposTest) & = \channelConstructVec{\Findex}^{\Herm}(\UEposTest) \dataObsMatCov{\Findex}^{0.5} (\dataObsMatCov{\Findex}^{0.5})^{\Herm} \channelConstructVec{\Findex}(\UEposTest) \\
     & = \norm{(\dataObsMatCov{\Findex}^{0.5})^{\Herm} \channelConstructVec{\Findex}(\UEposTest)}^{2}.
\end{align}
Note that $\dataObsMatCov{\Findex}^{0.5} = \dataObsMat{\Findex}$ is a trivial choice.
Here we present a more efficient one.
Let $\dataObsMat{\Findex} = \SVDMatLeft{\Findex} \SVDMatVals{\Findex} \SVDMatRight{\Findex}^{\Herm}$ be the \gls{svd} of $\dataObsMat{\Findex}$, where $\SVDMatLeft{\Findex} \in \C^{\numRX\times\numRX}$ (resp. $\SVDMatRight{\Findex} \in \C^{\numD\times\numD}$) is a unitary matrix whose columns are the left (resp. right) singular vectors of $\dataObsMat{\Findex}$, and $\SVDMatVals{\Findex} \in \C^{\numRX\times\numD}$ is a rectangular matrix whose diagonal entries are the singular values of $\dataObsMat{\Findex}$ and all off-diagonal entries are zero.
We now show that $\dataObsMatCov{\Findex}^{0.5} = \SVDMatLeft{\Findex} \SVDMatVals{\Findex}$ is a valid choice.
Substituting the \gls{svd} of $\dataObsMat{\Findex}$ into the definition of the correlation matrix yields
\begin{align}
    \dataObsMatCov{\Findex} & = \dataObsMat{\Findex} \dataObsMat{\Findex}^{\Herm} = \left(  \SVDMatLeft{\Findex} \SVDMatVals{\Findex} \SVDMatRight{\Findex}^{\Herm} \right) \left(  \SVDMatLeft{\Findex} \SVDMatVals{\Findex} \SVDMatRight{\Findex}^{\Herm} \right)^{\Herm} \\
    &  = \SVDMatLeft{\Findex} \SVDMatVals{\Findex} \SVDMatRight{\Findex}^{\Herm}  \SVDMatRight{\Findex} \SVDMatVals{\Findex}^{\Herm} \SVDMatLeft{\Findex}^{\Herm} = \SVDMatLeft{\Findex} \SVDMatVals{\Findex}  \SVDMatVals{\Findex}^{\Herm} \SVDMatLeft{\Findex}^{\Herm} \label{eq:JMLfast_svd_evd} \\
    & = \left( \SVDMatLeft{\Findex} \SVDMatVals{\Findex} \right) \left( \SVDMatLeft{\Findex} \SVDMatVals{\Findex} \right)^{\Herm}, \label{eq:JMLfast_unitary}
\end{align}
where \eqref{eq:JMLfast_unitary} exploits the unitary property of $\SVDMatRight{\Findex}$ (i.e., $\SVDMatRight{\Findex}^{\Herm}\SVDMatRight{\Findex} = \I{\numD}$).
Note that \eqref{eq:JMLfast_svd_evd} also reveals the well-known relationship between the \gls{svd} of the signal $\dataObsMat{\Findex}$ and the \gls{evd} of its correlation matrix:
$\dataObsMatCov{\Findex} = \boldsymbol{V}_{\Findex} \boldsymbol{\Lambda}_{\Findex} \boldsymbol{V}_{\Findex}^{\Herm}$, with $\boldsymbol{V}_{\Findex} = \SVDMatLeft{\Findex}$ and $\boldsymbol{\Lambda}_{\Findex} = \SVDMatVals{\Findex}  \SVDMatVals{\Findex}^{\Herm}$.
By identification, $\dataObsMatCov{\Findex}^{0.5} = \SVDMatLeft{\Findex} \SVDMatVals{\Findex}$ is a valid choice, yielding
\begin{align}
     \ell_{\mathrm{D},\Findex}(\UEposTest) &= \norm{\SVDMatVals{\Findex}^{\Herm} \SVDMatLeft{\Findex}^{\Herm} \channelConstructVec{\Findex}(\UEposTest)}^{2} \\
     & = \sum_{\SVindex=0}^{\numSV-1} \abs{\SVDValVec{\Findex}[\SVindex]}^2 \abs{\sum_{\RXindex=0}^{\numRX-1} \SVDMatLeft{\Findex}^{\Herm}[\SVindex,\RXindex] \channelConstructVec{\Findex}[\RXindex]}^2,
\end{align}
where $\SVDValVec{\Findex} \triangleq \begin{bmatrix} \SVDMatVals{\Findex}[0,0] & \cdots & \SVDMatVals{\Findex}[\numSV-1,\numSV-1] \end{bmatrix}^{\Trans} \in \R^{\numSV\times1}_{+}$ and $\numSV = \min(\numRX,\numD)$ denotes the number of singular values.
\end{proof}

{
\begingroup
\renewcommand{\baselinestretch}{0.98}\selectfont
\bibliographystyle{IEEEtran}
\bibliography{nourl,references}
\endgroup
}

\end{document}